\DeclareSymbolFont{tipa}{T3}{cmr}{m}{n}
\DeclareMathAccent{\invbreve}{\mathalpha}{tipa}{16}
\newtheorem{theorem}{Theorem}
\title{Capacity Region of Two-users Weak Gaussian Interference Channel}
\author[1]{Amir K. Khandani\thanks{E\&CE Department, University of Waterloo, Waterloo, Ontario, Canada, khandani@uwaterloo.ca.}}
\begin{document}

\vspace{-5cm}
\maketitle

\vspace{-1cm}
\begin{abstract}
Capacity region of two-users (2-users) weak Gaussian Interference Channel (GIC) has been solved only in some special  cases. The problem is complex since  knowledge of input distributions is needed in order to express the underlying mutual information terms in closed forms, which in turn should be optimized over selection of input distributions and the associated power and bandwidth allocation. In addition, the optimum solution may require dividing the available resources (time, bandwidth and power) among several 2-users GIC (called ``constituent 2-users GIC'', or simply ``constituent regions'', hereafter) and apply time-sharing among them  to form their upper concave envelope,  and thereby enlarge the capacity region.   The current article, in most parts, focuses  on a single constituent 2-users GIC, meaning that the constraints on resources are all satisfied with equality.   This does not result in any loss of generality because a similar solution technique is applicable to any constituent 2-users GICs.   
This article shows that, by relying on a different,  intuitively straightforward,  interpretation of the underlying optimization problem, one can determine the encoding/decoding strategies in the process of computing the optimum solution (for a single constituent 2-users GIC).  This is based on gradually moving along the boundary of the capacity region in infinitesimal steps, where the solution for the end point in each step is constructed and optimized relying on the solution at the step's starting point. This approach enables proving Gaussian distribution is optimum over the entire boundary, and also allows finding simple closed form solutions describing different parts of the capacity region.   The solution for each constituent 2-users GIC coincides with the optimum solution to the well known Han–Kobayashi (HK) system of constraints with i.i.d.  (scalar) Gaussian inputs.
Although the article is focused on {\em 2-users weak} Gaussian interference channel, the proof for optimality of Gaussian distribution is independent of the values of cross gains, and thereby is universally applicable to  strong, mixed and Z interference channels, as well as to GIC with more than two users. In addition, the procedure for the construction of boundary is applicable for arbitrary cross gain values, by re-deriving various conditions that have been established in this article assuming $a<1$ and $b<1$.

\end{abstract}
{\bf Note:} The current arXiv submission differs from the earlier one (submitted on November 25th, 2020) in: (i) An Introduction section and a  brief literature survey are added. (ii) The proof for the upper bound (presented in Appendix~\ref{APP2}) is made more clear. (iii) Section 3.1.4, which provided an alternative proof for part of what is presented in Appendix~\ref{APP2}, is removed. Sections 1.4/1.5 are revised. 

\section{Introduction} \label{sec1}
The problem of 2-users Gaussian Interference Channel (GIC) shown in Fig.~\ref{GICfig} is solved in some  special cases. In particular, the capacity region for the class of weak 2-users GIC (corresponding to $a<1$ and $b<1$ in Fig.~\ref{GICfig}) has been an open problem. The primary question is that of finding the capacity achieving input distribution (which is known to be Gaussian in all cases with a known solution). The complexity of the problem has motivated researchers to study the problem relying on Gaussian input distributions. Even under the assumption of Gaussian inputs, the resulting problem is complex as it involves optimization over many parameters, where the form of the functions to be optimized depends on the encoding and decoding strategies deployed by transmitters and receivers, respectively. Consequently, the form of the function to be optimized depends on the optimum solution itself. This entangles two problems: (i) the problem of finding the closed form mutual information terms which fundamentally change depending on  encoding and decoding strategies, and (ii) finding the optimum power allocation for optimum encoding and decoding strategies.  A brute force approach would require solving the latter problem over all possible encoding and decoding strategies, then compare the results to select the optimum solution. This is intractable due to the complexity of the closed form expressions, and sheer number of parameters to be optimized.  

This article introduces an alternative view based on the concept of nested optimality which serves as an engine to construct the optimum boundary in infinitesimal steps. In each step, the optimum encoding/decoding strategy and its associated power allocation are found for the end point on the step based on the solution at its  starting point. This is achieved  by expressing nested optimality in the form of differential equations.   
This viewpoint enables: (i) proving optimality of Gaussian distribution, using a layered code-book, at each transmitter, (ii)  proving optimality of HK rate region, and (iii) finding simple closed form solutions describing various shapes (depending on channel parameters and available power levels) of the boundary of the capacity region.   

In this article, it is assumed the two transmitters use the entire bandwidth at all times, and power available to each transmitter is entirely used.  A complete solution would require computing the upper concave envelope by time sharing among a number of solutions of the type presented in this article, through optimizing the resource allocation (time/bandwidth and power). Using the results presented in this article, deriving the upper concave envelope would be rather straightforward, although lengthy, and is left for future work. 

\subsection{Literature Survey}

The problem of Gaussian interference channel has been the subject of numerous outstanding prior works, paving the way to the current point and moving beyond. A subset of these works, reported in \cite{GIC1} to \cite{GIC38}, are briefly discussed in this section.  A more complete and detailed literature survey will be provided in subsequent revisions of this article

Reference~\cite{GIC1} discusses degraded Gaussian interference channel (degraded means one of the two receivers is a degraded version of the other one) and presents multiple bounds and achievable rate regions. Reference~\cite{GIC2} studies the capacity of 2-users GIC for the class of strong interference and shows the capacity region is at the intersection of two MAC regions, consistent with the current article.  Reference~\cite{GIC3}  establishes optimality for two extreme points in the achievable region of the general 2-users GIC. \cite{GIC3} also proves that the class of degraded Gaussian interference channels is equivalent to the class of Z (one-sided) interference channels. 

References~\cite{GIC4} to \cite{GIC6} present achievable rate regions for interference channel. In particular, \cite{GIC4} presents the well-known Han-Kobayashi (HK) achievable rate region.  HK rate region coincides with all results derived previously (for Gaussian 2-users GIC), and is shown to be optimum for the class of weak 2-users GIC in the current article. References \cite{GIC7}\cite{New1} have further studied the HK rate region.  \cite{New1} shows that HK achievable rate region is strictly sub-optimum for a class of discrete interference channels. 

References~\cite{GIC9} to \cite{GIC15} have studied the problem of outer bounds for the interference channel.  Among these, \cite{GIC11}\cite{GIC12}\cite{GIC13} have also provided optimality results in some special cases of weak 2-users GIC.  

References~\cite{GIC16}\cite{GIC17} have studied the problem of interference channel with common information. References~\cite{GIC18} to \cite{GIC20} have studied the problem of interference channel with cooperation between transmitters and/or between receivers. 
References~\cite{GIC21}\cite{GIC22} have studied the problem of interference channel with side information. Reference~\cite{GIC23} has studied the problem of interference channel assuming cognition, and  reference~\cite{GIC24} has studied the  problem assuming cognition, with or without secret messages. 

Reference \cite{GIC25} has found the capacity regions of vector Gaussian interference channels for classes of very strong and aligned strong interference. \cite{GIC25} has also generalized some known results for 
sum-rate of scalar Z interference, noisy interference, and mixed interference to the case of vector channels. Reference \cite{GIC26} has addressed the sum-rate  of the parallel Gaussian
interference channel. Sufficient conditions are derived
in terms of problem parameters (power budgets and channel coefficients) such
that the sum-rate can be realized by independent transmission across sub-channels while treating interference as noise, and corresponding optimum power allocations are computed. 
Reference \cite{GIC27} studies a Gaussian interference network where each message is encoded by a single transmitter and is aimed at a single receiver. Subject to feeding back the output from receivers to their corresponding transmitter, efficient strategies
are developed based on the discrete Fourier transform signaling. 

 Reference \cite{GIC28} computes the capacity of interference channel within one bit. 
References \cite{GIC29}\cite{GIC30} study the impact of interference in GIC.  \cite{GIC30} shows that treating interference as noise in 2-users GIC achieves the closure of the capacity
region to within a constant gap, or within a gap that scales as O(log(log(.)) with signal to noise ratio. Reference \cite{GIC31} relies on game theory  to define
the notion of a Nash equilibrium region of the interference
channel, and characterizes the Nash equilibrium region for: (i) 2-users
linear deterministic interference channel in exact form, and (ii) 2-users GIC within 1 bit/s/Hz in an approximate form. 

Reference \cite{GIC32} studies the problem of 2-users GIC based on a sliding window superposition coding scheme. 

References \cite{GIC33} and  \cite{GIC34}, independently, introduce the new concept of non-unique decoding as an intermediate alternative to ``treating interference as noise'', or ``canceling interference''.  Reference~\cite{GIC35} further studies the concept on non-unique decoding and proves that (in all reported cases) it can be replaced by a special joint unique decoding without penalty.

Reference~\cite{GIC36} studies the degrees of freedom of the K-user
Gaussian interference channel, and, subject to a mild sufficient condition on the channel
gains, presents an expression  for the degrees of freedom of the scalar interference
channel as a function of the channel matrix.

Reference~\cite{GIC37} studies the problem of state-dependent Gaussian interference channel, where two receivers are affected by scaled versions of the same state. The state
sequence is  (non-causally) known at both transmitters, but not at receivers. 
Capacity results are established (under certain conditions on channel parameters) in the very strong, strong, and weak interference regimes. For the weak regime, the sum-rate is computed. Reference~\cite{GIC38} studies the problem of state-dependent Gaussian interference channel under the assumption of 
correlated states, and characterizes  (either fully or partially) the capacity region  or the sum-rate under various channel parameters.

%

\subsection{Notations and Definitions}
$A\leftrightarrow B$ means  $A$ and $B$ are interchanged, i.e., $A$ is replaced with $B$, and at the same time, $B$ is replaced with $A$. 
$A \nearrow B$ means $A<B$ increases, approaching $B$. $A\searrow B$ means $A>B$ decreases, approaching $B$. 
Notation $X \triangleleft	Y$, equivalently $Y \triangleright	X$, means $X$ is a degraded version of $Y$. Notation $\alpha \lnapprox \beta$ means $\alpha$ is smaller than $\beta$, but $\alpha$ and $\beta$ are almost equal. 
The breve sign $\breve .$ is used to provide a generic representation of variables in order to avoid confusion with the same variable names used in  deriving article's main results. 
 $AW\!G\!N(\breve{P},\breve{\sigma}^2)$ refers to an Additive White Gaussian Noise (AWGN) channel of power budget $\breve{P}$ and noise power $\breve{\sigma}^2$. 
 $M\!AC(\breve{P}_1,\breve{P}_2,\breve{\sigma}^2)$ refers to a Multiple Access Channel (MAC) with two users of power budgets $\breve{P}_1, \breve{P}_2$, and AWGN of power  $\breve{\sigma}^2$. 

Two-users GIC is composed of two MACs at $Y_1$ and $Y_2$.  The difference with an ordinary MAC is that $V_1$ is do-not-care (considered as noise) at $Y_2$ and $V_2$ is do-not-care (considered as noise) at $Y_1$. In other words, the rate tuples considered relevant at $Y_1$ and $Y_2$ are 
$(R_{U_1},R_{U_2},R_{V_1})$ and 
$(R_{U_1},R_{U_2},R_{V_2})$, respectively. This means, in the context of GIC,  the multiple-access channel formed at $Y_1$ is projected on 
$(U_1,U_2,V_1)$ and the multiple-access channel formed at $Y_2$ is projected on $(U_1,U_2,V_2)$.
To simplify notations, in general, same terminologies are used to refer to an actual MAC region, and to its projection on a sub-space. To emphasize the difference, in Appendix~\ref{APP2},  notations $\overline{M\!AC_1}$ and $\overline{M\!AC_2}$ are used to refer to the actual MAC regions with four rate values, i.e., $(U_1,U_2,V_1,V_2)$. 

$G\!I\!C(\breve{P}_1,\breve{P}_2,a,b,\breve{\sigma}_1^2,\breve{\sigma}_2^2)$ refers to a 2-users weak Gaussian Interference Channel (GIC) with cross gains $a<1$, $b<1$, power budgets $\breve{P}_1,\breve{P}_2$ and noise powers  $\breve{\sigma}_1^2,\breve{\sigma}_2^2$, respectively (see Fig.~\ref{GICfig})\footnote{This article follows a nested viewpoint regarding the optimality. For this reason, Fig.~\ref{GICfig} is shown for the more general case that power of AWGN at receiver 1 and receiver 2 are equal to $\breve{\sigma}_1^2$ and $\breve{\sigma}_2^2$, respectively. However, this article is concerned with the solution to the weak 2-users GIC where the power of the AWGN at  both receivers is equal to 1.}. 
In some cases, this notation is simplified to $G\!I\!C(\breve{P}_1,\breve{P}_2)$. 

Notation $R(.)$ is used to refer to rate as a function of power. When applicable, subscripts 1 and 2 are used to specify user 1 and user 2, respectively.  $R_{ws}$ is used to refer to weighted sum-rate,  $R^{(public)}$ and $R^{(private)}$ are used to specify the public and private rates, respectively, $R^{(merged)}$ is used to specify the rate of two layers (code-books) upon being merged into a single layer (code-book). 
$R_{sum}(\breve{\mbox{MAC}})$ and $R_{partial-sum}(\breve{\mbox{MAC}})$ are used to refer to the sum-rate and partial sum-rate of the multiple access channel $\breve{\mbox{MAC}}$. 

A Gaussian layer of power $\breve{P}$ refers to a Gaussian random code-book of power $\breve{P}$. 
A Gaussian layer of power $\breve{P}$ is called continuous if it can be considered as the superposition of independent Gaussian layers of powers $\forall \breve{P}_1$, $\forall \breve{P}_2$ satisfying  $\breve{P}=\breve{P}_1+\breve{P}_2$.
 This means, two independent, continuous  Gaussian layers of powers $\breve{P}_1$ and $\breve{P}_2$, if superimposed, can be merged into a single continuous Gaussian layer of power $\breve{P}_1+\breve{P}_2$ if the rate of the merged layer is equal to $R^{(merged)}(\breve{P}_1+\breve{P}_2)=R(\breve{P}_1)+R(\breve{P}_2)$. This means, a merged continuous Gaussian layer, by relying on a single Gaussian code-book, can achieve the same rate that the two superimposed continuous Gaussian layers can achieve using  superposition coding and successive decoding. 

As will be discussed in Section~\ref{assumptions}, two types of Gaussian layers (code-books) are required in achieving the boundary of the capacity region; a private layer for each user (which is decoded only by its intended receiver) and one or more public layer(s) for each user (which is/are decoded by both receivers). 

The boundary of the capacity region, formed by optimizing $R_{ws}=R_1+\mu R_2$, is studied in two parts. The lower part corresponds to $\mu\leq 1$ and the upper part corresponds to $\mu\geq 1$. This article focuses on the lower part of the boundary, and the corresponding results and statements can be applied to the upper part by relying on exchanges in expression~\ref{changes} below: 
   \begin{eqnarray} \label{changes}
 P_1 & \leftrightarrow & P_2\\ \nonumber  
 \hat{P}_1& \leftrightarrow & \hat{P}_2 \\ \nonumber  
 T_1& \leftrightarrow & T_2 \\ \nonumber  
 a & \leftrightarrow & b \\ \nonumber  
 \geq & \leftrightarrow & \leq \\ \nonumber 
 > & \leftrightarrow & < .
  \end{eqnarray}

Notations $\rho$ and $\theta$ are used to show how the power is divided between private code-book and public code-book(s) for each of the two users. For user 1, the total power of public code-book(s) is equal to $\rho P_1$ and, consequently, the power of private code-book is $(1-\rho)P_1$. For user 2, the total power of public code-book(s) is equal to $\theta P_2$ and, consequently, the power of private code-book is $(1-\theta)P_2$. To simplify expressions, we use notations $\hat{P}_1$ and $\hat{P}_2$ to refer to the powers of private code-books for user 1 and user 2, respectively. 

To construct the boundary, an infinitesimal amount of power, denoted by $\delta P_1$ and $\delta P_2$ (for user 1 and user 2, respectively) are moved from the public to private part, or vice versa. The corresponding changes in $R_1$, $R_2$ and $R_{ws}$ are denoted as $\delta R_1$, $\delta R_2$ and $\delta R_{ws}$, respectively. Indeed, it will be shown that in computing $\delta R_1$, $\delta R_2$, $\delta R_{ws}$, the bulk of public message(s) can be ignored. This means if a $\delta$-layer is moved from public to private, the leftover public part can be ignored, and if it is moved from private to public, the original public (prior to the move) can be ignored. This means changes in rate values are functions of amount of power allocated to private layers. To emphasis this point, notations $\delta \hat{R}_1$ and $\delta \hat{R}_2$ are used to represent the changes in $R_1$ and $R_2$ due to moving of a $\delta$-layer. 

This article focuses on weak 2-users GIC where the power of the AWGN at either receiver is equal to 1. In this case, the two-tuple of public code-books, $(U_1,U_2)$, at $Y_1$ is subject to a Gaussian noise of power 
$\sigma_1^2=(1-\rho)P_1+b(1-\theta)P_2+1$. Likewise, the two-tuple of public code-books, $(U_1,U_2)$,  at $Y_2$ is subject a Gaussian noise of power  $\sigma_2^2=a(1-\rho)P_1+(1-\theta)P_2+1$. In computing the capacity region of $G\!I\!C(P_1,P_2,a,b,1,1)$, the rate of public layer(s) will be at the intersection of two MAC channels, 
a $M\!AC(\rho P_1,a\theta P_2,\sigma_1^2)$ formed at $Y_1$ and  
$M\!AC(b\rho P_1,\theta P_2,\sigma_2^2)$ formed at $Y_2$. Joint decoding is needed to realize particular points at the intersection of  MAC$_1$ and  MAC$_2$ (see Fig~\ref{GICfig}) which fall on the sum-rate front of one of the two MACs.
As an alternative to joint decoding, layering\footnote{Layering means using more than a single continuous Gaussian code-book.} of public message(s) can be used to achieve points on the sum-rate front of either of the two MACs.  On the other hand, the private code-book of each user is a continuous layer, unless it is trivially divided into multiple continuous sub-layers.

  \begin{figure}[htp]
   \centering
   \includegraphics[width=0.6\textwidth]{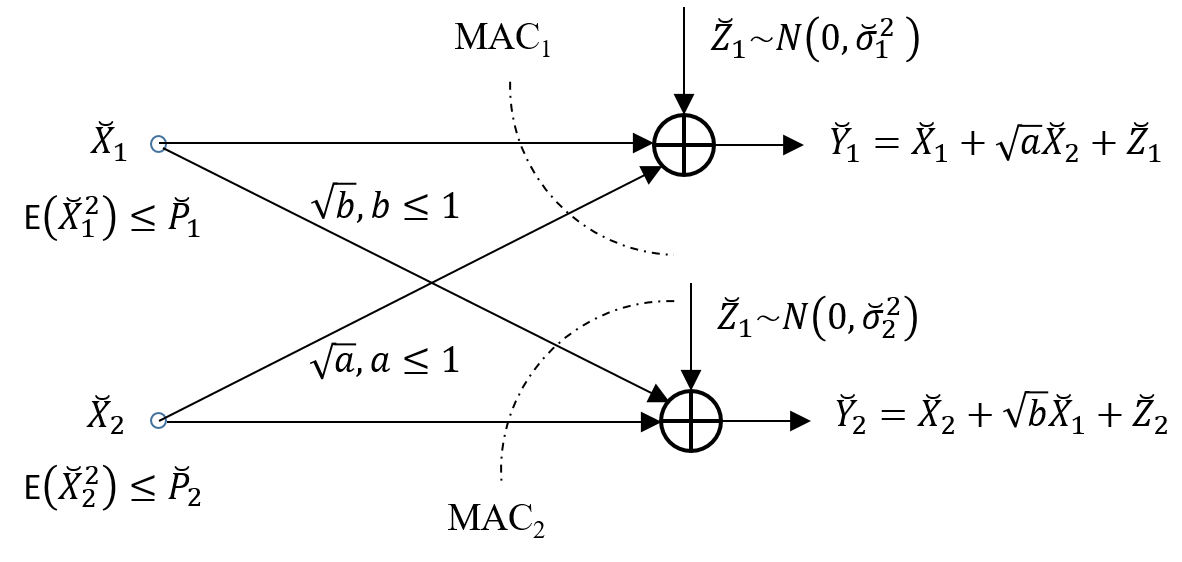}
   \caption{Channel model for two-users Gaussian interference channel.}
   \label{GICfig}
 \end{figure}

\subsection{Article Flow/Structure} 
To describe the flow/structure of the article, first, some definitions are needed. 

In a 2-users GIC, resources (time, bandwidth, power) may be divided among multiple instances of 2-users GIC, where, through optimizing the resource allocation and (optimized) time-sharing among corresponding solutions, the capacity region can be potentially enlarged. The enlarged capacity region is known as the {\em upper concave envelope}.  Hereafter, each region obtained for fixed allocation of resources is referred to as a {\em constituent region}. In cases that some of time/frequency dimensions are allocated to one of the two users alone, the solution to 2-users GIC will be the same as a simple AWGN channel. For this reason, we focus on a constituent region where the entire bandwidth is used simultaneously by both users. In this case, the two power budgets determine the shape of the capacity region and its associated encoding/decoding strategies. For a given quota of resource values (power, time, bandwidth),  we refer to a solution for which constraints on resources are all satisfied with equality as a {\em strategy}. The part of the strategy that is under the control of each of the two transmitters is referred to as the {\em state} of the corresponding transmitter. 
 
Section~\ref{sec1} presents some preliminary discussions that pave the way for subsequent sections containing the main results. Section~\ref{sec2} describes the capacity region of 2-users weak GIC assuming scalar Gaussian inputs and a fixed strategy, i.e., a single constituent region for which power constraints for the two users are satisfied with equality, bandwidth is entirely  shared by the two users, and both users simultaneously transmit at all times. It will be later shown in Section~\ref{sec3.3} that in such a case using a vector Gaussian as input would result in i.i.d. components for the vector, meaning that an i.i.d. Gaussian code-book is optimum.  

Section~\ref{sec3} proves the optimality of Gaussian inputs in achieving capacity.  The proof is presented for a single constituent region. The proof regarding Gaussian optimality  is first expressed in terms of scalar inputs  in Section~\ref{sec3.2}, and is then generalized to vector inputs in Section~\ref{sec3.3}.  In Section~\ref{sec3.3}, it is shown that for a single strategy, using a vector input is not required. In other words, a vector input, once optimized, will have i.i.d. Gaussian components.

Finally, Section~\ref{sec4} discusses the relationship to Han Kobayashi rate region, and
 presents simplified forms for the underlying (active) HK constraints. It is established that the HK rate region overlaps with the capacity region of 2-users weak GIC for a single strategy (fixed power budgets and fixed allocation of time/bandwidth).  This is achieved by pursuing the same decoding orders as suggested by capacity achieving arguments to determine redundant constraints among the HK set.  It is then shown that the region captured by simplified HK expressions is the same as the corresponding 2-users GIC capacity region.   

Article focuses on the lower part of the boundary, i.e., $\mu\leq 1$ in $R_{ws}=R_1+\mu R_2$. Various conditions (expressed in terms of parameters $a$,$b$,$P_1$ and $P_2$) are selected such that the lower part of the boundary posses its richest possible feature set. This means, the lower part of the boundary passes through all points $A$,$B$,$C$,$D_1$,$D_2$,$D_3$ in Fig.~\ref{move-D} (to be defined later in details) and finally becomes tangent to the sum-rate front at point $S$. Derived conditions can be used to determine scenarios (in terms of conditions on parameters $a$,$b$,$P_1$ and $P_2$) that the lower part of the boundary intersects with the sum-rate front prior to going through all the points mentioned above.  

\subsection{An Alternative Viewpoint} \label{aav}
This article presents a bounding region for the 2-users GIC based on the intersection of MAC1 and MAC2 formed at $Y_1$ and $Y_2$ (see Fig.~\ref{GICfig}), respectively (see Appendix~\ref{APP2}). This article shows that one can remove the gap between the 2-users GIC capacity region and its bounding region, while at the same time, (optimally) enlarge the bounding region. {\em This means, one can view this article as aiming to realize  the  bounding region, and to optimally enlarge it, while removing any gap between the capacity region of the underlying 2-users weak GIC and its associated bounding region.} This alternative viewpoint helps in verifying/following different arguments presented in this article. See Appendix~\ref{APP2} for relevant discussions.
 
 \subsection{Four Simplifying Properties and their Justifications} \label{assumptions}
 
  This article relies on the following four properties. Note that the proof for optimality of Gaussian distribution does not depend on any of these properties.  The proof only relies on assuming that moving along the boundary is achieved by increasing the rate of user 2 at the cost of reducing the rate of user 1 (for the lower part of the boundary), or vice versa (for the upper part of the boundary).   

 \begin{enumerate} 
 \item There are two types of messages, private messages that are decoded only by their respective (intended) receiver, and public messages that are decoded by both receivers. In other words, the power available to each of the two users is entirely allocated to the construction of its corresponding public/private message(s). 

 \item Moving along the boundary is achieved by changing the allocation of power between the private code-book and the public code-book of one of the two users, or of both users, in normalized infinitesimal steps. By dividing the boundary into segments with continuous slope, the steps that involve modifying the power allocation of both users (called composite steps) are expressed as two consecutive simple steps, where each simple step involves modifying the power allocation of a single user. 
In each simple step, a fixed, infinitesimal amount of power is moved from the quota of power for public  message(s) of the user to the quota of power for private message of the same user, or vice versa.

\item Joint decoding, if needed, will be for decoding of public messages. In other words, private message intended for receiver 1 will be decoded successively, upon removal of public message(s), while considering interference from transmitter 2 as noise. Likewise, private message intended for receiver 2 will be decoded successively (considering interference from transmitter 1 as noise) once the public message(s) is/are decoded/removed at receiver 2. Note that Appendix~\ref{APP2} shows a joint encoding in GIC has to follow the trivial case of (a joint encoding) based on a layered code-book using $X_1=U_1+V_1$ and $X_2=U_2+V_2$ with $U_1$, $V_1$, $U_2$, $V_2$ being independent Gaussian. 

\item No explicit converse to the proof  is presented, because: (i) proof for the optimality of Gaussian inputs is independent of any assumptions regarding the structure  of the capacity region, and (ii) it is shown (see Appendix~\ref{APP2}) that for Gaussian inputs,  the intersections of the two MACs formed at $Y_1$ and $Y_2$ contains the capacity region, and this article shows this bound is achievable, and coincides with the capacity region of 2-users GIC, as well as with the (achievable) HK region. 
  \end{enumerate}  

Next, these properties are justified\footnote{Note that the first three properties are known features of  the upper bounding region (corresponding to the intersection of MACs) discussed  in Appendix~\ref{APP2}, and can be justified relying on the alternative perspective presented  in Sections~\ref{aav}.} 

{\bf First property:} 
Consider the optimum signaling scheme.  Without loss of generality, let us assume  $X_1$, and thereby $Y_1$ and $Y_2$, are dependent on  $\mathsf{X}$, which is an input  to the  message encoder at transmitter 1. However, no information is assigned to the selection of $\mathsf{X}$. By assigning a rate equal to $\min \left( I(\mathsf{X};Y_1), (\mathsf{X};Y_2)\right)$ to $\mathsf{X}$,  $R_{ws}$ will be increased by $\min \left( I(\mathsf{X};Y_1), (\mathsf{X};Y_2)\right)>0$ contradicting the assumed optimality. 

In summary,  what is being sent by transmitter 1 or transmitter 2 (which is at the cost of using some of the available power) should be entirely decode-able in at least of one of the two receivers. This means, some messages are decoded at both receivers and some messages are decoded at one of the two receivers, only. This is possible only if there are two types of messages, called public and private.    $\square$

{\bf Second  property:} Upon establishing that the code-books are Gaussian, and noting the discussions in Section~\ref{aav} and Appendix~\ref{APP2}, one can rely on  a layered structure (for the superposition of independent random code-books representing the public and the private messages for each of the two users), which justifies the second property. $\square$

{\bf Third property:} As mentioned earlier, the intersection of MAC1 and MAC2 forms an upper bound to the capacity region, which will be shown in this article to be achievable, coinciding with the capacity region. Section~\ref{aav} expresses the overall objective in terms of realizing and enlarging the intersection of the two MAC regions, instead of achieving the capacity of 2-users GIC.  As a result, the constraints of the two MACs govern the optimum rate tuple.  Joint decoding in MAC is needed when, given a fixed sum-rate, one needs to provide a trade-off between corresponding rate components. This is used for achieving a point, other than the corner points, on the sum-rate front of the underlying MAC by increasing one of the rate values at the cost of decreasing the other one. In 2-users GIC,  the rate tuple defining MAC1 at $Y_1$ includes $(R_{U_1},R_{U_2},R_{V_1})$, likewise, the rate tuple defining MAC2 at $Y_2$ includes  $(R_{U_1},R_{U_2},R_{V_2})$. As a result, any possible intersection between the capacity regions of the two MACs is limited to the shared sub-space, i.e., sub-space spanned by $(R_{U_1},R_{U_2})$. This means, if joint decoding is needed, it will involve a trade-off between $R_{U_1}$ and $R_{U_2}$.
A question remains if there may be any benefits in joint decoding of public and private messages sent to one of the receivers. Without loss of generality, let us focus on user 1. There is no need to trade off the rate of public message vs. the rate of the private message for user 1, as only their sum contributes to $R_1$, which appears as a standalone component in $R_{ws}=R_1+\mu R_2$.   On the other hand, the rate of the private message of user 1 does not impact the MAC formed at $Y_2$ (only its power matters, because it will not be decoded). This entails, the point of interest on MAC1  may only involve joint decoding of the two public messages, and the private message $V_1$ will be decoded once the public messages are decoded and their effects  are removed at $Y_1$.  Note that, unlike private message, the rates of public messages affect $R_{ws}$ as a weighted sum, i.e., $R_{ws}^{(public)}=R_1^{(public)}+\mu R_2^{(public)}=R_{U_1}+\mu R_{U_2}$, which is another indication that (potentially) joint decoding of $U_1$ and $U_2$ may be required. Similar arguments apply to MAC2. The above arguments are expressed more accurately in mathematical terms in Appendix~\ref{APP2}. Section~\ref{sec-inter} studies cases that joint decoding cannot be avoided.  The alternative viewpoint presented in Section \ref{aav}, which is based on expressing/formulating  the overall objective in terms of achieving/optimizing the intersection of MAC1 and MAC2, provides a more natural framework to justify the above arguments by relying on well known properties of MAC. 
 $\square$
 
{\bf Fourth property:} Details are provided in Appendix~\ref{APP2}.  $\square$

In the following, in Sections~\ref{sec1.4}, \ref{sec1.5p} and \ref{sec1.5}, starting from some background materials related to AWGN and MAC channels, a framework is established  to simplify the derivation of optimality conditions for 2-users weak GIC. These conditions are then used to construct the capacity region by successively taking infinitesimal steps along its boundary. These discussions pave the way to derive the boundary of the capacity region in Section~\ref{sec2}. The boundary is described assuming: (i) scalar Gaussian inputs, and (ii) both users occupy the entire bandwidth and transmit all the time. Then, in Section~\ref{sec3}, it is proved that Gaussian input is optimum. The proof is first expressed in terms of scalar Gaussian inputs, and then it is generalized to vector inputs. Relying on these results, it is concluded that the  construction of the capacity region presented in Section~\ref{sec2} is indeed optimum, i.e., the optimum Gaussian vector (as a code alphabet) is composed of i.i.d. components.       
 
\subsection{Superposition Coding and Successive Decoding (SC-SD)}  \label{sec1.4}
Section~\ref{sec1.4A} of the Appendix provides some background material regarding the application of SC-SD in a point-to-point AWGN channel.  This is a trivial example for the application of SC-SD technique, because all the resulting code configurations achieve the same rate of $\breve{R}=0.5\log_2\left(1+\frac{\breve{P}}{\breve{\sigma}^2}\right)$ for the AWGN channel, and successive continuous layers can be merged together, forming a smaller number (and eventually a single) continuous Gaussian layer(s). However, it is well known that in multi-user Information Theory, application of SC-SD technique can have non-trivial outcomes in achieving points on the boundary of capacity region. This is based on the need for Gaussian layers that cannot be merged together. 

As an example for multi-user Information Theory, it is known that SC-SD technique can facilitate achieving  rate tuples on the sum-rate front of a MAC without time sharing. The case of MAC will be further discussed in later parts of this article, as a preamble to explaining the capacity region of the GIC. Subsequently, this article shows that, using a private Gaussian code-book for each of the two users (decoded only at its intended receiver), and a single (or possibly multiple) public code-book(s) for each user, one can achieve the capacity region of the two-user $G\!I\!C(\breve{P}_1,\breve{P}_2,a,b,\breve{\sigma}_1^2,\breve{\sigma}_2^2)$ for a given strategy (fixed allocation of resources, i.e., power/time/frequency).

To move along the capacity region, this article relies on moving layers of infinitesimal power from the private part(s) to public part(s), and/or vice versa. This is in effect equivalent to adjusting the rate of such infinitesimal layers, and accordingly adjusting their order of decoding, and determining if a given layer is decoded at both receivers (public layer), or only at its intended receiver (private layer). Note that, if the infinitesimal layer, refereed to as $\delta$-layer, is moved from the private part to the public part, the orders of decoding at both receivers will be relevant, and if it is moved from public part to the private part, only the order of decoding at its  intended receiver will matter.  The powers of $\delta$-layers for users 1 and 2 are equal to $\delta \breve{P}_1$ and $\delta \breve{P}_2$, respectively. 
Note that  $\delta \breve{P}_1$ (and  $\delta \breve{P}_2$) can be viewed  the infinitesimal value used in integration over (sub-)ranges $[0,\breve{P}_1]$ (and $[0,\breve{P}_2]$) to be carried out in computing the rate of different {\em continuous layers} (continuum of infinitesimal layers involved in SC-SD code construction). The first viewpoint entails the total power of user 1 and user 2 are divided into an infinitely large number ($\breve{P}_1/\delta=\breve{P}_2/\delta$) of  infinitesimal Gaussian layers, and integration is used as the limit of  summing up the rates of consecutive infinitesimal layers forming a continuous (sub-)layer.

This article presents the general layering structure achieving a point on the capacity region of a GIC shown in Fig.~\ref{GICfig}, refereed to as 
$G\!I\!C(\breve{P}_1,\breve{P}_2,a,b,\breve{\sigma}_1^2,\breve{\sigma}_2^2)$.  We first need to discuss the concept of {\em nested optimality}, a definition presented to unify some known results for the point-to-point AWGN and Gaussian MAC channels, which will be then extended to the case of GIC.

 \subsection{Moving of $\delta$-layers to Cover the Boundary of a 2-Users GIC} \label{sec1.5}

This article relies on a different interpretation in forming the Gaussian code-books achieving points on the boundary of the capacity region of 2-users weak GIC. Each step along the boundary corresponds to moving a single $\delta$-layer of power from one of the two 
Gaussian code-books (public/private) of one of the two users to the other  Gaussian code-book (private/public) of the same user. For example,  a single $\delta$-layer is moved from the  public code-book of user 2 to the private code-book of user 2. Analogous to water filling, $\delta$-layers forming a continuous Gaussian code-book posses a {\em fluidity property}, as explained next.   

{\bf Definition 1 -} {\em Fluidity property of $\delta$-layers in forming a continuous Gaussian code-book} (see Figs.~\ref{MAC-pointA-R2} and~\ref{MAC-pointA-R1}): Analogous to water filling, continuous Gaussian code-books forming the public and private messages can be each viewed as a container of water, and moving of a $\delta$-layer is equivalent to moving a small amount of water from a first container (public/private) to a second container (private/public). The moving of $\delta$-layer is performed by adjusting its rate such that the $\delta$-layer can be decoded within its new category of public/private. Under these conditions, the moved $\delta$-layer becomes a fluid layer in the second container.   {\em Fluidity Property} means the issues of: (i) which one of the layers in the fist container (order of the layer in SC-SD) is being moved, and (ii) where  it is moved to (order of the layer in the second container),  do not affect the net outcome due to the move (in terms of change in rate). Note that by the word {\em order} it is meant the order of decoding, which governs the effective noise observed by a given $\delta$-layer, and consequently, determines its rate. $\square$
  
To verify the {\em Fluidity Property}, referring to Fig.~\ref{MAC-pointA-R2}, let us label the $\delta$-layers in the public part of user 2 by $1,\cdots,N_2$ from top to bottom. It follows that the changes in the rate of the public message of user 2, caused by moving  a single $\delta$-layer from public to private, does not depend on the index of the layer to be moved. Regardless of which layer is moved, the rate of the public message  of user 2, after the move, will be equal to the rate of layers indexed by $1,\cdots,N_2-1$ prior to the move. The reason is that the effective noise observed by any of layers $\ell\in\{1,\cdots N_2-1\}$ after the move is the same as the effective noise observed by layers with the same indices $\ell\in\{1,\cdots N_2-1\}$ prior to the move. In addition, the change in the rates of the private messages  of user 1 and user 2, due to moving the $\delta$-layer, does not depend on where the new layer is added among the $\delta$-layers forming the private message of user 2. The reason is that, different positions for the moved layer will become equivalent if one applies a relabeling of layers, an operation that does not affect the effective noise power, and consequently, the rate. Similar arguments apply to moving a $\delta$-layer for user 1 (also see Fig.~\ref{MAC-pointA-R1}).

     \begin{figure}[htp]
   \centering
   \includegraphics[width=0.55\textwidth]{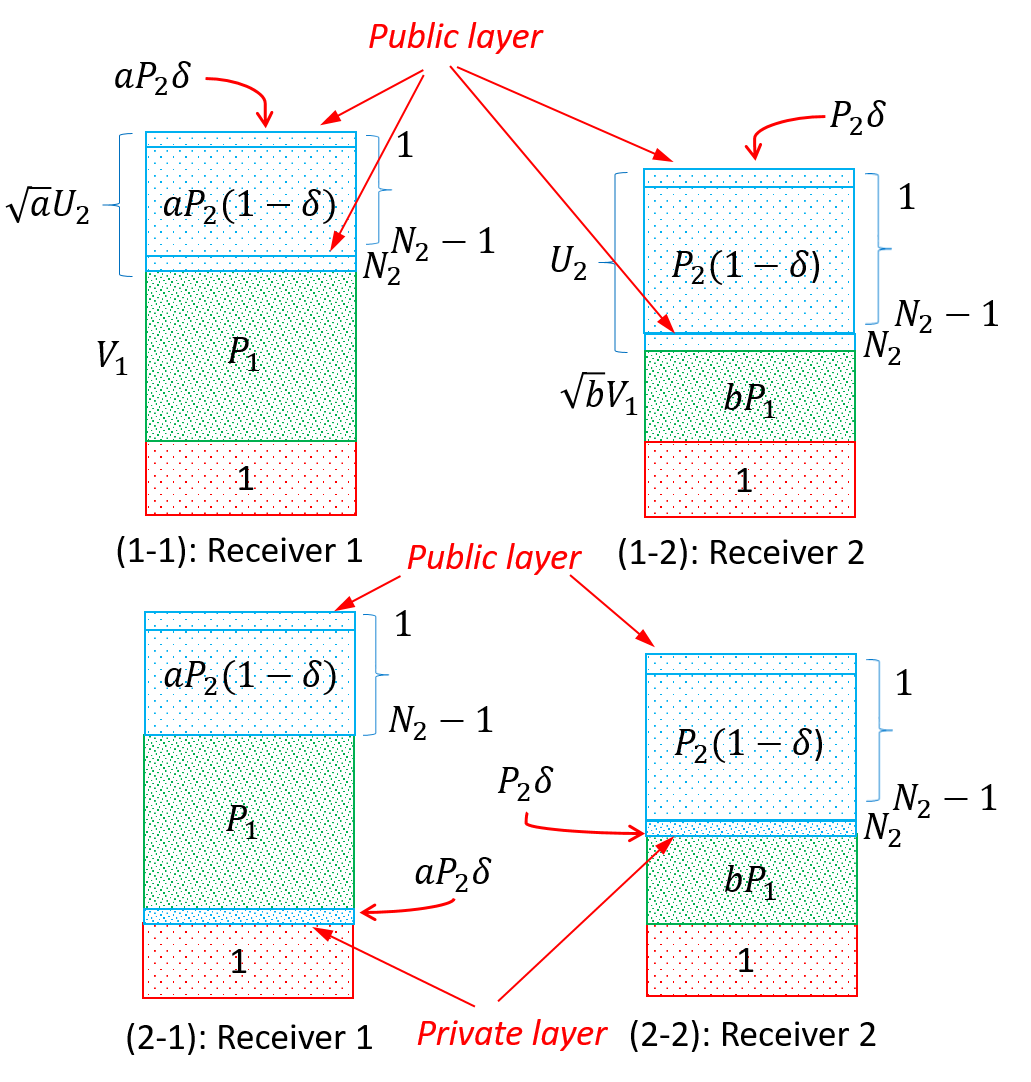}
   \caption{Moving a $\delta$-layer from the public part (1-1)/(1-2) to the private part (2-1)/(2-2) for user 2. These figures show that neither the index of the layer   to be moved, i.e., order of the $\delta$-layer among $\delta$-layers forming the public code-book, nor the position where the $\delta$-layer is moved to in the private code-book, affect the net change in $R_{ws}$.}
   \label{MAC-pointA-R2}
 \end{figure}

   \begin{figure}[htp]
   \centering
   \includegraphics[width=0.45\textwidth]{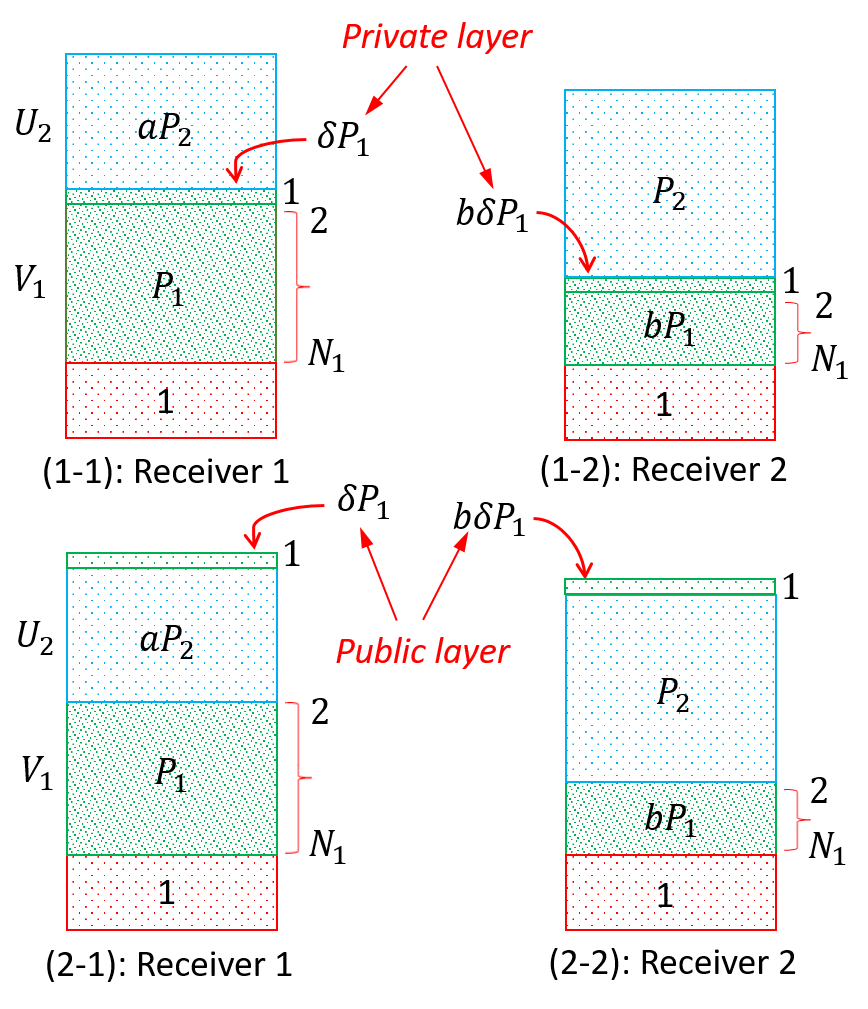}
   \caption{Moving a $\delta$-layer from the private part (1-1)/(1-2) to the public part: (2-1)/(2-2) for user 1. These figures show that neither the index of the layer   to be moved, i.e., order of the $\delta$-layer among $\delta$-layers forming the private code-book, nor the position where the $\delta$-layer is moved to in the public code-book, affect the net change in $R_{ws}$.}
   \label{MAC-pointA-R1}
 \end{figure}
 
 \subsubsection{Advantages of Relying on $\delta$-layers to Cover the Boundary}

Traditional approaches to computing the boundary of the capacity region are based on: (i) finding  the optimum layering structure and the corresponding decoding orders at the two receivers, and (ii)   optimizing the corresponding rate/power values through constrained optimization, where the rate is optimized by optimally allocating the power subject to constraints on power budgets $P_1$ and $P_2$.  This work relies on a different interpretation of the underlying optimization problems based on moving of (normalized) $\delta$-layers of power.  Moving normalized $\delta$-layers provides a framework to gradually compute the changes in the optimum layering structure as one moves along the boundary. This approach is simpler and more intuitive in comparison to first proving the optimality of certain layering structures (for each of the two users, including decoding orders at the two receivers) and then optimizing corresponding power/rate values.  In simple words, in traditional approaches, multiple complex parametric optimization problems need to be solved (each corresponding to a possible layering structure and its associated decoding order) and then the results should be compared to find the optimum solution. The alternative approach followed in this article constructs the optimum solution as part of the  procedure used to find the optimum layering and its associated power allocation and decoding order.

In the following, the article studies the capacity region of two-users weak GIC subject to using Gaussian code-books, and subsequently, Section~\ref{sec3} proves the optimality of using Gaussian code-books.

\section{Boundary of the Capacity Region for Scalar  Gaussian Inputs}\label{sec2}

First, the concept of nested optimality will be reviewed for point-to-point AWGN and MAC in Section~\ref{sec141} and Section~\ref{sec142}, respectively, which will be then extended to the case of 2-users  GIC in Section~\ref{nested-section}. Nested optimality will be used to derive some optimality conditions which will in turn determine the shape of the capacity region. 

\subsection{Nested Optimality}  \label{sec1.5p}

  \subsubsection{Nested Optimality in Point-to-Point AWGN} \label{sec141}
  Figure~\ref{nested-awgn}\,(1) shows the Gaussian code-book of power $\breve{P}$ used to achieve capacity of an AWGN channel using a single continuous layer.  Let us use the notation  $AW\!G\!N(\breve{P},\breve{\sigma}^2)$  to refer to such a channel. Figure~\ref{nested-awgn}\,(2) shows the case that the capacity of the same channel is achieved using superposition of two continuous Gaussian layers of powers $\gamma \breve{P}$ and  $(1-\gamma) \breve{P}$. Nested optimality refers to the property that the Gaussian layer of power  $\gamma \breve{P}$ is optimum for   $AW\!G\!N(\gamma \breve{P}, (1-\gamma) \breve{P}+\breve{\sigma}^2)$ and the Gaussian layer of power  $(1-\gamma) \breve{P}$ is optimum for     $AW\!G\!N((1-\gamma) \breve{P},\breve{\sigma}^2)$.

   \begin{figure}[htp]
   \centering
   \includegraphics[width=0.35\textwidth]{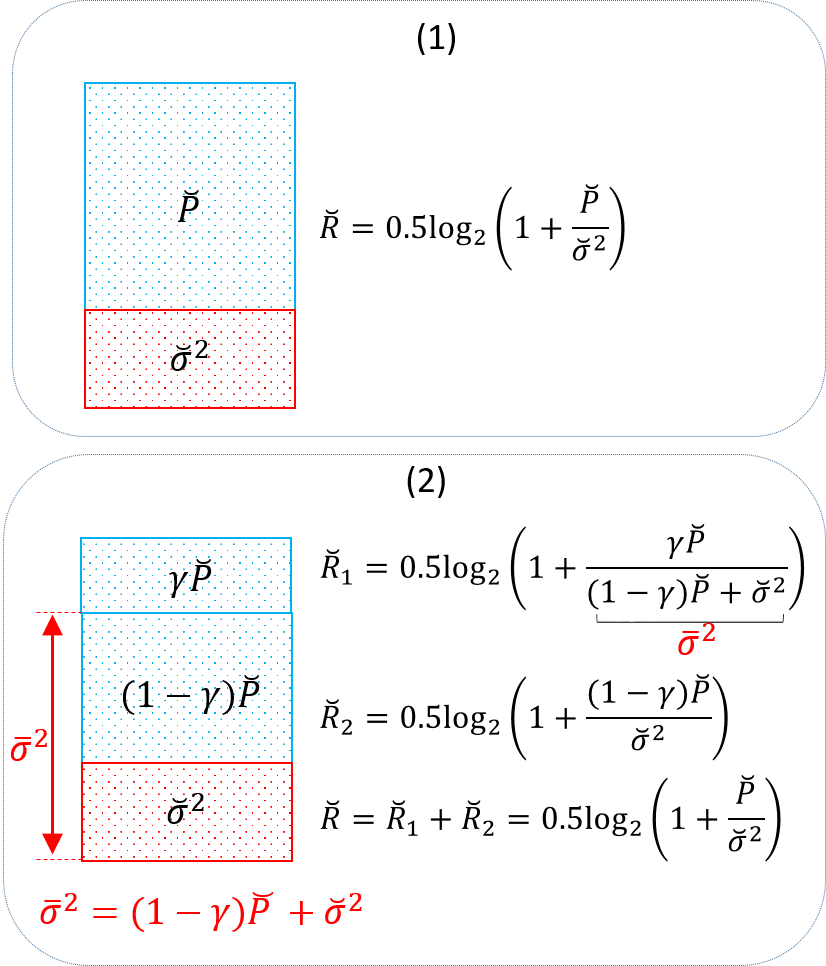}
   \caption{Nested optimality for achieving capacity of a point-to-point AWGN channel.}
   \label{nested-awgn}
 \end{figure}
 
   \subsubsection{Nested Optimality in Gaussian Multiple Access Channel}\label{sec142}
   Figure~\ref{nested-mac}\,(1) shows the application of two continuous Gaussian layers (code-books) of power $\breve{P}_1$ and $\breve{P}_2$ to achieve a corner point on the capacity region of a Gaussian MAC with power constraints $\breve{P}_1$ and $\breve{P}_2$  and AWGN of power $\breve{\sigma}^2$, refereed to as $M\!AC(\breve{P}_1,\breve{P}_2,\breve{\sigma}^2)$. This is a point on the boundary of the capacity region maximizing $\breve{R}_1+m \breve{R}_2$ for $m=1$ (indeed for $0\leq m\leq 1$) referred to as {\em sum-rate front}. Figure~\ref{nested-mac}\,(2) shows a different layering structure for achieving a different point on the boundary of the capacity region maximizing sum-rate. Nested optimality refers to the property that, in Fig.~\ref{nested-mac}\,(2),  Gaussian layers (code-books) of power $\beta \breve{P}_1$ and $\alpha \breve{P}_2$ are optimum for maximizing sum-rate in 
   $M\!AC(\beta \breve{P}_1,\alpha \breve{P}_2,(1-\beta) \breve{P}_1+(1-\alpha) \breve{P}_2+\breve{\sigma}^2)\equiv M\!AC(\beta \breve{P}_1,\alpha \breve{P}_2,\bar{\sigma}^2)$ where $\bar{\sigma}^2=(1-\beta) \breve{P}_1+(1-\alpha) \breve{P}_2+\breve{\sigma}^2$ and Gaussian layers (code-books) of power $(1-\beta) \breve{P}_1$ and $(1-\alpha) \breve{P}_2$ are optimum for maximizing sum-rate in $M\!AC((1-\beta) \breve{P}_1,(1-\alpha) \breve{P}_2,\breve{\sigma}^2)$. As a second example,    Fig.~\ref{nested-mac}\,(3) shows a different layering structure based on nested optimality.  In Fig.~\ref{nested-mac}\,(3),  Gaussian layers (code-books) of power $\beta \breve{P}_1$ and $\breve{P}_2$ are optimum for maximizing sum-rate in 
   $M\!AC(\beta \breve{P}_1,\breve{P}_2,(1-\beta) \breve{P}_1+\breve{\sigma}^2)\equiv M\!AC(\beta \breve{P}_1,\breve{P}_2,\tilde{\sigma}^2)$ where $\tilde{\sigma}^2=(1-\beta) \breve{P}_1+\breve{\sigma}^2$ and Gaussian layer (code-book) of power $(1-\beta) \breve{P}_1$ is optimum for maximizing the rate in $M\!AC((1-\beta) \breve{P}_1,0,\breve{\sigma}^2)\equiv AW\!G\!N((1-\beta) \breve{P},\breve{\sigma}^2)$.

       \begin{figure}[htp]
   \centering
   \includegraphics[width=0.6\textwidth]{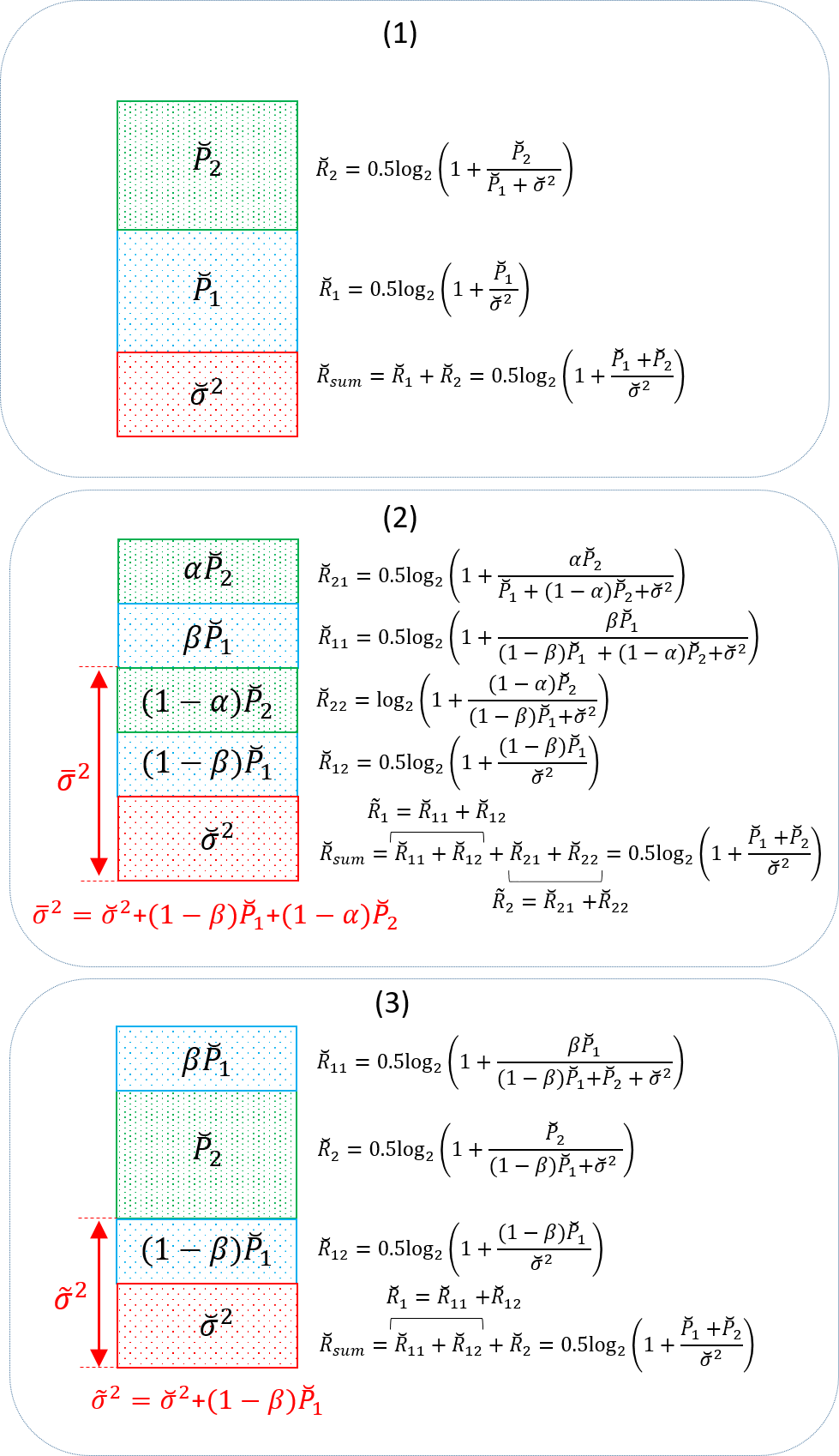}
   \caption{Nested optimality for achieving capacity region of a Gaussian Multiple Access Channel (Gaussian MAC).}
   \label{nested-mac}
 \end{figure}

 \subsubsection{Nested Optimality in 2-users Weak GIC} \label{nested-section}
 
In the following, with some misuse of notation, the set of arguments $\alpha$, $\beta$, $\gamma$, etc.,  used in the definition of $G\!I\!C(\alpha, \beta,\gamma,...)$  are selected according to the context and requirements of each discussion. 
The following theorems extend the concept of {\em nested optimality} to two-users GIC. The corresponding results will be helpful in computing the power allocated to the private messages, and will also serve in establishing  optimality conditions for 2-users weak GIC in the rest of this article. 

\begin{theorem} \label{theorem1} Consider $G\!I\!C(P_1,P_2,\mu)$, for a given strategy, i.e., given power budgets $P_1$ and $P_2$. The optimum power allocation $(P^{(public)}_1, P^{(private)}_1)$, $(P^{(public)}_2, P^{(private)}_2)$ satisfy $P^{(public)}_1+P^{(private)}_1=P_1$  and $P^{(public)}_2+P^{(private)}_2=P_2$. This means the power budgets are completely utilized. \end{theorem}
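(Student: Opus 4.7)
The plan is to argue by contradiction, essentially reusing the ``spare power'' idea already sketched in the justification of the first property in Section~\ref{assumptions}. Suppose, aiming for a contradiction, that at the claimed optimum one has $P^{(public)}_1 + P^{(private)}_1 < P_1$; the case $P^{(public)}_2 + P^{(private)}_2 < P_2$ is handled by the exchange rules in \eqref{changes}. Let $\Delta P_1 := P_1 - P^{(public)}_1 - P^{(private)}_1 > 0$ denote the unused power at transmitter 1.

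The construction is to insert a fresh independent Gaussian code-book $\mathsf{X}$ of power $\Delta P_1$ at transmitter 1, on top of its existing public and private layers, and to decode $\mathsf{X}$ at \emph{both} receivers as the outermost layer (ahead of every original public and private message). Since $\mathsf{X}$ is subtracted off first at each receiver, every original inner layer sees exactly the same effective noise as before, so all previously achieved rates $R_1^{(old)}$ and $R_2^{(old)}$ remain achievable. The new outermost layer $\mathsf{X}$ can itself be assigned any rate up to
\[
R(\mathsf{X}) \;=\; \min\!\bigl(I(\mathsf{X};Y_1),\, I(\mathsf{X};Y_2)\bigr) \;=\; \tfrac{1}{2}\log\!\left(1+\min\!\left(\tfrac{\Delta P_1}{1+P_1+bP_2},\;\tfrac{a\,\Delta P_1}{1+aP_1+P_2}\right)\right),
\]
which is strictly positive in the weak regime $0<a,b<1$ because $\Delta P_1>0$.

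Since $\mathsf{X}$ originates at transmitter 1, its rate is credited entirely to $R_1$, and therefore the new weighted sum-rate is
\[
R_{ws}^{(new)} \;=\; \bigl(R_1^{(old)} + R(\mathsf{X})\bigr) + \mu R_2^{(old)} \;=\; R_{ws}^{(old)} + R(\mathsf{X}) \;>\; R_{ws}^{(old)},
\]
contradicting the assumed optimality. A symmetric construction at transmitter 2 forces $P^{(public)}_2 + P^{(private)}_2 = P_2$, completing the proof. The only step that deserves explicit verification is that inserting $\mathsf{X}$ as the outermost layer truly leaves the effective noise of each inner layer unchanged; this is immediate from the superposition-coding/successive-decoding framework set up in Section~\ref{sec1.4}, where cancellation of a decoded outer Gaussian layer is exact in the rate-achievability sense. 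Consequently, no genuine obstacle is anticipated, and the statement reduces to an almost one-line application of the mechanism already used to justify the first property.
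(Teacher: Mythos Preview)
Your argument is correct and is essentially the paper's own proof: assume some power is unused, add a fresh public (outermost) Gaussian layer with the leftover power, decode it first at both receivers so the inner layers are untouched, and obtain a strict increase in $R_{ws}$, contradicting optimality. The only cosmetic slip is in the displayed SNR formula, where the denominators should carry the \emph{actually used} power $P_1-\Delta P_1$ rather than $P_1$ (and the cross-gain labels $a,b$ should match whichever convention you fix for the two links); neither point affects the strict positivity of $R(\mathsf{X})$, so the contradiction stands.
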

 
\begin{proof}  Let us consider a solution ${\mathcal S}$ where the power budget $P_1$ and/or $P_2$ is/are not entirely utilized. Any leftover power can be used to add a new layer of public message(s) on top of the solution ${\mathcal S}$, with rates that permit detection subject to the public message(s) and private messages(s) in  ${\mathcal S}$ acting as additive noise. Rate(s) of these additional layer(s) increases $R_{ws}$ contradicting the optimality of   solution ${\mathcal S}$. 
\end{proof}

\begin{theorem} \label{theorem2} Consider a point on the boundary of  $G\!I\!C(P_1,P_2,a,b,1,1)$, maximizing $R_{ws}=R_1+\breve{\mu} R_2$, where powers allocated to private messages are equal to $P^{(private)}_1$ and  $P^{(private)}_2$, respectively. The optimum solution, maximizing $R_{ws}$ in  $G\!I\!C(P^{(private)}_1,P^{(private)}_2,a,b,1,1)$ for the same weight factor $\breve{\mu}$  has only private messages of powers $P^{(private)}_1$ and  $P^{(private)}_2$, respectively. 
\end{theorem}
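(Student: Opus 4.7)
The plan is to establish Theorem~\ref{theorem2} by a short peeling-and-paste contradiction that leverages Theorem~\ref{theorem1} and the third simplifying property of Section~\ref{assumptions} (publics decoded first at both receivers, privates decoded afterward against the other user's private as noise). Let $\mathcal{S}^*$ denote an optimum strategy for $G\!I\!C(P_1,P_2,a,b,1,1)$ at weight $\breve{\mu}$, carrying public code-books $(U_1,U_2)$ of powers $P_1-P_1^{(private)},\, P_2-P_2^{(private)}$ and private code-books $(V_1,V_2)$ of powers $P_1^{(private)},\, P_2^{(private)}$. First I would observe that once the publics are decoded and subtracted at both receivers, the channel seen by the two private code-books is literally $G\!I\!C(P_1^{(private)},P_2^{(private)},a,b,1,1)$: at $Y_i$ one is left with $V_i$, the scaled interference from $V_{3-i}$, and AWGN of power $1$. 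Hence the private pair of $\mathcal{S}^*$ is a feasible \emph{private-only} strategy for the reduced GIC, achieving the weighted sum-rate $R_{ws}^{priv}:=R_1^{(private)}+\breve{\mu} R_2^{(private)}$.

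Next I would argue by contradiction. Suppose the optimum of $G\!I\!C(P_1^{(private)},P_2^{(private)},a,b,1,1)$ at the same weight $\breve{\mu}$ is realized by some strategy $\mathcal{S}'$ that strictly exceeds $R_{ws}^{priv}$, potentially using a nontrivial split of $P_i^{(private)}$ between new public and new private code-books. I would then build the following hybrid strategy for the original channel: keep the outer publics $(U_1,U_2)$ of $\mathcal{S}^*$ with their original powers and rates, and inside the remaining per-user budget $P_i^{(private)}$ deploy the full code-book structure of $\mathcal{S}'$ in place of $(V_1,V_2)$. At each receiver, decoding proceeds in two nested stages: first jointly decode $(U_1,U_2)$ while treating everything $\mathcal{S}'$ produces as additive Gaussian noise, then decode $\mathcal{S}'$ on the residual. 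Because $\mathcal{S}'$ uses Gaussian code-books whose total power at transmitter $i$ is exactly $P_i^{(private)}$ (Theorem~\ref{theorem1}), the interference-plus-noise variance faced by the outer publics at each receiver is identical to that in $\mathcal{S}^*$, so $(U_1,U_2)$ decode at their original rates. Subtracting them leaves $\mathcal{S}'$ operating on its native channel $G\!I\!C(P_1^{(private)},P_2^{(private)},a,b,1,1)$ and delivering its claimed rates. The hybrid therefore attains $R_{ws}^{public}(\mathcal{S}^*)+R_{ws}(\mathcal{S}')>R_{ws}^{public}(\mathcal{S}^*)+R_{ws}^{priv}=R_{ws}(\mathcal{S}^*)$, contradicting the optimality of $\mathcal{S}^*$. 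Consequently, a private-only configuration at powers $P_i^{(private)}$ is optimum for the reduced GIC, exactly as claimed.

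The step I expect to be the main obstacle is the assertion that the outer publics still decode at their original rates once $(V_1,V_2)$ are replaced by the (possibly multilayer) $\mathcal{S}'$. This rests on three ingredients that should be stated with some care: (i) Gaussianity of all code-books, established independently in Section~\ref{sec3} and thus admissible here; (ii) the fact that the total transmitted power beneath the outer publics at each transmitter is preserved, guaranteed by applying Theorem~\ref{theorem1} to $\mathcal{S}'$; and (iii) the standard observation that, for Gaussian-input decoding, the rate depends on a Gaussian interferer only through its total power, independently of its internal layering or of whether part of it will later be decoded at the other receiver. Once these three points are explicitly laid out, the bookkeeping in the hybrid construction is routine and the contradiction closes.
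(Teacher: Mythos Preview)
Your proposal is correct and follows essentially the same peeling-and-paste contradiction as the paper: assume the reduced $G\!I\!C(P_1^{(private)},P_2^{(private)},a,b,1,1)$ admits a strictly better strategy $\mathcal{S}'$ (with auxiliary publics), stack the original (``primary'') publics of $\mathcal{S}^*$ on top of $\mathcal{S}'$, observe that the noise-plus-interference power seen by the primary publics is unchanged so their rates are preserved, and obtain a strategy beating $\mathcal{S}^*$. Your explicit listing of the three load-bearing ingredients (Gaussianity, full power usage via Theorem~\ref{theorem1}, and the rate depending on a Gaussian interferer only through its total power) matches what the paper uses implicitly.
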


\begin{proof}  Let us use notations $R^{(public)}_{ws}$ and $R^{(private)}_{ws}$ to refer to the contributions of the public and private messages to $R_{ws}$, i.e.,
 $R_{ws}=R^{(public)}_{ws}+R^{(private)}_{ws}$, and use notations $R_1^{(private)}$ and  $R_2^{(private)}$ to refer to the rates of the two private messages. 

Let us assume the optimum solution for $G\!I\!C(P_1,P_2,a,b,1,1)$, maximizing $R_{ws}=R_1+\breve{\mu} R_2$, is given in Fig.~\ref{theorem3pp}(a). 
We need to show that, upon decoding of the public messages, for given $P^{(private)}_1$ and  $P^{(private)}_2$, $R^{(private)}_{ws}=R^{(private)}_1+\breve{\mu} R^{(private)}_2$, is the optimum weighted sum-rate for $G\!I\!C(P^{(private)}_1,P^{(private)}_2,a,b,1,1)$ as shown in Fig.~\ref{theorem3pp}(b). If this is not the case, then the optimum $R_{ws}$ for  $G\!I\!C(P^{(private)}_1,P^{(private)}_2,a,b,1,1)$ could be increased by including some non-zero public message for either or both of the users, at the cost of reducing power(s) of corresponding private messages  as shown in Fig.~\ref{theorem3pp}(c). This means $R_{ws}$ in Fig.~\ref{theorem3pp}(c) is higher than $R_{ws}$ in Fig.~\ref{theorem3pp}(b).
To distinguish between public messages, we refer to public messages in Fig.~\ref{theorem3pp}(a) as ``primary public messages" and to public messages extracted from private messages in Fig.~\ref{theorem3pp}(c) as ``auxiliary public messages". Auxiliary public messages are formed to maximize  $R_{ws}$ with weight factor 
$\breve{\mu}$ in Fig.~\ref{theorem3pp}(c). 
 
Now let us assume primary public messages in Fig.~\ref{theorem3pp}(a) are added once again as additional (second layer) of public 
messages to Fig.~\ref{theorem3pp}(c), 
resulting in Fig.~\ref{theorem3pp}(d).  Let us assume the two types of public messages, i.e., primary and auxiliary, are sequentially decoded.  
The rate due to primary public messages is determined by the intersection of the two MAC channels  for which the capacity regions are affected by powers of primary public messages and powers of noise terms. In decoding of primary public messages in Fig.~\ref{theorem3pp}(d), noise terms are composed of private messages plus auxiliary public messages. This means the ``powers of primary public messages'' and the ``power of noise terms'' in Fig.~\ref{theorem3pp}(d)  are the same as their corresponding values in Fig.~\ref{theorem3pp}(a). Consequently, the weighted sum-rate due to primary public messages in Fig.~\ref{theorem3pp}(d) remains the same as in Fig.~\ref{theorem3pp}(a), which, upon successive decoding, will be added to the weighted sum-rate of GIC in Fig.~\ref{theorem3pp}(c) to form the weighted sum-rate of GIC in Fig.~\ref{theorem3pp}(d). The conclusion is that if $R_{ws}$ in GIC of Fig.~\ref{theorem3pp}(c) is larger than the $R_{ws}$ in GIC of Fig.~\ref{theorem3pp}(b), then the $R_{ws}$ in GIC of Fig.~\ref{theorem3pp}(d) will be also larger than the  $R_{ws}$ in GIC of Fig.~\ref{theorem3pp}(a). This contradicts the assumption of optimality for GIC of Fig.~\ref{theorem3pp}(a).  This verifies the configuration in Fig.~\ref{theorem3pp}(b) is optimum without allocating any power to any public messages, and it completes the proof. 

\end{proof}

      \begin{figure}[htp]
   \centering
   \includegraphics[width=1.0\textwidth]{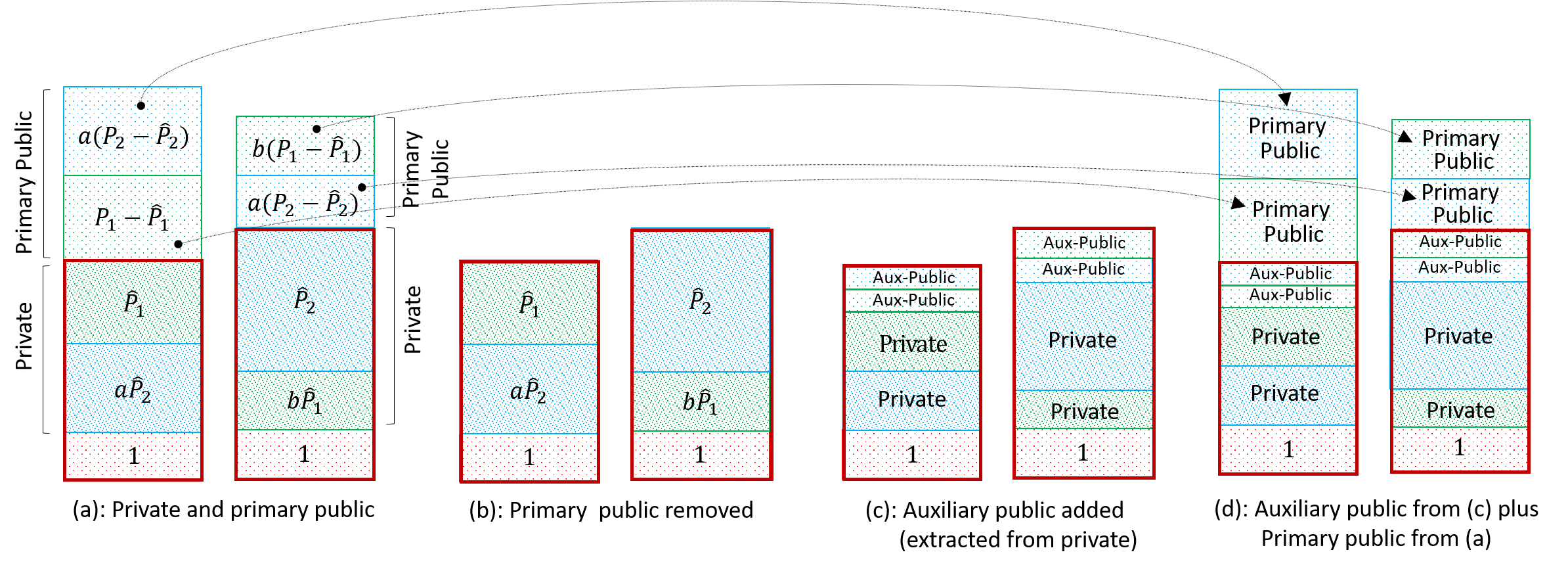}
   \caption{Cases in  the proof of nested optimality for $G\!I\!C(P_1,P_2,a,b,1,1)$.}
   \label{theorem3pp}
 \end{figure}

      \begin{figure}[htp]
   \centering
   \includegraphics[width=0.85\textwidth]{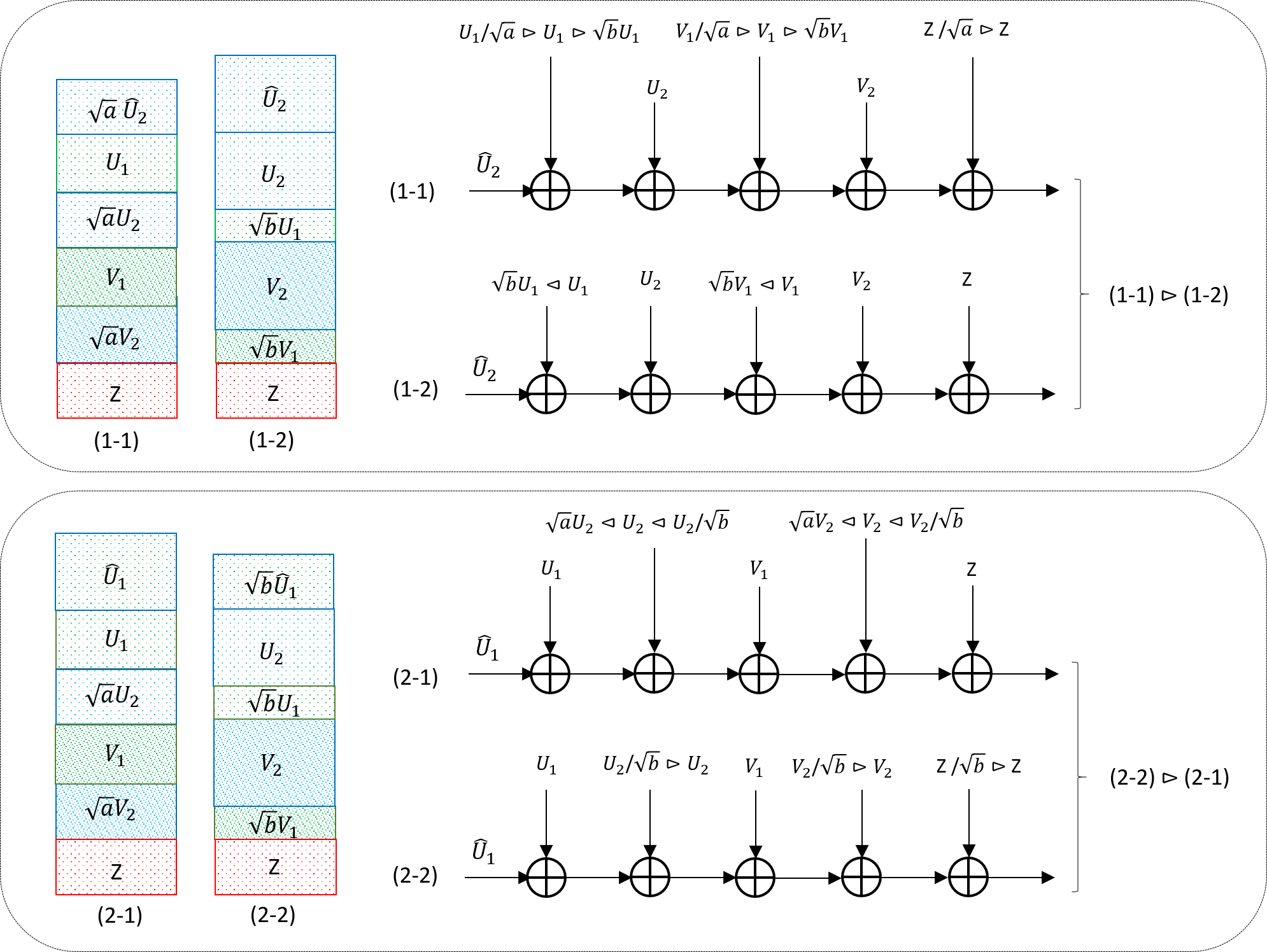}
   \caption{Figure related to Theorem~\ref{theorem3-TH}, note that $X \triangleleft	Y$, equivalently $Y \triangleright	X$, means $X$ is a degraded version of $Y$.}
   \label{theorem3}
 \end{figure}
  
 Section~\ref{sec-inter} discusses different cases for the intersection MAC1 and MAC2. Notations $R_1^-(1)$ and $R_1^-(2)$ are used to specify the rate of $U_1$ at $Y_1$ and $Y_2$, respectively, when all other terms are considered as noise. Likewise, notations $R_2^-(1)$ and $R_2^-(2)$  specify the rate of $U_2$ at $Y_1$ and $Y_2$, respectively, when all other terms are considered as noise. In such a case, the constraint determining decodability (rate) of $U_1$ is governed by restrictions imposed at $Y_2$, and vice versa, i.e., 
\begin{align} \label{R12}
R_1^-(2) & ~<~  R_1^-(1) \\ \label{R21}
R_2^-(1) & ~<~  R_2^-(2). 
\end{align} 
Following theorem proves \ref{R12} and \ref{R21} in a slightly more general setting. 
 
\begin{theorem}  \label{theorem3-TH} Consider Fig.~\ref{theorem3} where public layer $\hat{U}_1$ is decoded while considering all other terms as noise, and likewise, $\hat{U}_2$ is decoded while considering all other terms as noise. Rate of public layer $\hat{U}_2$ in Fig.~\ref{theorem3}(1) is determined by constraint on detectability at $Y_1$, and   rate of public layer $\hat{U}_1$ in Fig.~\ref{theorem3}(2) is determined by constraint on detectability at $Y_2$. 
\end{theorem}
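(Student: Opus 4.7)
The plan is to reduce both assertions of the theorem to a direct SINR comparison at the two receivers for the public layer under consideration, and then invoke $a<1$ and $b<1$ to close the resulting inequalities. First I would read off from Fig.~\ref{theorem3} the inventory of all signals still present when $\hat{U}_1$ (resp.\ $\hat{U}_2$) is decoded with everything else treated as noise, and lump them into an aggregate transmit power $A$ coming from user~1 (all other user-1 layers, both public and private) and an aggregate transmit power $B$ coming from user~2. Since a user-1 signal arrives directly at $Y_1$ but with cross-gain $b$ at $Y_2$, and a user-2 signal arrives with cross-gain $a$ at $Y_1$ but directly at $Y_2$, the effective noise powers seen by the layer in question at the two receivers are
\[
N_1 \;=\; A + aB + 1, \qquad N_2 \;=\; bA + B + 1.
\]

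For the public layer $\hat{U}_1$ of transmit power $P_{\hat{U}_1}$ the corresponding SINRs are $P_{\hat{U}_1}/N_1$ at $Y_1$ and $bP_{\hat{U}_1}/N_2$ at $Y_2$, so the inequality $R_1^-(2) < R_1^-(1)$ stated in \eqref{R12} is equivalent to $bN_1 < N_2$. Substituting the expressions for $N_1,N_2$ produces
\[
N_2 - bN_1 \;=\; (1-ab)\,B \;+\; (1-b),
\]
which is strictly positive under the weak hypothesis $a<1,b<1$. The completely symmetric calculation for $\hat{U}_2$ in Fig.~\ref{theorem3}(2) yields $N_1 - aN_2 = (1-ab)\,A + (1-a) > 0$, so the binding receiver for $\hat{U}_2$ is $Y_1$. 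Since a public layer must by definition be decodable at both receivers, the smaller of the two SINR-bounded rates controls, and the theorem statement follows immediately.

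The only real obstacle is the bookkeeping step: one must verify in each subfigure of Fig.~\ref{theorem3} that the collection of terms being treated as noise really corresponds to the aggregates $A,B$ above, i.e.\ that private messages of both users, the remaining public layer of the opposite user, and any further public sub-layers all appear at the opposite receiver through the correct direct/cross gain. Once this is confirmed, the whole content of the theorem reduces to the single algebraic identity $1-ab>0$, combined with $1-a>0$ and $1-b>0$, supplied by the weak-interference assumption. No separate converse is required, because the public nature of the layer makes the two decodability constraints conjunctive rather than alternative, so the worse SINR is automatically the operative one.
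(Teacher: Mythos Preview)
Your argument is correct and complete for the Gaussian setting: the two SINR inequalities $N_2-bN_1=(1-ab)B+(1-b)>0$ and $N_1-aN_2=(1-ab)A+(1-a)>0$ are exactly what is needed, and your bookkeeping of the aggregates $A,B$ matches the channel model. In fact the paper carries out essentially the same computation later, in the derivation of \eqref{eq62}--\eqref{Eq71} and \eqref{eq64}--\eqref{eq64z}, for the particular public layers $U_1,U_2$.

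The paper's own proof of this theorem is different in style: it is a one-line appeal to \emph{degradedness}, asserting that the view of $\hat U_2$ at $Y_1$ is a degraded version of its view at $Y_2$ (and symmetrically for $\hat U_1$). The payoff of that phrasing is that it is distribution-free --- the paper explicitly remarks just after the proof that Theorem~\ref{theorem3-TH} holds ``regardless of the distribution of additive noise $Z$, and distributions of other layers of public messages and distributions of the private messages.'' Your SINR route, by contrast, implicitly uses that the achievable rate is $\tfrac12\log(1+\mathrm{SINR})$, i.e.\ that the aggregate interference-plus-noise is Gaussian, so it does not by itself deliver the distribution-free statement. For the purposes of the paper (where Gaussianity is eventually established anyway) this is harmless, and your explicit algebra has the advantage of making transparent exactly which inequalities among $a,b$ are being used; but if you want the full strength the paper claims, you should recast your SINR comparison as a channel-ordering statement (the scaled channel to the non-intended receiver can be obtained from the intended-receiver channel by adding independent noise), which is what the degradedness formulation captures.
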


\begin{proof}  Proof follows by noting that the public layer $\hat{U}_2$ in Fig.~\ref{theorem3}\,(1-1) is a degraded version of $\hat{U}_2$ in Fig.~\ref{theorem3}\,(1-2), likewise,  public layer $\hat{U}_1$ in Fig.~\ref{theorem3}\,(2-1) is a degraded version of $\hat{U}_1$ in Fig.~\ref{theorem3}\,(2-2). 
\end{proof}

Note that Theorem~\ref{theorem3-TH} is valid regardless of the distribution of additive noise $Z$, and distributions of other layers of public messages and distributions of the private messages.

Next we summarize the changes in the allocation of power to private and public parts and the decoding strategies as one moves along the capacity region of the 2-users weak GIC. We study segment of the boundary corresponding to $\mu\leq 1$, referred to as ``the lower part of the capacity region''. It is easy to see that the case of 
$1\leq \mu$ (upper part of the boundary) can be obtained from the case of 
$\mu\leq 1 $ (lower part of the boundary) by applying exchanges in expressions~\ref{changes}. It should be emphasized that by ``capacity region", here it is meant ``a single constituent capacity region". This is the capacity region corresponding to a single strategy. The actual capacity region is obtained by optimum dividing of resources (time/bandwidth/power) among multiple constituent capacity regions and optimum time sharing among them. This article shows that the optimum random coding probability density function for each constituent region is i.i.d. Gaussian.  

  Given power budgets $P_1$ and $P_2$, the first step in solving the optimization problem for maximizing $R_{ws}=R_1+\mu R_2$ is to compute the fraction of power allocated to private messages. Using traditional optimization techniques for this purpose is complicated because the closed form expression for $R_{ws}$ changes from one segment of the boundary  to another. 
This article relies on a simpler and more intuitive approach to solve the underlying optimization problems. Starting from point $A$ (see Figs.~\ref{Fig2} and \ref{Fig3p}), taking a step along the boundary is formulated in terms of moving a $\delta$-layer of power from the public part to the private part, or vice versa. 
By dividing the boundary into segments over which $\mu$ is continuous, for a given step within a given segment, it is required to move a $\delta$-layer for only one of the two users. In other words, none of the steps requires adjusting the allocation of power for  both users at the same time (within the same $\delta$-step). This is helpful in proving the optimality of Gaussian random code-books in Section~\ref{sec3}.

  \subsection{Moving Counterclockwise Along the Lower part of the Boundary}
  
    To compute the capacity region, we start from the corner point with maximum $R_1$ (point $A$ in Figs.~\ref{Fig2} and \ref{Fig3p}) and move counterclockwise along the boundary in normalized steps (each step increases $R_2$ at the cost of a reduction in $R_1$).  To form such normalized steps, we divide each of the two ranges $[0,P_1]$ and $[0,P_2]$ into $L=1/\delta$ infinitesimal portion of power $\delta P_1$ and $\delta P_2$, respectively.  By using $L=1/\delta$ steps, one would be able to sweep through the entire ranges of the two power values in an equal number of normalized steps. The starting point, refereed to as Point $A$, is the point maximizing $R_1$. Noting the power budget of user 1, namely $P_1$, the rate from transmitter 1 to receiver 1 is limited by the capacity of the $AW\!G\!N(P_1,1)=0.5\log_2(1+P_1)$, which should rely on using a Gaussian random code-book of power $P_1$ for $X_1$. This maximum rate can be achieved if the rate of user 2 is low enough such that $X_2$ can be decoded at $Y_1$, while considering $X_1$ as (additional) noise (interference). Upon removing $X_2$ at $Y_1$, $X_1$ sees a Gaussian noise of power 1; consequently, to maximize $R_1$, one should use a Gaussian code-book for $X_1$. Keep in mind that, to maximize $R_1$, $X_1$ must be Gaussian regardless of $X_2$ and $R_2$, consequently, $X_2$ sees a Gaussian noise of power $P_1+1$ at $Y_1$ and a Gaussian noise of power $aP_1+1$ at $Y_2$. The rate of $X_2$ in both of these configurations is maximized using a Gaussian random code-book of power $P_2$ for $X_2$.  The rate of $R_2$ is governed by decodability constraint at $Y_1$. This is consistent with 
Theorem~\ref{theorem3-TH}.  Figure~\ref{Fig2} depicts the details of achieving  Point $A$ and its corresponding rate values. Figure~\ref{Fig3p} depicts the first $\delta$-step along the boundary starting from point $A$. 

  \begin{figure}[htp]
   \centering
   \includegraphics[width=0.9\textwidth]{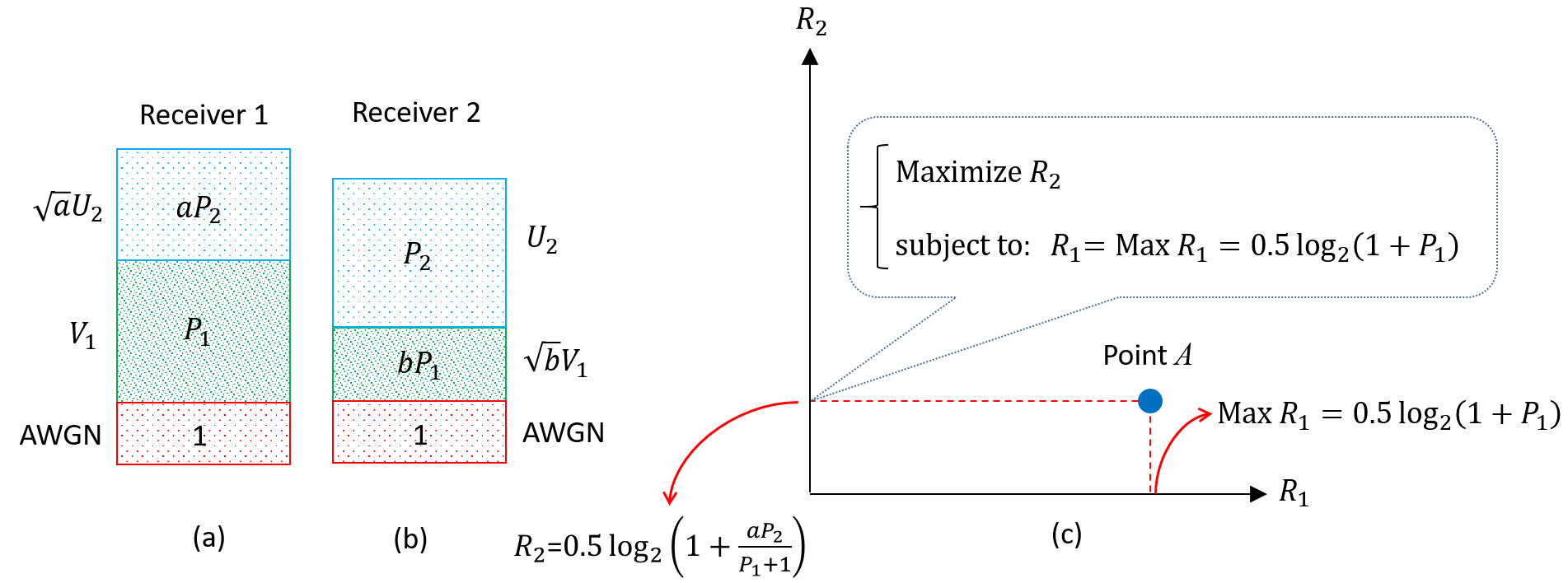}
   \caption{Corner point maximizing $R_1$ (Point $A$).}
   \label{Fig2}
 \end{figure}
  
   \begin{figure}[htp]
   \centering
   \includegraphics[width=0.6\textwidth]{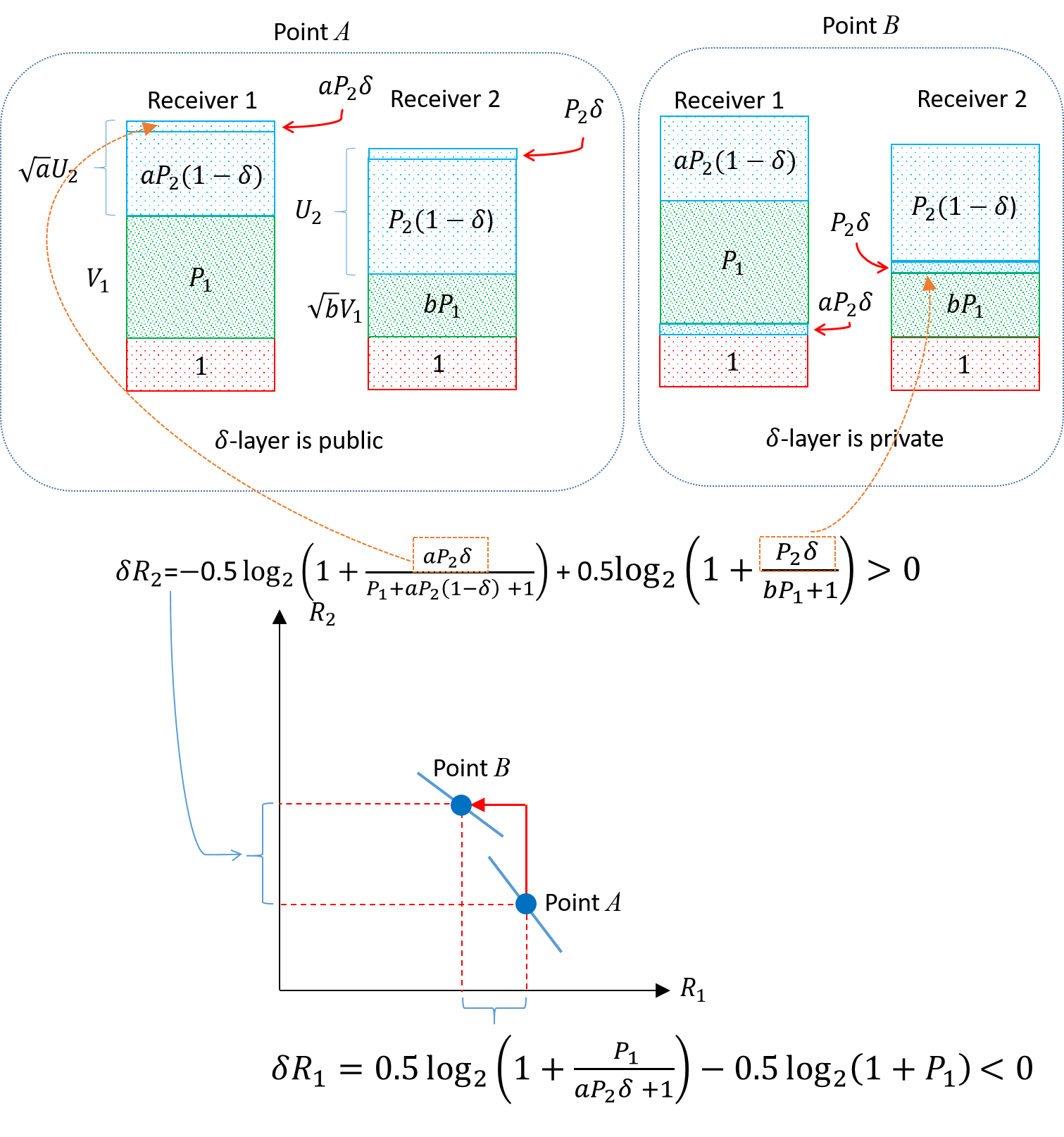}
   \caption{Moving a $\delta$-step from the corner point maximizing $R_1$ (Point $A$) along the boundary of the capacity region to a point $B$ by relocating a $\delta$-layer of power for user 2. }
   \label{Fig3p}
 \end{figure}

Figure \ref{move1} and Fig.~\ref{move-D} are provided to shed some light on the behavior of $\hat{P}_1$ and $\hat{P}_2$.
Figure~\ref{move1} depicts how the parameters $\rho$ and $\theta$ change as one moves counterclockwise along the boundary. 
 At the corner point $A$,  we have $\rho=0$ ($X_1$ is entirely private), and $\theta=1$ ($X_2$ is entirely public).

  \begin{figure}[htp]
   \centering
   \includegraphics[width=0.7\textwidth]{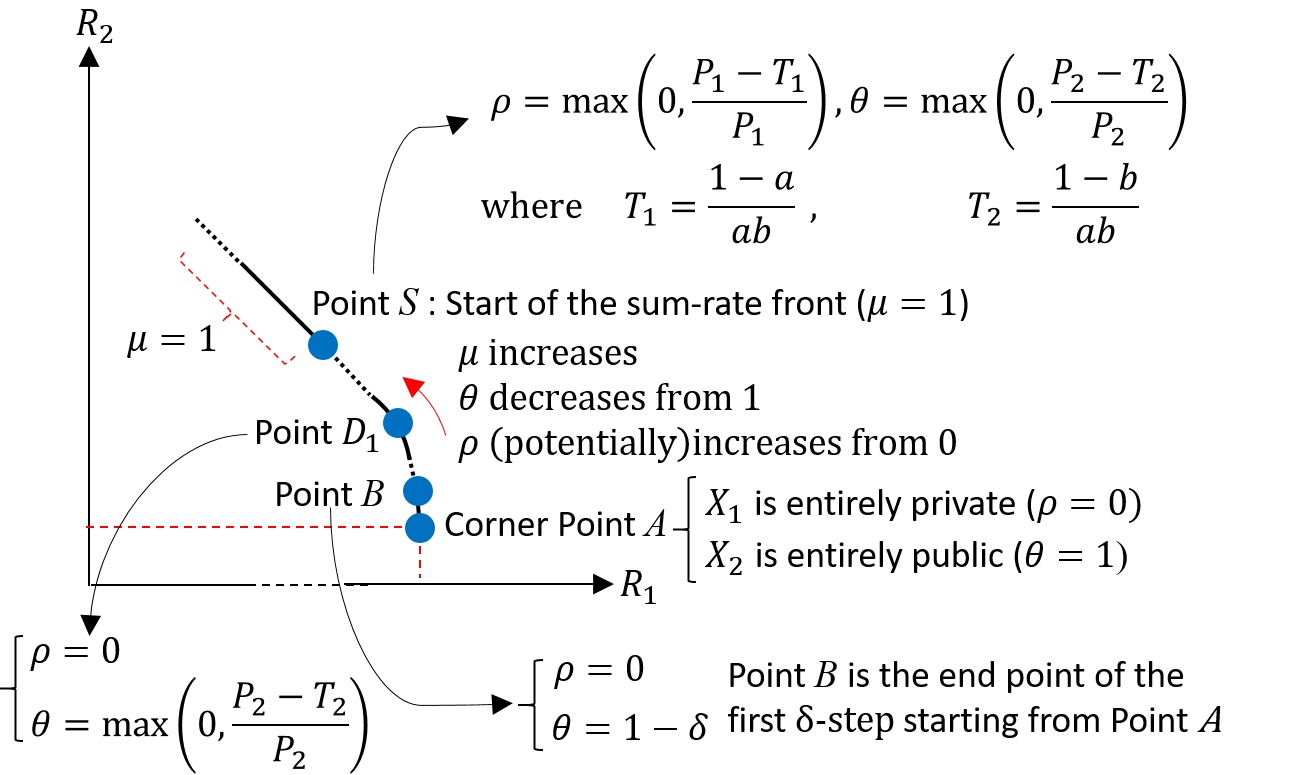}
   \caption{Moving counterclockwise along the lower part, i.e., $\mu\leq 1$, of the capacity region boundary for two-users GIC. } 
   \label{move1}
 \end{figure}
 
   \begin{figure}[htp]
   \centering
   \includegraphics[width=0.75\textwidth]{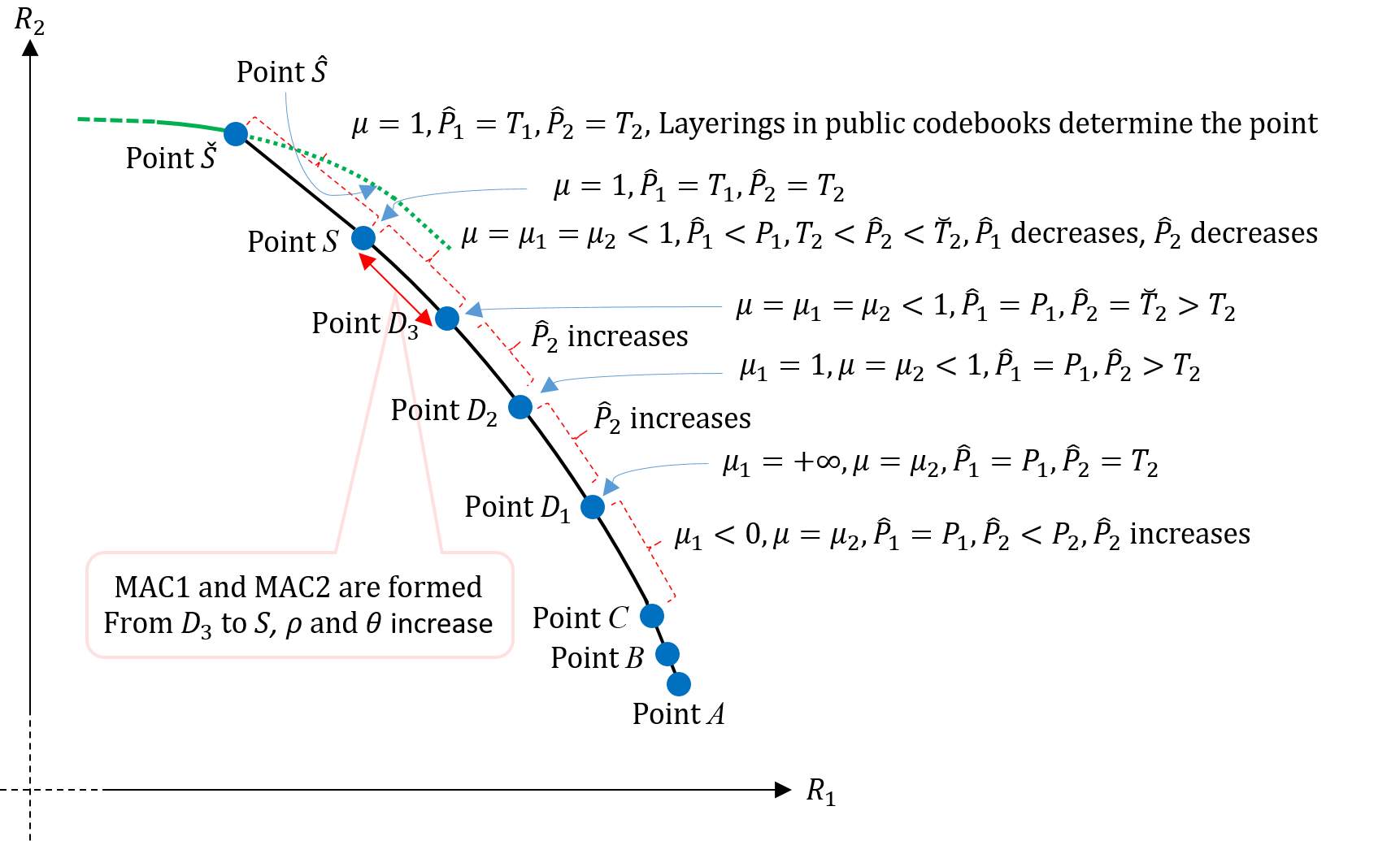}
   \caption{Moving counterclockwise along the lower part of the capacity region from point $A$ to point $S$, subject to the condition $P_1(1-b)<P_2(1-a)$, $P_1>T_1$, $P_2>\breve{T}_2>T_2$ and $ab<\frac{1}{2}$  (these conditions are needed to realize the richest structure for the lower part of the boundary as shown in this figure). }
   \label{move-D}
 \end{figure}
      
Next, we consider that scenario that starting from point $A$, and for $0< \mu \leq 1$, one moves counterclockwise along boundary of the capacity region up to the line (or point) maximizing the sum-rate, i.e., $\mu =1$.  This is refereed to as the ``lower part'' of the boundary. The ``upper part'' of the boundary, i.e.,  $\mu \geq 1$ can be computed following a similar set of arguments as used for deriving the lower part, and applying exchanges in expression~\ref{changes}. To construct the lower part of the boundary, relying on nested optimality, the conditions for optimality of the solution for $G\!I\!C(P_1,P_2,a,b,1,1)$ is described in terms of optimality condition(s) for   
$G\!I\!C(P_1^{(private)},P_2^{(private)},a,b,1,1)$. The stationary condition for optimality necessitates that moving a $\delta$-layer from the private part of either users to its initially zero public part does not change $R_{ws}$ for the underlying $G\!I\!C(P_1^{(private)},P_2^{(private)},a,b,1,1)$, to be discussed next, but first we need some clarifications concerning the procedure governing movement of $\delta$-layers. 

\subsubsection{Why (Leftover) Public Layer(s) are Ignored in Computing $\delta R_{ws}$ due to Moving a  $\delta$-layer}
In Section~\ref{optimality-condition}, this article refers to moving a $\delta$-layer of power between public and private messages of user 1 and/or user 2. In some cases, this is simply refereed to as {\em allocating a $\delta$-layer of power to public or private messages of users 1 and/or to public or private messages of user 2}. The following question may arise: {\em where does this (apparently) extra layer of power come from?} This viewpoint is explained/justified next. 

The general structure under study includes public messages for user 1 and user 2 that form two multiple access channels, MAC1 and MAC2, with power of private messages acting as interference. Without loss of generality, to explain moving a $\delta$-layer from public to private, let us focus on user 1.  As far as computation of rate is concerned, we can assume public message of user 1 and public message of user 2 are jointly decoded, even if successive decoding would be possible. Under these conditions, let us divide the power of public message of user 1, namely $P_1^{public}$ to two parts, 
a first part of power $P_1^{public}-\delta P_1$ and a second part of power $\delta P_1$. Now let us consider an encoding scheme wherein the first part of the public message for user 1 is jointly encoded with the public message of user 2, while considering the second part of  the public message for user 1 (as well as the two private messages) as noise. Let us refer to the resulting multiple access channels as Reduced-MAC1 and Reduced-MAC2. From properties of multiple access channel, it is concluded that such a layering of public message of user 1 does not change the total rate due to public messages. One can decode the public messages in Reduced-MAC1 and Reduced-MAC2 relying on joint decoding, and then successively decode the second part of the public message for user 1. 
Now let us assume the second part of the public message for user 1 is added as a layer to the private message of user 1. It is easy to see that Reduced-MAC1 and Reduced-MAC2 see the same total noise, and consequently their contribution to $R_{ws}$ will not not be changed. Under these conditions, the changes in $R_{ws}$ due to the second part of the public message for user 1 being public or private can be computed without considering the effect of  public messages in Reduced-MAC1 and Reduced-MAC2. Upon removing public messages forming Reduced-MAC1 and Reduced-MAC2, the remaining configuration can be viewed as if an additional $\delta$-layer of power is available to either ``be allocated to the private part of user 1'', or to ``act as a single public layer for user 1''. Similar arguments apply to the case of user 2, or in moving a $\delta$-layer of power from private to public.   

 \subsection{Optimality Conditions: Nested Optimality of Private Messages}  \label{optimality-condition}

Let us assume at a given optimum point  (corresponding to a given $\mu$), the energies allocated to private messages are equal to $P_1^{(private)}=\hat{P}_1$ and $P_2^{(private)}=\hat{P}_2$, respectively. The optimality condition for user 1 is derived by computing  the changes in $R_{ws}$ of $G\!I\!C(\hat{P}_1,\hat{P}_2,a,b,1,1)$ due to relocating a Gaussian layer 
 $\delta P_1$ from the private part of user 1 to the public part of user 1.  Likewise, the optimality condition for user 2 is derived by computing  the changes in $R_{ws}$ of $G\!I\!C(\hat{P}_1,\hat{P}_2,a,b,1,1)$ due to relocating a Gaussian layer 
 $\delta P_2$ from the private part of user  2 to the public part of user 2. 
Relocating a $\delta$-layer for user 1 plays the same role as computing the derivative of $R_{ws}$ with respect to $\rho$, where $P_1^{(public)}=\rho P_1$ and $P_1^{(private)}=\hat{P}_1=(1-\rho) P_1$, subject to the constraint that the total power of user 1 is equal to $P_1$.  Likewise, relocating of a $\delta$-layer for user 2 plays the same role as computing the derivative of $R_{ws}$ with respect to $\theta$, where $P_2^{(public)}=\theta P_2$ and $P_2^{(private)}=\hat{P}_2=(1-\theta) P_2$, subject to the constraint that the total power of user 2 is equal to $P_2$.  The effect of moving a $\delta$-layer for each of the two users is  considered separately. This is equivalent to relying on sum of partial derivatives (with respect to $\rho$ and $\theta$) to compute the total change in $R_{ws}$ with respect to infinitesimal changes in $\rho$ and $\theta$.  Recalling basic properties of  partial derivatives,  this is justified if the corresponding segment of the boundary has a continuous slope, i.e., $\mu$ is continuous. The assumption is that the boundary is divided into segments, each with a continuous slope, and the discussions presented here concern moving within one such segment with a continuous slope. Section~\ref{optimality-condition2} proves that the slope is indeed continuous.

          \begin{figure}[htp]
   \centering
   \includegraphics[width=0.6\textwidth]{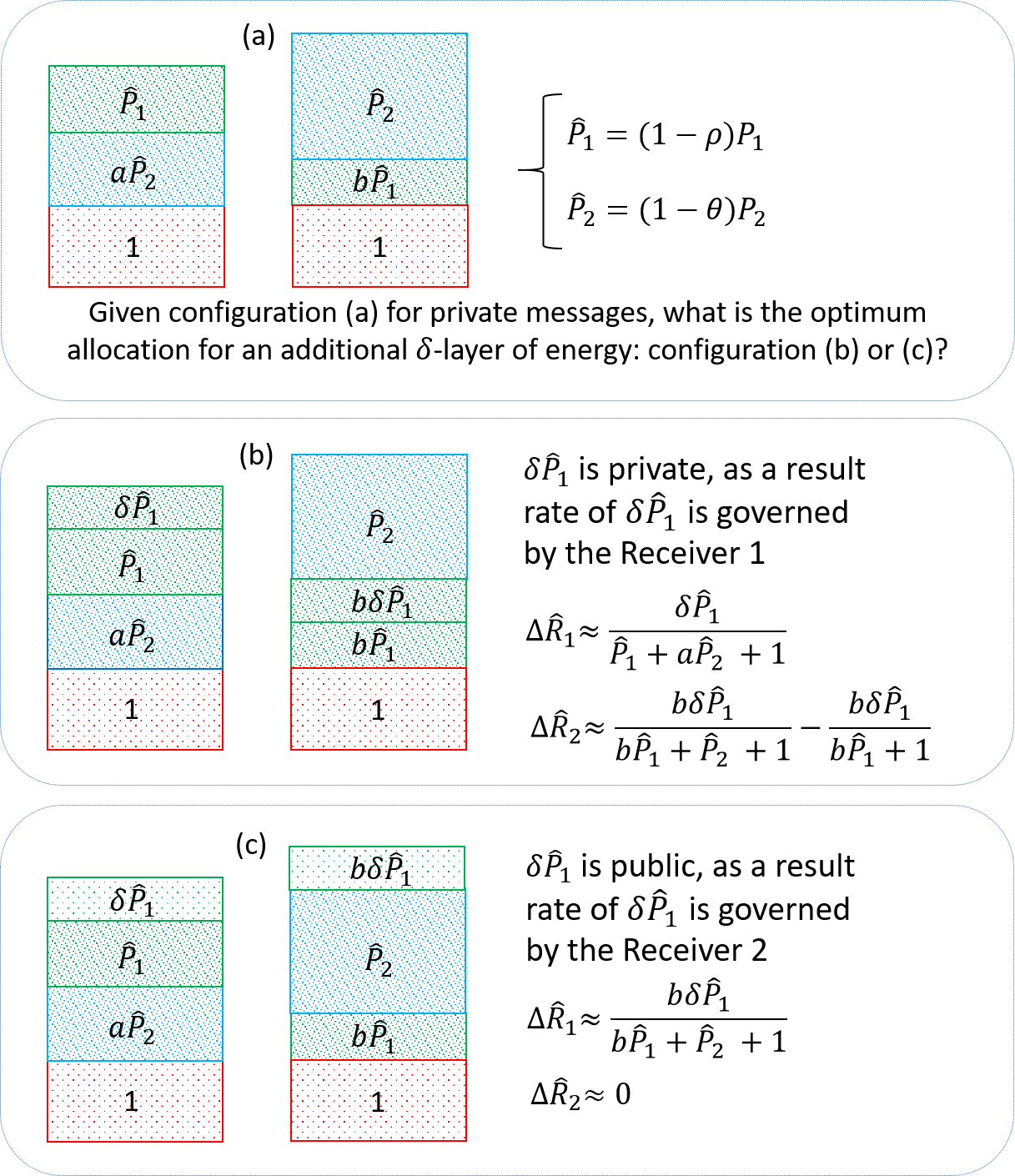}
   \caption{Optimality conditions for private layers (focusing on expressing the conditions for user 1).  The configuration in~\ref{derivative3}(a) is optimum if an additional $\delta$-layer of power for user 1 should be allocated to its public part, i.e.,  the configuration in~Fig.~\ref{derivative3}(c) results in a higher $R_{ws}$ as compared to the configuration in~Fig.~\ref{derivative3}(b). }
   \label{derivative3}
 \end{figure}
 
From Theorem~\ref{theorem3-TH}, in moving a 
 $\delta$-layer from private to public for user 1,   its rate will be governed by detectability constraint at user 2. Likewise, in moving a 
 $\delta$-layer from private to public for user 2,  its rate will be governed by detectability constraint at user 1. Referring to Fig.~\ref{derivative3}, the changes in the rate of user 1 if an additional $\delta$-layer is allocated to its private part is equal to, 
\begin{align} \nonumber 
 \mbox{Fig.}~\ref{derivative3}\mbox{(b):} & \\ \label{optimality-user1}   
  \delta \hat{R}_1= & 
\frac{1}{2}\log_2\left(\frac{1+a \hat{P}_2+\hat{P}_1+\delta \hat{P}_1}{1+a \hat{P}_2}\right)-
\frac{1}{2}\log_2\left(\frac{1+a \hat{P}_2+\hat{P}_1}{1+a \hat{P}_2}\right)
 \approx  \left(\frac{\delta \hat{P}_1}{\hat{P}_1+ a\hat{P}_2+1} \right)\mathsf{S} \\ \nonumber   
  \mbox{Fig.}~\ref{derivative3}\mbox{(b)}  & \\ \label{optimality-user1p} 
  \delta \hat{R}_2  =  &
\frac{1}{2}\log_2\left(\frac{b\hat{P}_1+b\delta\hat{P}_1+1+\hat{P}_2}{b\hat{P}_1+b\delta\hat{P}_1+1}\right)-
\frac{1}{2}\log_2\left(\frac{b\hat{P}_1+1+\hat{P}_2}{b\hat{P}_1+1}\right)
\approx  \left(\frac{b\delta \hat{P}_1}{b\hat{P}_1+\hat{P}_2+ 1}-\frac{b\delta \hat{P}_1}{b\hat{P}_1+1} \right)\mathsf{S}.  
  \end{align}
where $  \mathsf{S}=(1/2\ln(2))$ is the scale factor due to the change from $\log_2$ to $\log_e$ in capacity/rate expressions.  As the scale factor $\mathsf{S}$ will be canceled out in expressions of interest in this article,  it will be removed from expressions hereafter. 
Likewise, the changes in the rate of user 1 if an additional $\delta$-layer is allocated to its public part can be computed, resulting in 
\begin{align}    \label{optimality-user1z}
 \mbox{Fig.} &~\ref{derivative3}\mbox{(c)}  \Longrightarrow   \delta \hat{R}_1 =
\log_2\left(\frac{b\hat{P}_1+\hat{P}_2+1+b\delta\hat{P}_1}{b\hat{P}_1+\hat{P}_2+1}\right)-0  \approx  \left(\frac{b\delta \hat{P}_1}{b\hat{P}_1+ \hat{P}_2+1}
\right)\mathsf{S}  \\ \label{optimality-user1p-1}
  \mbox{Fig.}&~\ref{derivative3}\mbox{(c)}  \Longrightarrow   \delta \hat{R}_2  \approx 0.  
\end{align}
Again, the scale factor $\mathsf{S}$ will be removed, hereafter. 

 Optimum solution may be located at: (i) the boundary of the region governed by the corresponding power budgets, or (ii) at a stationary point. Case (i) corresponds to a situation in which the derivative of $R_{ws}$ with respect to $\rho$ (recall that the fraction of power allocated to the public message of user 1 is equal to $\rho P_1$) does not change sign. In other words, the equation obtained by setting the derivative (of $R_{ws}$ with respect to $\rho$) equal to zero does not have any valid solution in the admissible range of $0\leq\rho\leq1$. Case (ii) corresponds to a situation in which the derivative results in a valid solution. In the following, these arguments are expressed in the language of changes in $R_{ws}$ due to the allocation of an additional $\delta$-layer of power corresponding to user 1, i.e., $\delta P_1$, to the public vs. private message of user 1.  To differentiate between the two users, notations $\mu_1$ and $\mu_2$ are used (instead of $\mu$) in deriving the stationary conditions for users 1 and 2, respectively.  
 
From expressions \ref{optimality-user1} and \ref{optimality-user1p},  $\delta$-layer is added to the private part of user 1 if, 
 \begin{equation}
 \frac{1}{\hat{P}_1+ a\hat{P}_2+1}+\mu_1\left( \frac{b}{b\hat{P}_1+\hat{P}_2+ 1}-\frac{b}{b\hat{P}_1+1}\right) > \frac{b}{b\hat{P}_1+ \hat{P}_2+1}
\label{eq4}
\end{equation}
and it is added to the public part of user 1 if, 
 \begin{equation}
 \frac{1}{\hat{P}_1+ a\hat{P}_2+1}+\mu_1\left( \frac{b}{b\hat{P}_1+\hat{P}_2+ 1}-\frac{b}{b\hat{P}_1+1}\right) < \frac{b}{b\hat{P}_1+ \hat{P}_2+1}.\\
 \label{eq5}
\end{equation}
The condition for having a stationary solution for user 1, with some reordering of the terms, is expressed as,
 \begin{equation}
 \mu_1\left(\frac{b}{b\hat{P}_1+1}- \frac{b}{b\hat{P}_1+\hat{P}_2+ 1}\right) \stackrel{?}{=} \frac{1}{\hat{P}_1+ a\hat{P}_2+1}-\frac{b}{b\hat{P}_1+ \hat{P}_2+1}.\\
 \label{eq6}
\end{equation}

A similar set of expressions can be obtained for user 2 in terms of $\theta$, by applying the exchanges in expression~\ref{changes} to \ref{eq4}, \ref{eq5}, \ref{eq6}, summarized next. 
An additional $\delta$-layer for user 2 is added to the private part of user 2 if, 
 \begin{equation}
 \frac{1}{\hat{P}_2+ b\hat{P}_1+1}+\mu_2\left( \frac{a}{a\hat{P}_2+\hat{P}_1+ 1}-\frac{a}{a\hat{P}_2+1}\right) > \frac{a}{a\hat{P}_2+ \hat{P}_1+1}\\
\label{eq7}
\end{equation}
and it is added to the public part of user 2 if,
 \begin{equation}
 \frac{1}{\hat{P}_2+ b\hat{P}_1+1}+\mu_2\left( \frac{a}{a\hat{P}_2+\hat{P}_1+ 1}-\frac{a}{a\hat{P}_2+1}\right) < \frac{a}{a\hat{P}_2+ \hat{P}_1+1}.\\
\label{eq8}
\end{equation}
The condition for having a stationary solution for user 2 is expressed as,
 \begin{equation}
 \mu_2\left(\frac{a}{a\hat{P}_2+1}-\frac{a}{a\hat{P}_2+\hat{P}_1+ 1}\right) \stackrel{?}{=} \frac{1}{\hat{P}_2+ b\hat{P}_1+1}-\frac{a}{a\hat{P}_2+ \hat{P}_1+1}.
\label{eq9}
\end{equation}
The optimality condition for the lower part of the capacity boundary, i.e., $0\leq\mu\leq1$, requires,
    \begin{eqnarray} 
0\leq \mu=\min(\mu_1,\mu_2)\leq 1,~&~\mbox{if}~~ \mu_1\in[0,1]~~\mbox{and}~~\mu_2\in[0,1] 
\\
0\leq \mu=\mu_1\leq 1,~&~\mbox{if}~~\mu_1\in[0,1]~~\mbox{and}~~\mu_2\notin[0,1] \\
0\leq \mu=\mu_2\leq 1,~&~\,\mbox{if}~~\mu_1\notin[0,1]~~\mbox{and}~~\mu_2\in[0,1]. 
\label{op-mu}
    \end{eqnarray}

Figure~\ref{move-D} depicts the details of moving from point $C$ to point $S$ by dividing point $D$ into three points, $D_1$, $D_2$ and $D_3$. Following discussions provide a sketch of the proof for the optimality of Fig.~\ref{move-D}, and its associated optimality conditions.

Using~\ref{eq6}, $\mu_1$  can be calculated as follows:
\begin{equation}
\mu_1 =\frac{\left({b \hat{P}_1 +1}\right) \left({\hat{P}_2-b-ab \hat{P}_2+1}\right)}{b \hat{P}_2 \left({\hat{P}_1+a \hat{P}_2+1}\right)}.
\label{DD1}
\end{equation}

From~\ref{DD1}, it is concluded that
\begin{align}
\frac{\partial \mu_1}{\partial \hat{P}_2}&=-\frac{\left({b \hat{P}_1 +1}\right) \left({\hat{P}_1-b+2a \hat{P}_2 -b \hat{P}_1+a \hat{P}^2_2-a^2b \hat{P}^2_2-2ab \hat{P}_2+1}\right)}{b \hat{P}_2 \left({\hat{P}_1+a\hat{P}_2+1}\right)^2}&\nonumber \\
&=\frac{\left({b\hat{P}_1+1}\right)\left({\left({a^2b-a}\right)\hat{P}^2_2+2a(b-1)\hat{P}_2+(b-1)(1+\hat{P}_1)}\right)}{b\hat{P}_2^2 \left({\hat{P}_1+a \hat{P}_2+1}\right)^2}. &
\label{DD3}
\end{align}
As the terms $(b\hat{P}_1+1)$  and $b\hat{P}_2  \left({\hat{P}_1+a \hat{P}_2+1}\right)^2$ in \ref{DD3} are positive, we focus on the second order term 
\begin{equation}
\left({a^2b-a}\right)\hat{P}^2_2+2a (b-1) \hat{P}_2+(b-1)(1+\hat{P}_1).
\label{DD4}
\end{equation}
Since $a,b <1$, it follows that $(a^2 b -a)<0$ . Expression~\ref{DD4} has two  roots with a sum of $\frac{2(1-b)}{(ab-1)}<0$ and a product of $\frac{(b-1)(1+\hat{P}_1)}{a(ab-1)}>0$, consequently, the two roots of~\ref{DD4} are negative. This means 
\begin{equation}
\left({a^2b-a}\right)\hat{P}^2_2+2a (b-1) \hat{P}_2+(b-1)(1+\hat{P}_1)<0 ~\Longrightarrow~\frac{\partial \mu_1}{\partial \hat{P}_2}<0,~~\forall \hat{P}_2>0
\label{DD5}
\end{equation}
resulting in, 
\begin{equation}
\frac{\partial \mu_1}{\partial \hat{P}_2}<0~\mbox{(always)}~
\xRightarrow{\mbox{Applying~exchanges~in~\ref{changes}}} 
\frac{\partial \mu_2}{\partial \hat{P}_1}<0~\mbox{(always)}.
\label{DD5p}
\end{equation}
From~\ref{DD1}, for $\frac{\partial \mu_1}{\partial \hat{P}_1}$, we have  
\begin{align}
\frac{\partial \mu_1}{\partial \hat{P}_1}&=-\frac{a^2 b^2 \hat{P}^2_2-ab \hat{P}^2_2+2ab^2 \hat{P}_2-2ab \hat{P}_2-b \hat{P}_2+\hat{P}_2+b^2-2b+1}{b \hat{P}_2 \left({\hat{P}_1+a\hat{P}_2+1}\right)^2}&\nonumber \\
&=-\frac{\left({a^2 b^2 - ab}\right)\hat{P}^2_2+\left({2ab^2-2ab-b+1}\right) \hat{P}_2+b^2-2b+1}{b \hat{P}_2 \left({\hat{P}_1+a\hat{P}_2+1}\right)^2}.&
\label{DD6}
\end{align}
In~\ref{DD6}, the term $b \hat{P}_2 \left({\hat{P}_1+a\hat{P}_2+1}\right)^2$  in the denominator is positive. Roots of the term 
\begin{equation}
\left({a^2 b^2 - ab}\right)P^2_2+\left({2ab^2-2ab-b+1}\right) \hat{P}_2+b^2-2b+1
\end{equation} 
in~\ref{DD6} can be computed as follows:
\begin{equation}
\hat{P}_2=\frac{-(2ab^2-2ab-b+1) \pm \sqrt{(2ab^2-2ab-b+1)^2-4(b^2-2b+1)(a^2 b^2 - ab)}}{2(a^2 b^2-ab)}.
\label{DD7}
\end{equation}
Expression~\ref{DD7} has one positive and one negative root, since the multiplication of its roots has negative value of $\frac{-(b-1)^2}{(ab-a^2b^2)}$. Computing the roots of~\ref{DD7}, it follows that $\frac{\partial \mu_1}{\partial \hat{P}_1}>0$ if 
\begin{align}
\hat{P}_2 &< \frac{b-1}{ab-1}<0\\
\hat{P}_2 &> \frac{(1-b)}{ab}>0.
\end{align}
As a result,
\begin{equation} 
\frac{\partial \mu_1}{\partial \hat{P}_1}>0~~\mbox{for}~~\hat{P}_2 > \frac{1-b}{ab}
 ~\xRightarrow{\mbox{Applying~exchanges~in~\ref{changes}}}
\frac{\partial \mu_2}{\partial \hat{P}_2}>0~~\mbox{for}~~\hat{P}_1 > \frac{1-a}{ab}.
\label{DD8p}
\end{equation}

As mentioned earlier, the focus of discussions is on the structure of the lower part of the boundary. Arguments for the upper part will be very similar and can be obtained by applying exchanges in \ref{changes} to what is concluded/computed for the lower part. Focusing on the lower part, in the following, it is assumed $\hat{P}_2\geq T_2=\frac{1-b}{ab}$ (otherwise, the lower part of the boundary will trivially end prior to point $D_1$ in Fig.~\ref{move-D})\footnote{Referring to Fig.~\ref{move-D}, at points $D_1$ and $S$, we have $\hat{P}_2=T_2$, while $\hat{P}_2>T_2$ all other points as one moves along the lower part from point $D_1$ to point $S$. }. It is also assumed $\hat{P}_1>T_1=\frac{1-a}{ab}$ (otherwise, the lower part of the boundary will be trivially composed of only a private message of power $P_1$ for user 1).  This means,  from $D_1$ and $S$, we have 

\begin{equation}
\frac{\partial \mu_1}{\partial \hat{P}_1}>0~\mbox{(always)}~
\xRightarrow{\mbox{Applying~exchanges~in~\ref{changes}}} 
\frac{\partial \mu_2}{\partial \hat{P}_2}>0~\mbox{(always)}.
\label{DD5p2}
\end{equation}

To provide some insight into the behavior of the boundary, Table \ref{table1} summarizes the changes in $\hat{P}_1$ and $\hat{P}_2$ following points specified on Fig.~\ref{move-D}. Notations $\mu(D_2)$ and $\mu(D_3)$ are used to show the slope at points $D_2$ and $D_3$, respectively. 
\begin{table}[htbp]
\begin{center} 
\begin{tabular}{||c|c|c|c|c||}   \hline
$A \rightarrow  D_2$ & $\hat{P}_1=P_1$ 
& $\hat{P}_2\nearrow$   & 
$\mu_1$ not valid; $\mu_2\nearrow \mu(D_2)$  & 
$X_1$ is entirely private  \\ \hline \hline
$D_2 \rightarrow  D_3$ & $\hat{P}_1=P_1$ 
& $\hat{P}_2 \nearrow \breve{T}_2$  & 
$\mu_1\searrow \mu(D_3)$~\mbox{and}~$\mu_2\nearrow \mu(D_3)$ & 
$X_1$ is entirely private  \\ \hline \hline
$D_3\rightarrow S$ & $\hat{P}_1\searrow T_1$  & $\hat{P}_2\searrow T_2$  &
$\mu_1\nearrow 1$~\mbox{and}~$\mu_2\nearrow 1$ & MAC1, MAC2 formed\\ 
\hline
\end{tabular}
\end{center}
\caption{Changes in $\hat{P}_1$ and $\hat{P}_2$ following points specified on Fig.~\ref{move-D}. In moving from point $A$ towards point $D_3$, $\hat{P}_2$ increases, reaching its maximum of $\breve{T}_2$ at $D_3$, and $\hat{P}_1$ remains at its maximum possible value of $P_1$.  
As $\mu=\min(\mu_1,\mu_2)$ increases beyond point $D_3$, $\hat{P}_1$ starts reducing from its maximum (boundary) value of $P_1$, ultimately reaching to $T_1$ at point $S$,  and $\hat{P}_2$ reduces from $\breve{T}_2$, ultimately reaching $T_2$ at point $S$.}
\label{table1}
\end{table}

 \subsubsection{Step by Step Covering of the Lower Part of the Boundary} 

 This article is concerned with conditions that the lower part of the boundary posses its richest construction, in the sense that the points from $A$ to $S$ are traversed as depicted in Fig.~\ref{move-D} and the lower part of the boundary finally becomes tangent to the sum-rate front. This requires imposing some conditions, which are derived throughout the article, and are shown to be feasible (by adjusting the power budgets vs. channel gains $a$ and $b$). Other cases for the structure of the lower part that are not discussed here can be concluded by studying the scenarios  that each of the derived conditions fails to be satisfied. 

As we will see later, to realize the richest structure for the lower part of the boundary curve, we are interested in the conditions that $\mu_1$ is a decreasing function of $\hat{P}_2$ for given $\hat{P}_1$ (always true referring to~\ref{DD5p}),  and  an increasing function of $\hat{P}_1$ for given $\hat{P}_2$  (always true referring to~\ref{DD5p2}). Likewise, $\mu_2$ is a decreasing function of $\hat{P}_1$ for given $\hat{P}_2$ (always true referring to~\ref{DD5p}),  and  an increasing function of $\hat{P}_2$ for given $\hat{P}_1$  (always true referring to~\ref{DD5p2}). Exceptions to this structure happens when the lower part of the boundary reaches the sum-rate front prior to completing the structure shown in Fig.~\ref{move-D}. This can happen in two forms: (i) The lower part of the boundary, prior to completing the sequence of points shown in Fig.~\ref{move-D}, intersects with the sum-rate front at an angle other than $\pi/4$. (ii) The lower part of the boundary, prior to completing completing the sequence of points shown in Fig.~\ref{move-D}, becomes tangent to the sum-rate front, i.e., it reaches the sum-rate front at an angle of $\pi/4$.  

\begin{theorem}
Assuming: (i) $P_2>\breve{T}_2>T_2$,  and (ii) $P_1(1-b)<P_2(1-a)$, movement from point $C$ to point $S$ is as shown in Fig.~\ref{move-D}. 
\end{theorem}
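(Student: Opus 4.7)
The plan is to decompose the trajectory from $C$ to $S$ into the three sub-phases already hinted at by Table~\ref{table1}, namely $C\!\to\!D_2$, $D_2\!\to\!D_3$, and $D_3\!\to\!S$, and to verify in each sub-phase that the stationary/boundary conditions (\ref{eq4})--(\ref{eq9}) are exactly those tracked by the figure. The two hypotheses enter as follows: condition (i) $P_2>\breve{T}_2>T_2$ guarantees that the boundary does not exhaust the power budget of user~2 before reaching $D_3$, and condition (ii) $P_1(1-b)<P_2(1-a)$ will force $\mu_1$ to stay outside $[0,1]$ on the first sub-segment so that $\hat{P}_1=P_1$ throughout $C\!\to\!D_3$. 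The monotonicity relations (\ref{DD5p}) and (\ref{DD5p2}), together with the closed-form expression (\ref{DD1}) for $\mu_1$ (and its image under the exchange~\ref{changes} for $\mu_2$), are the engine of all three phases.

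For $C\!\to\!D_2$, I would first insert $\hat{P}_1=P_1$ into (\ref{DD1}) and into the analogous formula for $\mu_2$ from (\ref{eq9}), and use (ii) to show that (\ref{eq4}) holds strictly for user~1 (so the boundary condition $\hat{P}_1=P_1$ is locked in), while (\ref{eq9}) admits a valid $\mu_2\in[0,1]$. Since $\partial\mu_2/\partial\hat{P}_2>0$ by (\ref{DD5p2}) (applicable because $P_1>T_1$), $\mu_2$ increases monotonically with $\hat{P}_2$, tracing out the boundary until $\mu_1$ first enters $[0,1]$; that event defines $D_2$. For $D_2\!\to\!D_3$, still with $\hat{P}_1=P_1$, both $\mu_1,\mu_2\in[0,1]$, but one verifies from the sign of $\mu_1-\mu_2$ at $D_2$ that $\mu=\min(\mu_1,\mu_2)=\mu_2$, so the stationary condition (\ref{eq9}) continues to govern the move. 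By (\ref{DD5p}), $\mu_1$ decreases and by (\ref{DD5p2}) $\mu_2$ increases in $\hat{P}_2$; they must cross at some $\hat{P}_2=\breve{T}_2$, which is the defining equation for $D_3$ (and where condition (i) guarantees $\breve{T}_2<P_2$, i.e., that this crossing is attained inside the admissible box).

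For $D_3\!\to\!S$, both stationary conditions (\ref{eq6}) and (\ref{eq9}) must hold simultaneously, so $(\hat{P}_1,\hat{P}_2)$ trace an implicit curve parameterized by $\mu$. I would differentiate the system implicitly and combine it with (\ref{DD5p}) and (\ref{DD5p2}) to obtain
\[
\frac{d\hat{P}_1}{d\mu}<0,\qquad \frac{d\hat{P}_2}{d\mu}<0,
\]
i.e.\ both private powers are strictly decreasing as $\mu$ grows. Substituting $\mu_1=\mu_2=1$ into (\ref{eq6}) and (\ref{eq9}) and solving gives $\hat{P}_1=T_1=(1-a)/(ab)$ and $\hat{P}_2=T_2=(1-b)/(ab)$, so that $S$ is precisely the endpoint where the construction meets the sum-rate front tangentially. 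Finally, the monotonicity in $\mu$ ensures the trajectory from $D_3$ to $S$ never exits the admissible region $\hat{P}_1\leq P_1$, $\hat{P}_2\leq P_2$, which is exactly the role of conditions (i) and (ii).

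The main obstacle is Phase~3: jointly handling two implicit stationary equations and showing that the curve $(\hat{P}_1(\mu),\hat{P}_2(\mu))$ is well-defined, strictly decreasing in both coordinates, and terminates at $(T_1,T_2)$ when $\mu=1$. The monotonicity bounds derived in (\ref{DD5p})--(\ref{DD5p2}) are tailored for exactly this step, but the sign computation of the Jacobian of the system $(\mu_1,\mu_2)$ with respect to $(\hat{P}_1,\hat{P}_2)$ requires a careful accounting to confirm invertibility and the correct orientation of the trajectory; I expect this to be the most technically delicate calculation. The transition points $D_2$ and $D_3$ themselves are algebraically light: $D_2$ is characterized by $\mu_1|_{\hat{P}_1=P_1}=1$ and $D_3$ by $\mu_1=\mu_2$ with $\hat{P}_1=P_1$, both of which reduce to single-variable equations in $\hat{P}_2$.
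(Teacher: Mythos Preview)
Your phase decomposition $C\!\to\!D_2\!\to\!D_3\!\to\!S$ and reliance on the monotonicity relations (\ref{DD5p})--(\ref{DD5p2}) match the paper's argument closely. There is, however, a genuine gap: you have misidentified the role of hypothesis~(ii). The locking $\hat{P}_1=P_1$ on $C\!\to\!D_3$ does \emph{not} come from $P_1(1-b)<P_2(1-a)$. On the first sub-segment it follows simply because $\hat{P}_2$ is small, so that $\mu_1>1$ (inspect (\ref{DD1}) as $\hat{P}_2\to 0^+$, or use (\ref{eq10})); on $D_2\!\to\!D_3$ it follows because $\mu_1>\mu_2$, so $\mu=\min(\mu_1,\mu_2)=\mu_2$ and (\ref{eq6}) is not the active constraint. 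Neither step invokes (ii). In the paper, hypothesis~(ii) enters only at the very end, through (\ref{sum1})--(\ref{sum2}): it selects which of the two MAC sum-rates is the smaller one and thereby guarantees that the lower boundary becomes \emph{tangent} to the sum-rate front at $S$, rather than hitting it transversally at some earlier point before $(\hat{P}_1,\hat{P}_2)$ reaches $(T_1,T_2)$. Your proposal never invokes (ii) for this purpose, so the claim that the trajectory actually terminates at $S$ (and not earlier) is left unjustified.

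For the segment $D_3\!\to\!S$, your implicit-differentiation route is a legitimate alternative, but the paper's argument is more elementary and sidesteps precisely the Jacobian-sign computation you flag as delicate. Instead of solving for $d\hat{P}_i/d\mu$, the paper argues with alternating $\delta$-steps: immediately past $D_3$, attempting to continue increasing $\hat{P}_2$ would drive $\mu_1<\mu_2$, making $\mu_1$ the new bottleneck; relieving it would require either decreasing $\hat{P}_2$ (which returns to $D_3$) or increasing $\hat{P}_1$ (infeasible since $\hat{P}_1=P_1$ already). Hence the first admissible step must \emph{decrease} $\hat{P}_1$; this flips the bottleneck to $\mu_2$, forcing the next step to decrease $\hat{P}_2$; and the alternation continues down to $(T_1,T_2)$. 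This operational argument never needs the sign of the $2\times 2$ determinant. Your approach would also succeed if you carry out that determinant computation, but the paper's route is the lighter one here.
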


First, it should be noted that, from Eq.~\ref{DD5p}, $\mu_1\leq 1$ is a decreasing function of $\hat{P}_2$ for fixed $\hat{P_1}$. 
Formation of the segment from point $C$ to point $D_1$ can be easily verified by manipulating Eq.~\ref{eq6} and  Eq.~\ref{eq9}. It follows that, from point $C$ to point $D_1$,  Eq.~\ref{eq9}, which is satisfied with equality, acts as the sole bottleneck in determining $\mu=\mu_2$. From point $D_1$ to point $D_2$, we have $\mu_1>1$, which means Eq.~\ref{eq9} continues to act as the sole bottleneck in determining $\mu=\mu_2$. 
From earlier discussions, in this range $\mu_1$ monotonically decreases (see Eq.~\ref{DD5p}) and $\mu_2$ monotonically increases (see Eq.~\ref{DD5p2}). Finally, at point $D_2$, $\mu_1$  decreases to one, meaning that beyond $D_2$, Eq.~\ref{eq6} has the potential to act as the bottleneck in determining $\mu$. In mathematical terms, solving equation~\ref{eq6} for $\mu_1$, it is  concluded that  
 \begin{equation}
\mu_1 \leq 1, ~~\mbox{if}~~\hat{P}_2\geq T_2.
\label{eq10}
\end{equation}
Similarly, solving equation~\ref{eq9} for $\mu_2$, it is concluded that
   \begin{equation}
\mu_2 \leq 1, ~~\mbox{if}~~\hat{P}_1\geq T_1.
\label{eq11}
\end{equation}
However, from point $D_2$ to point $D_3$, we have $\mu_1>\mu_2$, and consequently, $\mu_2$ and Eq.~\ref{eq9} continue to govern the bottleneck. From point $D_2$ to point $D_3$, $\hat{P}_1$ remains at the maximum value of $P_1$ and $\hat{P}_2$ continues to increase, resulting in further reduction in $\mu_1<1$ and further increase in $\mu_2<1$. It follows that, eventually, at some point $D_3$, the increasing $\mu_2<1$ reaches the decreasing $\mu_1<1$. At point $D_3$, we have $\hat{P}_2=\breve{T}_2>T_2$ and $\hat{P}_1=P_1$. The value of $\breve{T}_2$ can be computed by setting the value of $\mu_1$ from Eq.~\ref{eq6} equal to the value of $\mu_2$ from Eq.~\ref{eq9}, and replacing $\hat{P}_1=P_1$. This results in,
\begin{align} \label{Mu1}
\mu_1=\mu_2 & ~\Longleftrightarrow~ \frac{\left(b P_1 +1\right) \left(\hat{P}_2-b-ab \hat{P}_2+1\right)}{b \hat{P}_2 \left(P_1+a \hat{P}_2+1\right)}=
\frac{\left(a \hat{P}_2 +1\right) \left(P_1-a-ab P_1+1\right)}{a P_1 \left(\hat{P}_2+b P_1+1\right)}. 
\end{align}
Equation~\ref{Mu1} can be solved in terms of $\hat{P}_2$ with a  positive root equal to $\breve{T}_2$. Exact solution to this  equation is not discussed in this work.  Finally, from point $D_3$ to point $S$ (start of the sum-rate front approached from the lower part of the boundary), the value of $\mu_1$ from Eq.~\ref{eq6} and the  value of $\mu_2$ from Eq.~\ref{eq9} continue to be equal. In moving from point $D_3$ to point $S$, subsequent infinitesimal steps are formed by alternating between 
Eq.~\ref{eq6} and Eq.~\ref{eq9} as the determining factor (bottleneck) governing movements of $\delta$-layers of power. The first $\delta$-step beyond point $D_3$ should be accompanied by an increase in $\mu_2$. This goal can be achieved by either continuing to increase $\hat{P}_2$ (similar to $\delta$-steps prior to point $D_3$), or by decreasing  $\hat{P}_1$. Note that  in this range, $\mu_2$ is a decreasing function of  $\hat{P}_1=P_1$ (for a fixed $\hat{P}_2$) and an increasing function of $\hat{P}_2$ (for a fixed $\hat{P}_1$). Let us focus on the first $\delta$-step beyond point $D_3$.  Let us first assume the first $\delta$-step is taken, like steps before point $D_3$, by further  increasing $\hat{P}_2$. This would reduce $\mu_1$, resulting in $\mu_1\lnapprox\mu_2$. Consequently, $\mu_1$ would become the bottleneck at this new point, meaning that subsequent step should reduce $\mu_2$ and/or increase $\mu_1$. This can be achieved by either reducing $\hat{P}_2$, or by increasing $\hat{P}_1=P_1$. The first option of reducing $\hat{P}_2$ would result in returning to point $D_3$, and the second option of increasing $\hat{P}_1$ is not possible as $\hat{P}_1$ is already at its maximum possible value of $P_1$. In the language of computing optimum solution using derivatives, this situation entails solution with respect to $\hat{P}_1$ is located  on the boundary of feasible region of the underlying optimization problem. In summary, this means the the first $\delta$-step beyond point $D_3$ cannot be based on increasing $\hat{P}_2$. In conclusion, we have: (i) the first $\delta$-step beyond point $D_3$ shall be taken by reducing $\hat{P}_1$, causing an increase in $\mu_2$ and a reduction in $\mu_1$, resulting in $\mu_1$ becoming the new bottleneck. (ii) The $\delta$-step beyond this first step should be taken by reducing $\hat{P}_2$, resulting in an increase in $\mu_1$ and a reduction in $\mu_2$, causing $\mu_2$ to become the new bottleneck. In subsequent $\delta$-steps from point $D_3$ to point $S$, the above two steps of (i) and (ii) alternate, further reducing both $\hat{P}_1$ and $\hat{P}_2$. Finally, at point $S$, we have $\hat{P}_1=T_1$ and $\hat{P}_2=T_2$.  

In summary, in moving from point $D_1$ to point $S$, initially  $\hat{P}_1$ remains fixed at its maximum possible value of $P_1$ and $\hat{P}_2$ increases until $\hat{P}_2=\breve{T}_2$, and then, both $\hat{P}_1$ and $\hat{P}_2$ decrease until reaching the point $S$. As will be discussed later (see Eq.~\ref{sum1} and Eq.~\ref{sum2}), the case shown in Fig.~\ref{move-D} where point $S$ is realized on the lower part of the boundary occurs if $P_1(1-b)<P_2(1-a)$. 

Steps along the sum-rate front do not entail changing the amount of power allocated to private and public messages of the two users, it only proceeds by relying on different layered code-books for the public parts to achieve different points on the sum-rate fronts. Sum-rate front finally reaches point $\breve{S}$ in Fig.~\ref{move-D}, namely the first point on the sum-rate front when the upper part of the capacity boundary is traversed clockwise. Next, we discuss the structure of the sum-rate front in more details.   

Figure~\ref{sum-rate-simple} depicts structures of Gaussian code-books corresponding to the public messages at point $S$, and Fig.~\ref{private} depicts the structures of Gaussian code-books for corresponding private messages. It follows that the total rate contributed to $R_{sum}$ from private messages is equal to: $0.5\log_2(1/ab)$. The total rate contributed to $R_{sum}$ from public messages is governed by the smaller of the two sum-rate values corresponding to user 1 (sum-rate in MAC1) and user 2 (sum-rate in MAC2), as depicted in Fig.~\ref{sum-rate-simple}. The reason is that public messages in Fig.~\ref{sum-rate-simple} should be decoded at both receivers, and consequently, the smaller of the two sum-rate (intersections of the MAC channels in Fig.~\ref{sum-rate-simple}) will determine the contribution of the public messages to the overall sum-rate. It follows that, 

\begin{equation}
R_{sum}=\min\left[0.5\log_2\left(\frac{1}{ab}+P_1-T_1+a(P_2-T_2)\right), 0.5\log_2\left(\frac{1}{ab}+P_2-T_2+b(P_1-T_1)\right)\right]
\label{sum1}
\end{equation}
or, replacing  for $T_1=\frac{1-a}{ab}$ and $T_2=\frac{1-b}{ab}$ , 
\begin{eqnarray}
R_{sum}=0.5\log_2\left(\frac{1}{ab}+P_1-T_1+a(P_2-T_2)\right),~~\mbox{if}~~P_1(1-b)<P_2(1-a),  \\
R_{sum}=0.5\log_2\left(\frac{1}{ab}+P_2-T_2+b(P_1-T_1)\right),~~\mbox{if}~~P_1(1-b)>P_2(1-a).
\label{sum2}
\end{eqnarray}
For $P_1(1-b)<P_2(1-a)$, the sum-rate front coincides with the sum-rate front of the MAC channel formed at receiver 1, and  the lower part becomes tangent to the sum-rate front at point $S$. In this case, the lower part of the capacity region will be as shown in Fig.~\ref{move-D} (point $S$ falls on the boundary, while point $\hat{S}$ falls outside boundary) with code-books as depicted in Figs.~\ref{sum-rate-simple} and \ref{private}. 

For $P_1(1-b)>P_2(1-a)$, sum-rate front would be governed by receiver 2. In this case, the lower part of the boundary, prior to reaching to point $S$, would intersect with the sum-rate front at an angle other than $\pi/4$, and consequently,  point $S$ would fall outside the boundary. On the other hand, the upper part of the curve would become tangent to the sum-rate front at point $\hat{S}$. 

This entails either the lower part is tangent to the sum-rate front and the upper part intersects with the sum-rate front, or vice versa. It follows that, if $P_1(1-b)=P_2(1-a)$, then both the lower part and the upper part will be tangent to the sum-rate front. Note that for $P_1=T_1$ and $P_2=T_2$, the boundary coincides with the sum-rate front. It is easy to verify that the condition $(P_1,P_2)=(T_1,T_2)$, results in $P_1(1-b)=P_2(1-a)$, as expected. 

From the above discussions, it follows that (under the condition of Fig.~\ref{move-D}), the value of $\mu$ is a continuous function in the range $[\mu_A,1]$ in the lower part of the boundary, and a continuous function in the range  $[\mu_{\hat{S}},\mu_{\hat{A}}]$ in the upper part of the boundary, where ${\hat{A}}$ is the starting (corner) point on the upper part of the boundary (with maximum possible value for $R_2$ irrespective of the value of $R_1$).

  \begin{figure}[htp]
   \centering
   \includegraphics[width=0.45\textwidth]{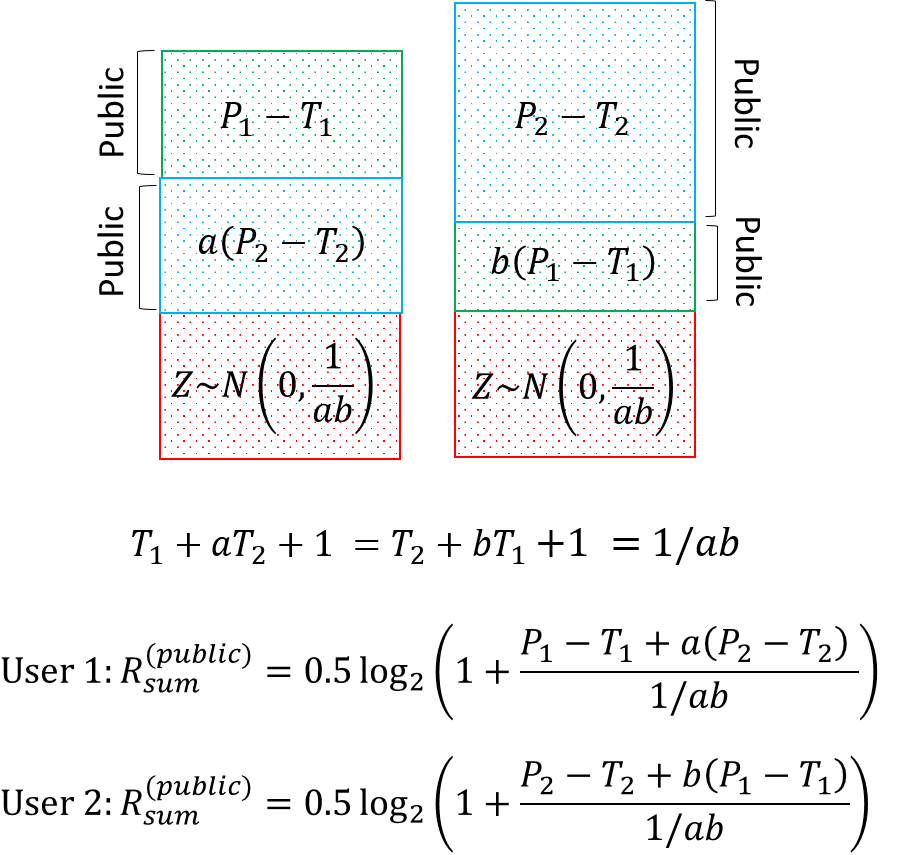}
   \caption{Structure of Gaussian code-books corresponding to public messages at point $S$.}
   \label{sum-rate-simple}
 \end{figure}

  \begin{figure}[htp]
   \centering
   \includegraphics[width=0.55\textwidth]{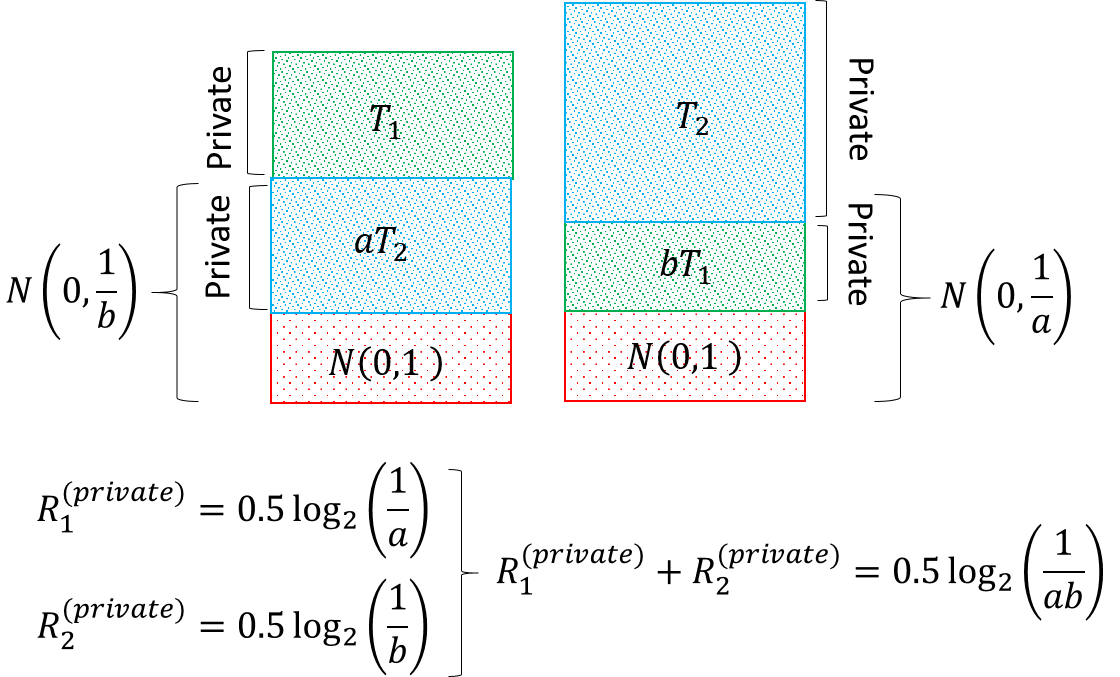}
   \caption{Structure of Gaussian code-books corresponding to private messages at point $S$.}
   \label{private}
 \end{figure}

 \subsubsection{Some Special Cases: $P_1=T_1$ and/or $P_2=T_2$}  

 In the following, two special cases of $P_1=T_1$ and/or $P_2=T_2$ will be discussed. 
Let us consider formation of the lower part starting from point $A$ in the case of $P_1=T_1$. At point $A$, message of user 1 is entirely private, as a result,
\begin{equation}
 P_1=T_1~\Longrightarrow~ \hat{P}_1=T_1.
\end{equation}
Referring to computations in Fig.~\ref{sum-rate-tradeoff2}, it is concluded that for $P_1=T_1$, regardless of how the power of of user 2 is divided between its public and private components, the sum-rate remains the same as that of point $A$. This means, for $P_1=T_1$, the lower part of the boundary coincides with the sum-rate front. Likewise, for $P_2=T_2$, the upper part of the boundary coincides with the sum-rate front (see Fig.~\ref{sum-rate-tradeoff}).  It follows that, as mentioned earlier, if $P_1=T_1$ and $P_2=T_2$, the sum-rate front forms the entire boundary. 
  
          \begin{figure}[htp]
   \centering
   \includegraphics[width=0.55\textwidth]{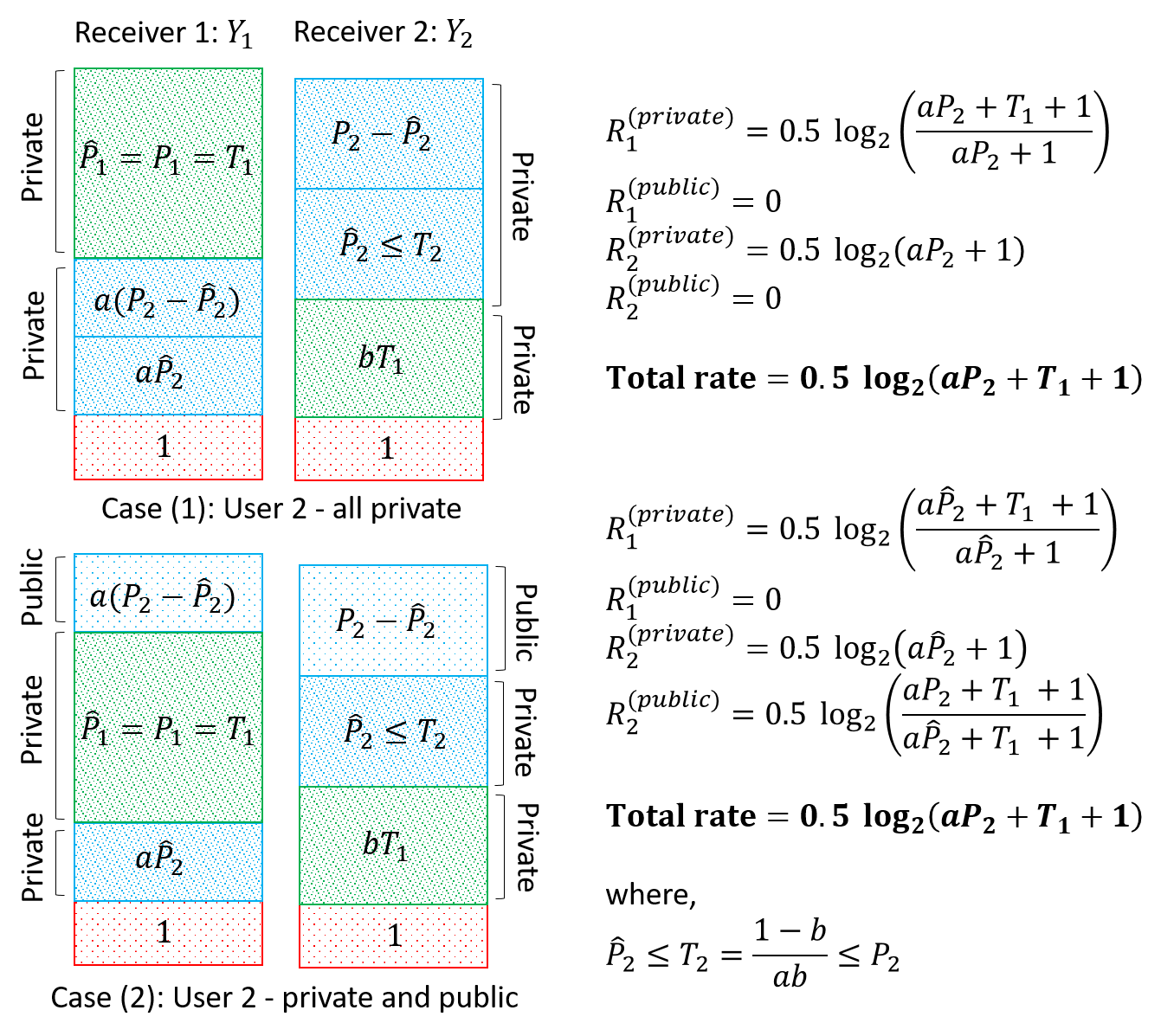}
   \caption{For $P_1=T_1$, moving along the lower part of the boundary (coinciding with sum-rate front)  by changing the ratio of power allocated to private and public messages for user 2.}
   \label{sum-rate-tradeoff2}
 \end{figure} 

          \begin{figure}[htp]
   \centering
   \includegraphics[width=0.55\textwidth]{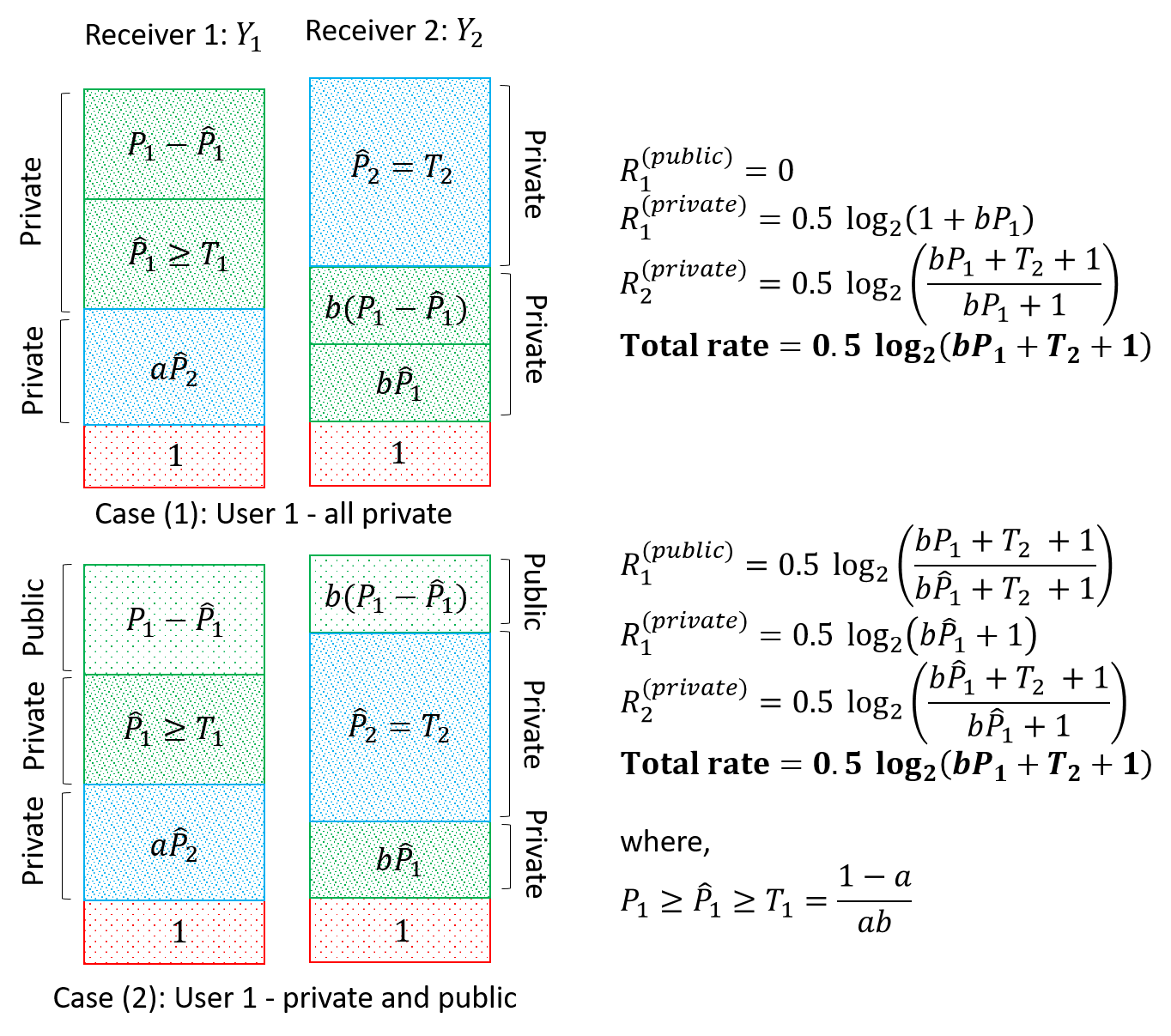}
   \caption{For $P_2=T_2$, moving along the upper part of the boundary (coinciding with sum-rate front)  by changing the ratio of power allocated to  private and public messages for user 1.}
   \label{sum-rate-tradeoff}
 \end{figure}
 
      \subsection{Intersection of Multiple Access Channels due to Public Messages}\label{sec-inter}

Consider scenarios that both user 1 and user 2 has a non-zero public message.  Considering private messages as noise, the public messages $U_1$ and $U_2$ form a MAC1 at $Y_1$ and a MAC2 at $Y_2$. The requirement for decoding of the two public messages at both receivers entails that $U_1$ and $U_2$ should be at the intersection of the two MACs formed at the two receivers. Let us rely on Fig.~\ref{MAC1-2d}\,(1) to define the parameters involved in determining the shape of the intersection of the two MACs.  
In general, the shape of the intersection is governed by  relative positions of rate values $R^{+}_1(1)$, $R^{-}_1(1)$, $R^{+}_1(2)$, $R^{-}_1(2)$ along the $R_1$-axis, and relative positions of rate values $R^{+}_2(2)$, $R^{-}_2(2)$, $R^{+}_2(1)$, $R^{-}_2(1)$ along the $R_2$-axis.   To compare these rate values,  we compare the  corresponding Signal-to-Noise Ratios (noting that all rate values are computed as the capacity of a relevant AWGN channel, formed as part of a corresponding MAC). We have,

          \begin{eqnarray}
           \mbox{SNR}_{ R_1^{+}(1)} & = & \frac{\rho P_1}{\sigma_1^2} \\
 \mbox{SNR}_{R_1^{-}(1)} & = & \frac{\rho P_1}{a \theta P_2+\sigma_1^2} \\
              \mbox{SNR}_{ R_1^{+}(2)} & = & \frac{b\rho P_1}{\sigma_2^2} \\
 \mbox{SNR}_{R_1^{-}(2)} & = & \frac{b\rho P_1}{\theta P_2+\sigma_2^2} \\
           \mbox{SNR}_{R_2^{+}(2)} & = & \frac{\theta P_2}{\sigma_2^2} \\
            \mbox{SNR}_{R_2^{-}(2)} & = & \frac{\theta P_2}{b\rho P_1+\sigma_2^2}  \\
              \mbox{SNR}_{R_2^{+}(1)} & = & \frac{a\theta P_2}{\sigma_1^2} \\
 \mbox{SNR}_{R_2^{-}(1)} & = & \frac{a\theta P_2}{\rho P_1+\sigma_1^2}  
           \end{eqnarray}
           where,
             \begin{align} \label{sigma1-sigma2}
 \sigma_1^2=(1-\rho)P_1+a (1-\theta) P_2+1\\ \label{sigma1-sigma2p}
  \sigma_2^2=b (1-\rho) P_1+(1-\theta)P_2+1. 
 \end{align}
  Note that, under the conditions that the lower part of the boundary is as depicted in Fig.~\ref{move-D}, at point $S$ we have (see Fig.~\ref{sum-rate-simple}),
\begin{equation}
\sigma_1^2=\sigma_2^2=\frac{1}{ab}.
\label{1ab}
\end{equation}

 Obviously,
   \begin{eqnarray} 
 R^{+}_1(1) & > & R^{-}_1(1) \\
  R^{+}_2(2)& > & R^{-}_2(2)\\
  R^{+}_1(2)& > & R^{-}_1(2)\\
  R^{+}_2(1)& > & R^{-}_2(1).
     \end{eqnarray} 
 To determine the shape of the intersection, we need to compute:
\begin{align}
\mbox{SGN}\!\left(R^{+}_1(1)-R^{+}_1(2)\right)  & ~~=  \\
\mbox{SGN}\!\left(\frac{\rho P_1}{\sigma_1^2}-\frac{b\rho P_1}{\sigma_2^2}\right) & ~~= \\
\mbox{SGN}\!\left(\sigma_2^2-b\sigma_1^2\right) & ~~= \\
\mbox{SGN}\!\left( b (1-\rho) P_1+(1-\theta)P_2+1 - b(1-\rho)P_1-ab (1-\theta) P_2-b\right) & ~~=  \\
\mbox{SGN}\!\left((1-ab)(1-\theta) P_2+1-b\right) & ~~=~~+1  
\label{eq62}
 \end{align} 
This means, 
 \begin{eqnarray} 
R^{+}_1(1) >R^{+}_1(2)~~~\mbox{always}.
\label{EqHKn1} 
\end{eqnarray} 

    \begin{eqnarray} \label{eq-SGN4}
\mbox{SGN}\!\left(R^{+}_1(2)-R^{-}_1(1)\right) & = \\ \label{eq-SGN0}
\mbox{SGN}\!\left(\frac{b\rho P_1}{\sigma_2^2}-\frac{\rho P_1}{a\theta P_2+\sigma_1^2}\right)  & =  \\ \label{eq-SGN1}
\mbox{SGN}\!\left(\frac{1}{(1-\rho) P_1+\frac{1}{b}(1-\theta)P_2+\frac{1}{b}}-
\frac{1}{a\theta P_2+(1-\rho) P_1+a(1-\theta) P_2+1} \right)& =  \\ \label{eq-SGN2}
\mbox{SGN}\!\left(a\theta P_2+(1-\rho) P_1+a(1-\theta) P_2+1-(1-\rho) P_1-\frac{1}{b}(1-\theta)P_2-\frac{1}{b}\right) & =\\ \label{eq-SGN3}
\mbox{SGN}\!\left(P_2-\frac{1-b}{\theta+ab-1}\right) \\ \nonumber 
    \end{eqnarray} 
where \ref{eq-SGN1} is obtained by replacing $\sigma_1^2=(1-\rho)P_1+a (1-\theta) P_2+1$ and 
 $ \sigma_2^2=b (1-\rho) P_1+(1-\theta)P_2+1$ from  \ref{sigma1-sigma2} and \ref{sigma1-sigma2p} in~\ref{eq-SGN0}, and \ref{eq-SGN2} follows \ref{eq-SGN1} since the  denominators of the two fractions  in~\ref{eq-SGN1} are positive.

It follows from~\ref{eq-SGN3} that, 
    \begin{eqnarray} \label{Novel}
    R^{+}_1(2)>R^{-}_1(1)~~~\mbox{if}~~~P_2>\frac{1-b}{\theta+ab-1}.
      \end{eqnarray}
Expressing condition in~\ref{Novel} in terms of 
$\theta$ being feasible, i.e., $\theta\leq1$, we conclude  
\begin{equation}
 \theta\leq 1 \Longleftrightarrow P_2\geq \frac{1-b}{ab}=T_2
\label{Novel2}
\end{equation}
which is always valid. 
    Finally, 
    \begin{eqnarray} \label{eq-SGN5}
    \mbox{SGN}\!\left(R^{-}_1(1)-R^{-}_1(2)\right)=
        \mbox{SGN}\!\left(
        \frac{\rho P_1}{a\theta P_2+\sigma_1^2}-
        \frac{b\rho P_1}{\theta P_2+\sigma_2^2}\right) & = & \\ \label{eq-SGN6}
\mbox{SGN}\!\left(\frac{1}{a\theta P_2+(1-\rho) P_1+a(1-\theta) P_2+1} -
\frac{b}{\theta P_2+b(1-\rho) P_1+(1-\theta) P_2+1}\right) & = & \\ \label{eq-SGN7}
\mbox{SGN}\!\left(\frac{1}{(1-\rho) P_1+a P_2+1} -
\frac{1}{(1-\rho) P_1+\frac{1}{b} P_2+\frac{1}{b}}\right) & = & +1 \\ \nonumber 
            \end{eqnarray} 
where \ref{eq-SGN6} is obtained by direct replacement from \ref{sigma1-sigma2} and \ref{sigma1-sigma2p}, and  \ref{eq-SGN7} is concluded since $\frac{1}{b}>1>a$. 
  This means, from~\ref{eq-SGN7},  
 \begin{eqnarray}  \label{Eq71}
R^{-}_1(1)>R^{-}_1(2) ~~~\mbox{always}. 
\end{eqnarray} 
Similar to $R_1$-axis,  it follows that for $R_2$-axis, 
\begin{align} \label{eq64}
R^{+}_2(2) >R^{+}_2(1)~&~\mbox{always} \\ \label{eq64p}
R^{+}_2(1)>R^{-}_2(2)~&~\mbox{if}~~P_1>\frac{1-a}{\rho+ab-1} \\ \label{eq64z}
R^{-}_2(2)>R^{-}_2(1) ~&~\mbox{always}. 
\end{align}
Note that the conclusions in \ref{Eq71} and \ref{eq64z} are consistent with the results of Theorem~\ref{theorem3-TH}. 
Expressing condition in~\ref{eq64p} in terms of 
$\rho$ being feasible, i.e., $\rho\leq1$, we conclude  
\begin{equation}
 \rho\leq 1 \Longleftrightarrow P_1\geq \frac{1-a}{ab}=T_1
\label{Novel3}
\end{equation}
which is always valid. Noting above discussions, depending on 
 \begin{eqnarray} 
 R^{+}_1(2)\stackrel{?}{>}R^{-}_1(1) \\
R^{+}_2(1)\stackrel{?}{>}R^{-}_2(2) 
 \end{eqnarray} 
 there are four cases for the 
 intersection of the MAC channels, as shown in Fig.~\ref{MAC1-2d}. The conditions governing each case are summarized next. 

  \begin{eqnarray}  \label{eq115}
   \mbox{Case 1 in Fig.~\ref{MAC1-2d}:}  &  R^{-}_1(1)\geq R^{+}_1(2)~~\mbox{and}~~R^{-}_2(2)\geq R^{+}_2(1) \\  \label{eq115p}
      \mbox{Case 2 in Fig.~\ref{MAC1-2d}:}  &  R^{-}_1(1)\leq R^{+}_1(2)~~\mbox{and}~~ R^{-}_2(2)\geq R^{+}_2(1) \\   \label{eq115z}
         \mbox{Case 3 in Fig.~\ref{MAC1-2d}:}  & R^{-}_1(1)\geq R^{+}_1(2)~~\mbox{and}~~ R^{-}_2(2)\leq R^{+}_2(1) \\  \label{eq115t}
         \mbox{Case 4 in Fig.~\ref{MAC1-2d}:}  &  R^{-}_1(1)\leq R^{+}_1(2)~~\mbox{and}~~R^{-}_2(2)\leq R^{+}_2(1). 
      \end{eqnarray}

        \begin{figure}[htp]
   \centering
   \includegraphics[width=0.7\textwidth]{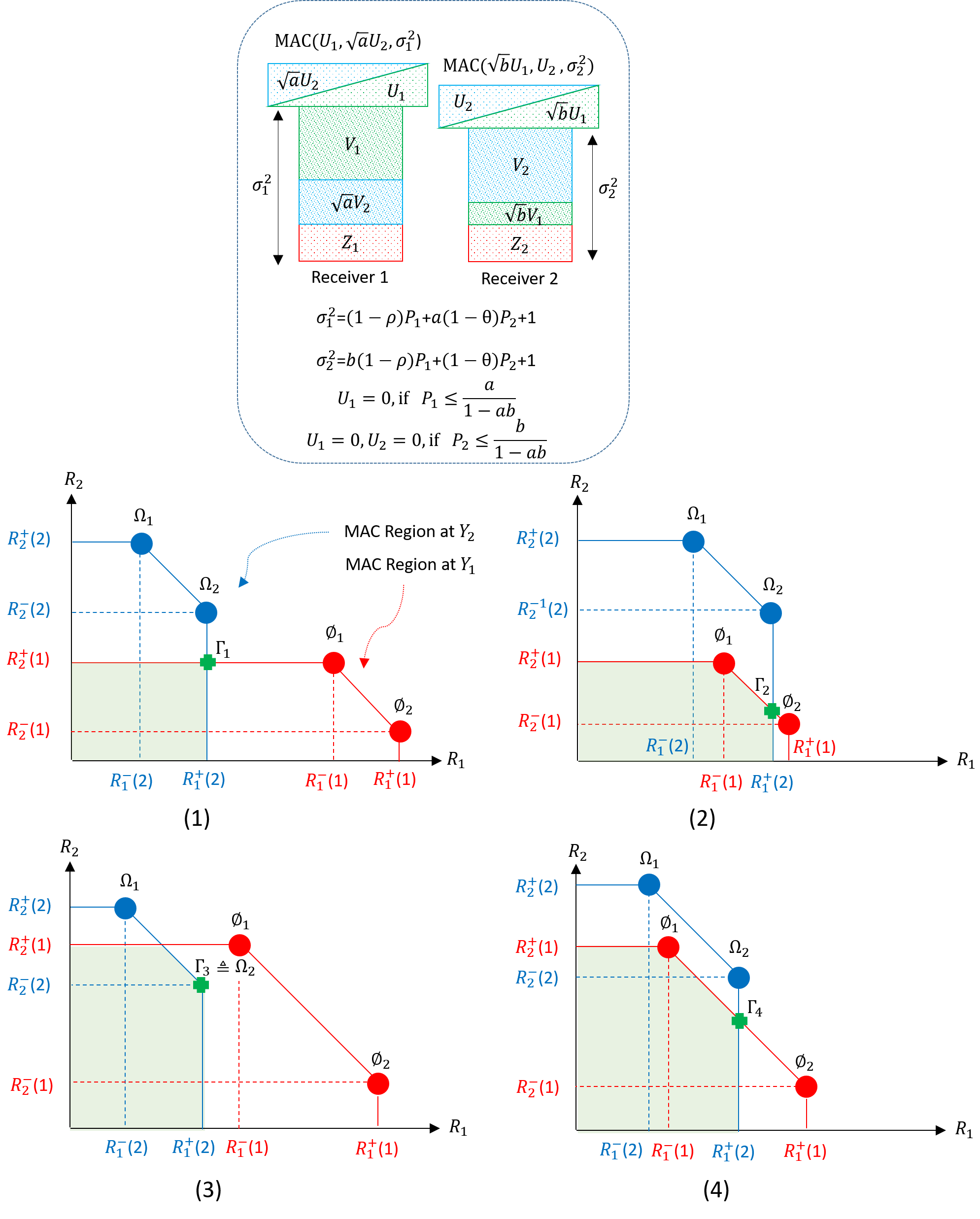}
   \caption{Four possible configurations for the structure of  MAC channels corresponding to public messages. Note that in Fig.~\ref{MAC1-2d}\,(1) point $\Gamma_1$ is the optimum solution regardless of the value of $\mu$, and in Fig.~\ref{MAC1-2d}\,(2)\,(3)\,(4) points $\Gamma_2$, $\Gamma_3$, $\Gamma_4$ are optimum, respectively, because $\mu\leq 1$ (this article focuses on the lower part of the boundary, i.e., $\mu\leq 1$). Points $\Gamma_2$ and $\Gamma_4$ in Fig.~\ref{MAC1-2d}\,(2)\,(4) necessitate joint decoding of public messages. Note that, if joint decoding is needed, it will be for both MACs, as it should realize a non-corner point on each of the two MAC regions located at their intersection, see Fig.~\ref{MAC1-2d}\,(2)\,(4). Fig.~\ref{MAC1-2d}\,(3) necessitates successive decoding (corner point of MAC at $Y_2$) and order of decoding is to maximize $R_1$ since $\mu\leq 1$.}
   \label{MAC1-2d}
 \end{figure}

Figure~\ref{MAC1-2d} presents four possible shapes for the intersection of the two MACs. For values of $\mu<1$, one is interested to realize the points specified as $\Gamma_1$ to $\Gamma_4$ in Fig.~\ref{MAC1-2d}.
In all four cases, public messages are successively decoded at receiver 2. For cases  in Fig.~\ref{MAC1-2d}(1) and Fig.~\ref{MAC1-2d}(3), public messages are successively decoded at the receiver 1, while for cases  in Fig.~\ref{MAC1-2d}(2) and Fig.~\ref{MAC1-2d}(4), public messages are jointly  decoded at receiver 1.   Relying on known properties of MAC,  to realize points $\Gamma_3$ and $\Gamma_4$ on the sum-rate front of receiver 1 in Fig.~\ref{MAC1-2d}-(2) and  Fig.~\ref{MAC1-2d}-(4),  joint decoding can be replaced by successive decoding if  a larger number of public messages (a higher number of layers) is used. 

By some rearrangement of terms, conditions in \ref{Novel} and \ref{eq64p} can be expressed as,
\begin{align} \label{modcond1}
  R^{+}_1(2) \geq  R^{-}_1(1) & ~\Longleftrightarrow~~\theta  \geq  \frac{1-b}{P_2}-ab+1 \\  \label{modcond2}
 R^{+}_2(1) \geq  R^{-}_2(2) & ~\Longleftrightarrow~~\rho  \geq  \frac{1-a}{P_1}-ab+1
 \end{align} 
respectively. Referring to Fig.~\ref{move-D}, MAC1 and MAC2 are formed from point $D_3$ to point $S$. Note that the right hand sides of \ref{modcond1} (and/or \ref{modcond2}) will be equal to one for $P_2=T_2$ (and/or $P_1=T_1$). Under the assumptions made in this article, that $P_1>T_1$ and $P_2>T_2$, the right hand sides of \ref{modcond1} and \ref{modcond2} will be less than one. 

 At point $S$, we have $\hat{P}_1=T_1=\frac{1-a}{ab}$ and 
$\hat{P}_2=T_2=\frac{1-b}{ab}$, resulting in 
\begin{align} \label{mod0}
\rho_S~=~1-\frac{T_1}{P_1}~=~1-\frac{1-a}{abP_1}~=~\frac{abP_1-1+a}{abP_1} \\ \label{mod0p}
\theta_S~=~1-\frac{T_2}{P_2}~=~1-\frac{1-b}{abP_2}~=~\frac{abP_2-1+b}{abP_2}.
\end{align}
The conditions in \ref{modcond1} and \ref{modcond2}, are respectively, expressed at point $S$ as
\begin{align} \label{mod1}
\rho_S~=~\frac{abP_1-1+a}{abP_1}~\stackrel{?}{>}~ \frac{1-a}{P_1}-ab+1 ~~\Longleftrightarrow~~ P_1~>~\frac{2(1-a)}{1+2ab}, \\ \label{mod1p}
\theta_S~=~\frac{abP_2-1+b}{abP_2}~\stackrel{?}{>}~ \frac{1-b}{P_2}-ab+1 ~~\Longleftrightarrow~~ P_2~>~\frac{2(1-b)}{1+2ab}. 
\end{align}
Noting $P_1>T_1=\frac{1-a}{ab}$ and  $P_2>T_2=\frac{1-b}{ab}$, it follows that in all cases we have, 
\begin{align} \label{mod2}
P_1~>~\frac{1-a}{ab}& ~>~ \frac{2(1-a)}{1+2ab} \\ \label{mod2p}
P_2~>~\frac{1-b}{ab}& ~>~ \frac{2(1-b)}{1+2ab}.
\end{align}
This means conditions in \ref{modcond1} and \ref{modcond2} are always satisfied at point $S$.  
Two remarks are worth emphasizing.

\noindent{\bf Remark 1:} Referring to Fig.~\ref{MAC1-2d}, it is observed at all optimum points, i.e.,  $\Gamma_1$, 
$\Gamma_2$, $\Gamma_3$ and $\Gamma_4$, we have,
\begin{align}
R_{U_1}& =R^{+}_1(2) \\
& =I(U_1;Y_2|U_2) \\
& =\frac{1}{2}\log_2\left(1+\frac{b\rho P_1}{(1-\theta)P_2+b(1-\rho)P_1+1}\right) \\
& =\frac{1}{2}\log_2\left(\frac{(1-\theta)P_2+bP_1+1}{(1-\theta)P_2+b(1-\rho)P_1+1}\right).~~~~~~~~~~~~~~~~\square
\label{eq91n}
\end{align}

\noindent{\bf Remark 2:} The conditions for Case 1 in Fig.~\ref{MAC1-2d}\,(1), are 
\begin{align} \label{modcond1pz}
  R^{+}_1(2) \leq  R^{-}_1(1) & ~\Longleftrightarrow~~\theta  \leq  \frac{1-b}{P_2}-ab+1 \\  \label{modcond2pz}
 R^{+}_2(1) \leq  R^{-}_2(2) & ~\Longleftrightarrow~~\rho  \leq  \frac{1-a}{P_1}-ab+1.
 \end{align} 
 Let us assume $\theta$ is increased by reducing $P_{V_2}$ and increasing $P_{U_2}$. Noting that for conditions shown in Fig.~\ref{MAC1-2d}\,(1), $U_1$ and $U_2$ are sequentially decoded, it follows that, as far as $R^{-}_1(1)$ is concerned, the signal power is $P_{U_1}$ and the noise power is $P_{V_1}+aP_{U_2}+aP_{V_2}=P_{V_1}+aP_2$.  This means, by increasing $\theta$, $R^{-}_1(1)$ will not be affected. On the other hand, by increasing $\theta$ such that,
\begin{equation}
\theta = \frac{1-b}{P_2}-ab+1~\Longrightarrow~R^{+}_1(2) = R^{-}_1(1).
\label{MoveR}
\end{equation}
This means point $\Gamma_1$ overlaps with point $\phi_1$, increasing $R_{ws}$. Figure~\ref{Move-R} depicts this case.  If $\theta$ is further increased, resulting in $R^{+}_1(2) > R^{-}_1(1)$, then the case shown in Fig.~\ref{MAC1-2d}\,(2) will be formed. Note that although this effect is explained in terms of increasing $\theta$, in some cases, $\rho$ is increased, causing the point $\Gamma_1$ to overlap with point $\Omega_2$.
{\em The final conclusion is that in all cases, the optimum point is located on the sum-rate front of} MAC1 {\em and/or on the sum-rate front of} MAC2. $\square$ 

        \begin{figure}[htp]
   \centering
   \includegraphics[width=0.4\textwidth]{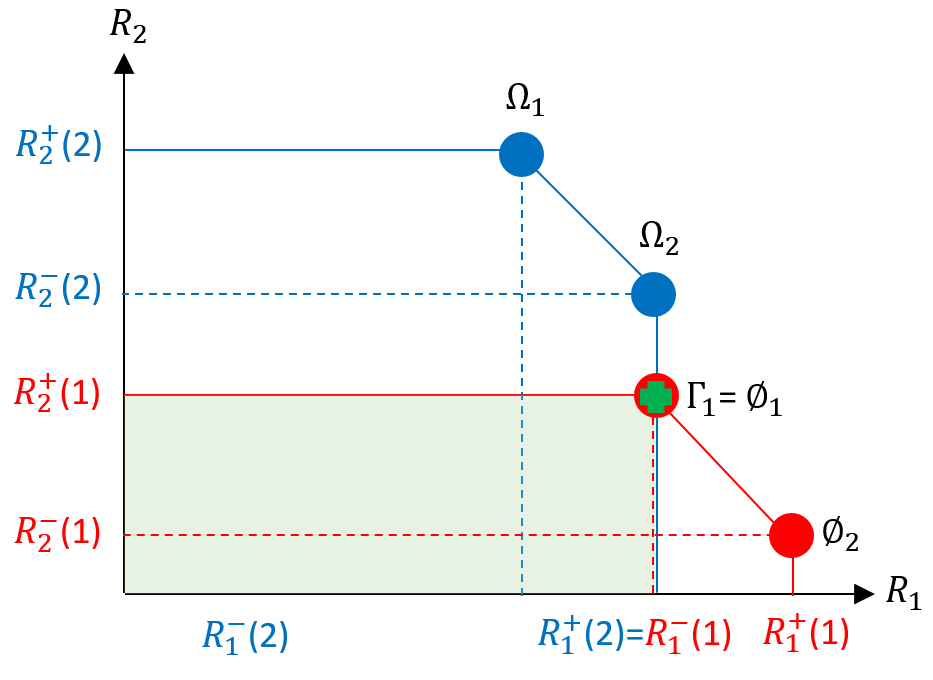}
   \caption{Configuration in Fig.~\ref{MAC1-2d}\,(1) where the optimum power allocation has shifted the point $\Gamma_1$ to overlap with point $\phi_1$.}
   \label{Move-R}
 \end{figure}

In computing the nested optimality condition for private messages, in Section~\ref{optimality-condition},  the change in $R_{ws}$ due to a composite $\delta$-step is computed as sum of two partial components. Each component accounts for the change in $R_{ws}$ due to the effect of the $\delta$-layer of one of the two users alone. This property was justified by dividing the boundary into segments with continuous slope, and applying the proof to each segment separately where the end point of a segment coincides with the starting point of the following segment (and thereby the proof could continue from segment to segment).  In the following, relying on capacity arguments, this issue is studied from a different angle. It is established that, considering the effect of two $\delta$-layers (that form a composite step) simultaneously, and forming the intersection of the underlying 
MAC regions with infinitesimal powers, results in the same outcome as relying on superposition of partial derivatives under the assumption of partial derivatives being continuous.

\subsubsection{Equivalency of a Composite $\delta$-step to Two Consecutive Simple $\delta$-steps: Continuity of Slope}\label{optimality-condition2}

Figure~\ref{MAC1-2e} depicts  two public $\delta$-layers, one for each of the two users,  used in deriving the conditions for nested optimality of private messages.  Consider a point on the curve where private messages are of powers $\breve{\hat{P}}_1$, $\breve{\hat{P}}_2$. As one moves along the lower part of the boundary,
$\breve{\hat{P}}_1= P_1 \searrow T_1$  and $\breve{\hat{P}}_2=0\nearrow \breve{T}_2 \searrow T_2$.  A composite $\delta$-step, defined when power allocations of both users are changed, occur in the range of moving from point $D_3$ to point $S$ (see Fig.~\ref{move-D}). In this range, powers of private messages are reduced, and in turn, powers of public messages are increased.  Note that the public layers formed prior to this composite $\delta$-step see the same noise plus interference levels before and after the step. Consequently,  public layers that were formed before taking the step can be decoded without affecting their  contribution to $R_{ws}$. As a result, the actual change in $R_{ws}$ will be due to terms computed next.  

To compute such terms, let us consider Fig.~\ref{MAC1-2e}. As mentioned earlier, public messages formed prior to the composite $\delta$-step are decoded and removed. $\hat{P}_1$ and $\hat{P}_2$ are the remaining powers of private messages after deducting $\delta P_1$ and $\delta P_2$, where $\delta P_1$ and $\delta P_2$ are used to send the new  infinitesimal public messages. We have:

         \begin{figure}[htp]
   \centering
   \includegraphics[width=0.45\textwidth]{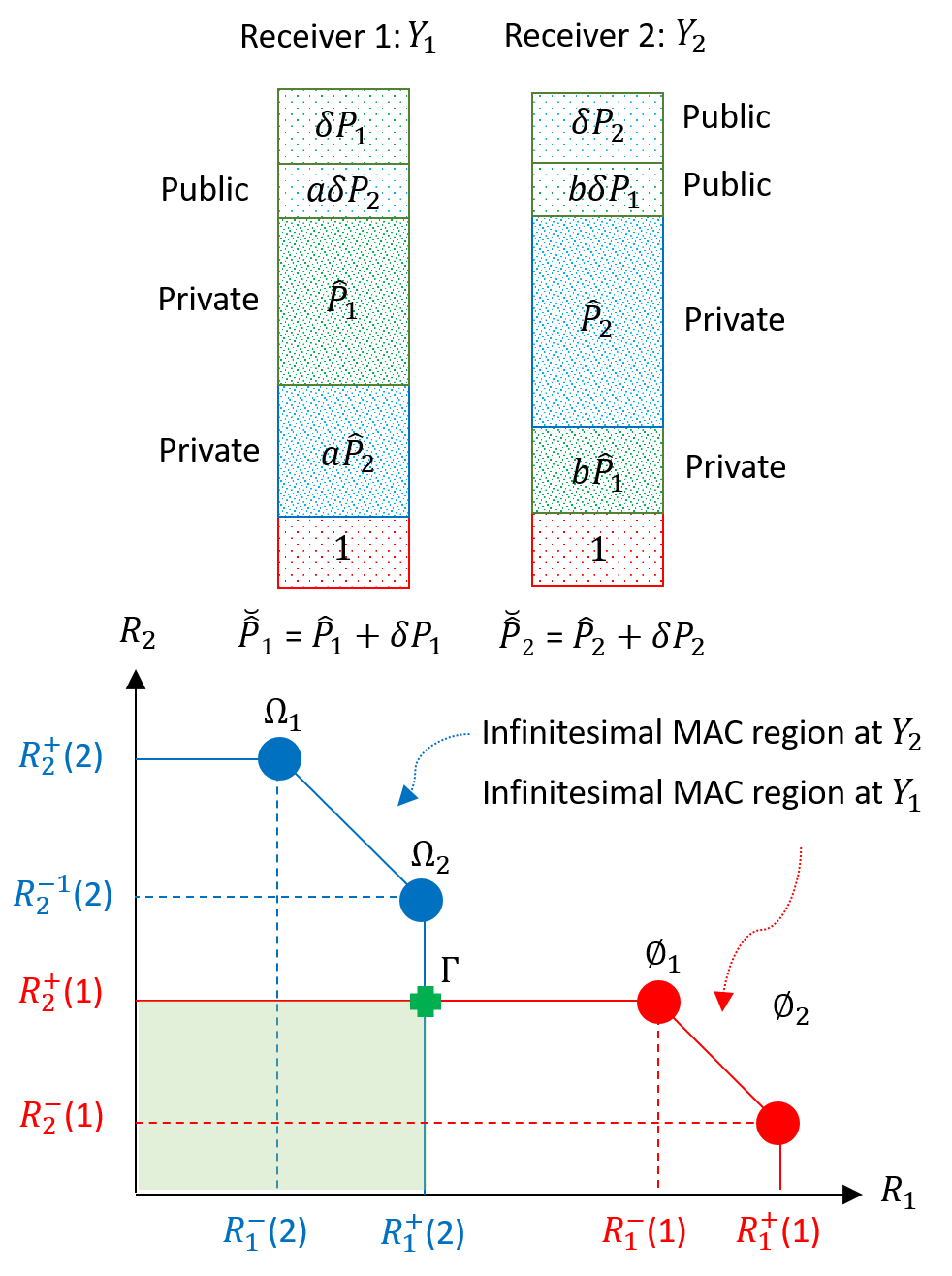}
      \caption{Intersection of infinitesimal MAC channels used in deriving the conditions for nested optimality of private messages.  $\Gamma$ is the optimum point at the intersection of the two MAC capacity regions. The partial rate due to $\delta$-layer of user 1, i.e., $\delta P_1$, can be computed without including the $\delta$-layer of user 2, i.e., $\delta P_2$, and vice versa. In other words, partial rates due to the two $\delta$-layers, i.e., $\delta P_1$ and $\delta P_2$, are added. This allows expressing a composite 
$\delta$-step in terms of two consecutive simple $\delta$-steps. }
   \label{MAC1-2e}
 \end{figure}
     
            \begin{eqnarray}  \label{EE89}
   R^+_1(2) & = & 0.5\log_2\left(1+\frac{b\delta P_1}{b\hat{P}_1+\hat{P}_2+1}\right) \simeq 0.5\left(\frac{b\delta P_1}{b\hat{P}_1+\hat{P}_2+1}\right)
   =0.5\left(\frac{\delta P_1}{\hat{P}_1+\frac{1}{b}\hat{P}_2+\frac{1}{b}}\right) \\ \label{eq101}
      R^-_1(1) & = & 0.5\log_2\left(1+\frac{\delta P_1}{\hat{P}_1+a\hat{P}_2+1+a\delta P_2}\right) \\
      & = & 0.5\log_2\left(\hat{P}_1+a\hat{P}_2+1+a\delta P_2+\delta P_1\right)-0.5\log_2\left(\hat{P}_1+a\hat{P}_2+1+a\delta P_2\right) \\ \label{eq104}
      & \simeq & 0.5 \left(\frac{a\delta P_2+\delta P_1}{\hat{P}_1+a\hat{P}_2+1}\right)-0.5\left(\frac{a\delta P_2}{\hat{P}_1+a\hat{P}_2+1} \right) =
      0.5 \left(\frac{\delta P_1}{\hat{P}_1+a\hat{P}_2+1} \right). 
       \end{eqnarray} 
       Comparing \ref{EE89} with \ref{eq104}, we obtain,
             \begin{eqnarray} 
             R_1^-(1)>R_1^+(2).
              \label{eq106}
             \end{eqnarray}           
        Referring to Fig.~\ref{MAC1-2e}, we have:
            \begin{eqnarray} \label{EE94}
   R^+_2(1) & = & 0.5\log_2\left(1+\frac{a\delta P_2}{a\hat{P}_2+\hat{P}_1+1}\right) \simeq 0.5\left(\frac{a\delta P_2}{a\hat{P}_2+\hat{P}_1+1}\right)=0.5\left(\frac{\delta P_2}{\hat{P}_2+\frac{1}{a}\hat{P}_1+\frac{1}{a}}\right) \\ \label{eq101p}
      R^-_2(2) & = & 0.5\log_2\left(1+\frac{\delta P_2}{\hat{P}_2+b\hat{P}_1+1+b\delta P_1}\right) \\
      & = & 0.5\log_2\left(\hat{P}_2+b\hat{P}_1+1+b\delta P_1+\delta P_2\right)-0.5\log_2\left(\hat{P}_2+b\hat{P}_1+1+b\delta P_1\right) \\ \label{EE97}
      & \simeq & 0.5 \left(\frac{b\delta P_1+\delta P_2}{\hat{P}_2+b\hat{P}_1+1}\right)-0.5\left(\frac{b\delta P_1}{\hat{P}_2+b\hat{P}_1+1} \right) =
      0.5 \left(\frac{\delta P_2}{\hat{P}_2+b\hat{P}_1+1} \right). 
       \end{eqnarray} 
    Comparing \ref{EE94} with \ref{EE97}, we obtain,
        
             \begin{eqnarray} 
             R_2^-(2)>R_2^+(1).
             \label{eq112}
             \end{eqnarray} 
     From \ref{eq106} and \ref{eq112}, we conclude the intersection of the two MACs (governing the rate of public $\delta$-layers) is as shown in 
Fig.~\ref{MAC1-2e}, and the rate of public layers, corresponding to Point 
     $\Gamma$ in Fig.~\ref{MAC1-2e}, is equal to:
     
                       \begin{eqnarray} 
                       (R_1^{public},R_2^{public})=(R_1^+(2), R_2^+(1)).
                        \label{eq113}
                       \end{eqnarray} 
From~\ref{eq104}, $R_1^{public}$  is determined solely by $\delta P_1$, and from~\ref{EE97},  $R_2^{public}$  is determined solely by $\delta P_2$. This means, the conditions for having a stationary optimum solution for user 1 (see~\ref{eq6}) and for user 2 (see~\ref{eq9}) are each determined by a single (simple) $\delta$-step involving a change in power allocation of user 1, or power allocation of user 2, but not both at the same step.  
In other words, the total change in the rate, which is the sum of \ref{eq104} and \ref{EE97}, is expressed as a coefficient $C_1$ times $\delta P_1$ plus a coefficient $C_2$ times $\delta P_2$. 
This allows expressing a composite 
$\delta$-step as two consecutive simple $\delta$-steps, which in essence reflects the fact the slope of the boundary in the range of interest is continuous.

\section{Optimality of Gaussian Distribution} \label{sec3}

This section proves the optimality of Gaussian distribution for achieving the boundary of each constituent region in the upper concave envelope of the capacity region for a 2-users weak GIC. First, some definitions. 

{\bf Upper Concave Enveloper and Constituent Region:} By optimally dividing the power budgets $P_1$ and $P_2$ among several instances of 2-users weak GIC, and time sharing among their associated optimum solutions (refereed as constituent regions, hereafter), one can  achieve the so-called upper concave envelope of constituent regions.  
Upper concave envelope (potentially) replaces (improves) some parts of the overall capacity region that fall around the intersection of two constituent regions by a time sharing line tangent to the two intersecting constituent regions. This means, by time sharing between two points, one on each of the intersecting constituent regions, the boundary is enlarged. This article is concerned with a single constituent region at a time.   $\square$

{\bf Strategy:} This term is used to refer to power, time/bandwidth allocation and encoding/decoding methods associated with a constituent region.  Optimization of strategies and computation of upper concave envelope is beyond the scope of this article. This article provides the tools  to optimize the constituent region once the allocated resources (time/bandwidth/power) are known. To focus on a single constituent region, simply called capacity region hereafter when there is no chance of confusion, the power budgets are fixed at $P_1$ and $P_2$ and available time/bandwidth is fully utilized by both users. Each boundary point (for given $P_1$ and $P_2$) has its associated (optimized) encoding/decoding methods. Note that for any constituent region, constraints on power budgets $P_1$ and $P_2$ will be  satisfied with equality. The reason is that if there is unused power for any of the two users, it can be simply added to the public power budget of that user, which would result in increasing $R_{ws}$.   $\square$

 {\bf State:}  This term is used to refer to part of the {\em strategy} that is under the control of a transmitter. Resource allocation (power/time/bandwidth), encoding rate and encoding method in a {\em strategy} are part of the associated {\em state}, while decoding method, which is under the control of the receiver, is the other piece of the {\em strategy}.   $\square$

In the following, for the sake of simplicity, first, in Section~\ref{sec3.2}, the proof for optimality of Gaussian is  presented relying on scalar inputs. This is then generalized to vector inputs in Section~\ref{sec3.3}. 

\subsection{Optimality of Gaussian Distribution for Scalar Inputs} \label{sec3.2}
 
Let us first consider $Y_1$. 
Let us assume the boundary is divided into a number of segments, where the slope of tangent to the boundary is continuous over each segment. 
 For one such segment, let us consider a sequence of probability distributions corresponding to $Y_1$ in subsequent $\delta$-steps along the boundary, starting from a Gaussian distribution at the corner point $A$ maximizing $R_1$. These probability distributions are denoted as $p_0(y_1)$, $p_1(y_1)$, $p_2(y_1)$, ...., $p_L(y_1)$, where $p_0(y_1)$ is the probability distribution at the corner point $A$, which is Gaussian,  and $p_1(y_1)$ is the probability distribution at the end point on the first $\delta$-step after the corner point $A$, and $p_L(y_1)$ is the probability distribution at the end point of the $L$'th $\delta$-step (which concludes the particular segment of the boundary). It is assumed that, $L \approx O(1/\delta)$ is an integer.  It is obvious that the exact value of $\delta$ can be changed from segment to segment to make sure all segments contain an integer number of $\delta$-steps.  As an alternative, the final $\delta$-step can be scaled down such that the $L$'th steps falls on the end point of the segment under consideration (see Fig.~\ref{sub-delta}).

    \begin{figure}[htp]
   \centering
   \includegraphics[width=0.35\textwidth]{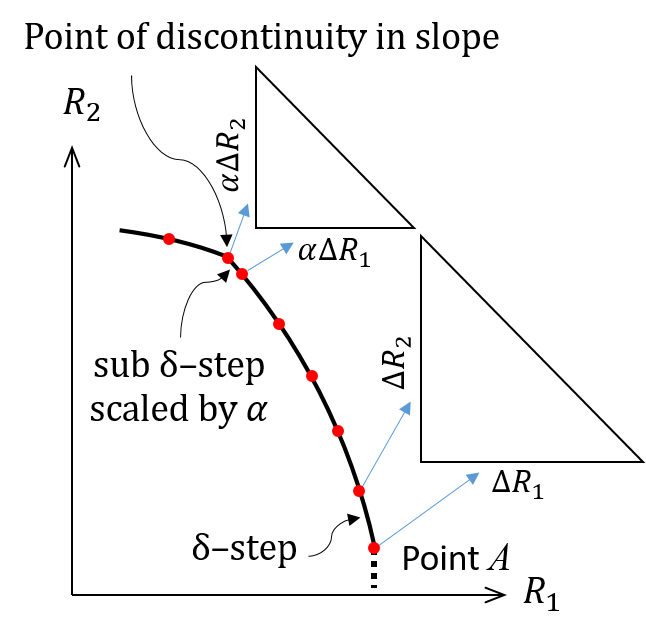}
   \caption{Moving  counterclockwise along the boundary and handling discontinuity in $\mu$ by shrinking the last $\delta$-step by $\alpha$ to end in the point of discontinuity.}
   \label{sub-delta}
 \end{figure}

For simplicity of notation, in arguments/derivations that would be applicable to both $Y_1$ and $Y_2$, the subscript $1$ (or $2$) will be dropped from $Y_1$ (or $Y_2$). Relying on the notations used in the proof of Theorem 7.4.1. on page 335 of~\cite{Gallagher}, we use $A$ to show the variance of the output under consideration ($Y_1$ or $Y_2$), and $\phi_A(y)$ to show a Gaussian distribution of variance $A$. Note that $p_0(y)=\phi_A(y)$.
 Relying on these notations, let us consider the case that the Kullback–Leibler divergence between a distribution $p_1(y)$ of variance $A$ and a Gaussian distribution of variance $A$, expressed as 
\begin{equation}
\phi_A(y)=\frac{1}{\sqrt{2\pi A}}\exp\left(-\frac{y^2}{2A}\right).
\label{eq92}
\end{equation}
satisfies, 
\begin{equation}
D[p_1(y)||\phi_A(y)]\equiv \int_{-\infty}^{+\infty} p_1(y) \log_2\left( \frac{p_1(y)}{\phi_A(y)}\right) dy =-\epsilon_1,~\mbox{where}~\epsilon_1>0,~\mbox{and}~\epsilon_1 \ll \delta \approx O(\delta^2).
\label{eq93}
\end{equation}
The reason for $\epsilon_1 \ll \delta\approx O(\delta^2)$  will become clear later\footnote{The assumption $\epsilon_1 \ll \delta$ and other assumptions concerning Kullback–Leibler divergence, e.g., conditions in \ref{eq95} and \ref{eq97} will be proved later in Section~\ref{sec313}.}.
Condition $\epsilon_i \ll \delta$ is expressed as $\epsilon_i \approx O(\delta^2)$, hereafter.
Following Theorem 7.4.1. on page 335 of~\cite{Gallagher}, one can conclude
\begin{equation}
 \int_{-\infty}^{+\infty} p_1(y) \log_2\left( \frac{1}{\phi_A(y)}\right)dy=\frac{1}{2}\log_2 (2\pi e A).
\label{eq94}
\end{equation}
 Combining expression~\ref{eq93} and~\ref{eq94}, it follows that  
\begin{equation}
D[p_1(y)||\phi_A(y)]=-\epsilon_1 ~\Longrightarrow~ h_{p_1(y)}(Y)\equiv \int_{-\infty}^{+\infty} p_1(y) \log_2 \frac{1}{p_1(y)}dy=\frac{1}{2}\log_2 (2\pi e A)-\epsilon_1
 \label{eq95}
\end{equation}
where $h_{p_1(y)}(Y)$ is the entropy of random variable $Y$ with probability density $p_1(y)$.  
From expression~\ref{eq95}, it is concluded that  $\log_2(1/p_1(y))$ is equal to:
 \begin{equation}
\log_2 \frac{1}{p_1(y)}=\left(\frac{1}{2}\log_2(2\pi e A)\times \frac{y^2}{A}\right)-\epsilon_1.
 \label{eq96}
\end{equation}
Using \ref{eq96}, we obtain
\begin{equation}
 \int_{-\infty}^{+\infty} f(y) \log_2 \frac{1}{p_1(y)}dy=\frac{1}{2}\log_2 (2\pi e A)-\epsilon_1
 \label{eq95O}
\end{equation}
where $f(y)$ is a valid probability density function.

Next let us consider a second probability distribution, $p_2(y)$ of variance $A$, satisfying 
 \begin{equation}
D[p_2(y)||p_1(y)]\equiv \int_{-\infty}^{+\infty} p_2(y) \log_2\left( \frac{p_2(y)}{p_1(y)}\right)=-\epsilon_2,~\mbox{where}~\epsilon_2>0,~\mbox{and}~\epsilon_2\ll \delta \approx O(\delta^2)
\label{eq97}
\end{equation}
where $p_1(y)$ satisfies \ref{eq96}. After some simple substations from \ref{eq96} and \ref{eq95O}, it follows that, 
 \begin{equation}
h_{p_2}(y)(Y)\equiv \int_{-\infty}^{+\infty} p_2(y) \log_2 \frac{1}{p_2(y)}dy=\frac{1}{2}\log_2 (2\pi e A)-\epsilon_1-\epsilon_2. 
\label{eq98}
\end{equation}
From~\ref{eq98}, we conclude,  $\log_2(1/p_2(y))$ is equal to:
 \begin{equation}
\log_2 \frac{1}{p_2(y)}=\left(\frac{1}{2}\log_2(2\pi e A)\times \frac{y^2}{A}\right)dy-\epsilon_1-\epsilon_2.
 \label{eq96p}
\end{equation} 
This procedure can continue from one $\delta$-step to the next, establishing that, if
 \begin{equation}
D[p_i(y)||p_{i-1}(y)]=-\epsilon_i~\Longrightarrow~
h_{p_i(y)}(Y)=\frac{1}{2}\log_2 (2\pi e A)-\sum_{\ell=1}^i \epsilon_\ell.
\end{equation} 

 Finally, noting there are $L\approx O(1/\delta)$ steps to reach to the end of the boundary segment under consideration, we obtain, 
 \begin{equation}
h_{p_{L}(y)}(Y)-h_{\phi_A(y)}(Y)=-\sum_{i=1}^{L}{\epsilon_i} \approx LO(\delta^2)\approx O(\delta)\rightarrow 0. 
\label{TotL}
\end{equation} 
On the other hand, since Gaussian distribution has the maximum entropy among all distributions with the same variance (Theorem 7.4.1. on page 335 of~\cite{Gallagher}),  it is concluded  the probability density of $Y_1$ remains Gaussian throughout the segment (recall that segment means part of the boundary with a continuous slope). As the  probability density for the end point of the boundary segment under consideration is Gaussian, and the end point of one segment is the start point of the subsequent segment (boundary is continuous), a similar proof can be applied to the next segment (and its subsequent segments) completing the boundary. 

A more general statement concerning the relationship between entropy values of random variables with an infinitely small Kullback–Leibler divergence of $\hat{\epsilon}\approx 0$ is expressed next. Notations 
$\hat{\epsilon}$ and $O(\hat{\epsilon})$ are used to refer to infinitesimal quantities, regardless of their signs and their exact values.  

\begin{theorem} \label{Th7}
Given probability distributions $p(y)$ and $q(y)$ of variance $A$, where 
 \begin{equation}
h_{p(y)}= \frac{1}{2}\log_2(2\pi e A)+O(\hat{\epsilon})~~~\Longleftrightarrow~~~\log_2(p(y)) = \left(\frac{1}{2}\log_2(2\pi e A)\times \frac{y^2}{A}\right)+O(\hat{\epsilon})
\end{equation}
then,
 \begin{equation}
\mbox{if}~D(p||q)= O(\hat{\epsilon})~\Longrightarrow~\log_2(q(y))= \left(\frac{1}{2}\log_2(2\pi e A)\times \frac{y^2}{A}\right)+O(\hat{\epsilon})
\end{equation}
and,
 \begin{equation}
\mbox{if}~D(q||p)= O(\hat{\epsilon})~\Longrightarrow~\log_2(q(y))= \left(\frac{1}{2}\log_2(2\pi e A)\times \frac{y^2}{A}\right)+O(\hat{\epsilon}).
\end{equation}
\end{theorem}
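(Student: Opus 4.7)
My plan is to reduce both implications to showing $h_q = \frac{1}{2}\log_2(2\pi e A) + O(\hat{\epsilon})$: the hypothesis asserts an equivalence between near-Gaussian entropy and the pointwise Gaussian log-density form for variance-$A$ distributions, so once the entropy of $q$ is pinned down, the claimed pointwise statement about $\log_2 q(y)$ follows automatically. The vehicle connecting $h_p$ and $h_q$ is the standard KL expansion $D(r\|s) = -h_r - \int r(y)\log_2 s(y)\,dy$, which I would apply separately to each of the two cases.

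For the easier case $D(q\|p) = O(\hat{\epsilon})$, I would write $-h_q - \int q\log_2 p\,dy = O(\hat{\epsilon})$ and substitute the pointwise near-Gaussian form of $p$, namely $-\log_2 p(y) = -\log_2 \phi_A(y) + O(\hat{\epsilon})$, inside the second integral. The resulting integrand becomes an explicit quadratic, $\tfrac{1}{2}\log_2(2\pi A) + y^{2}/(2A\ln 2)$, up to $O(\hat{\epsilon})$, and integrating against $q$ requires only the moment $\mathbb{E}_q[Y^{2}] = A$ (plus zero mean), yielding $h_q = \tfrac{1}{2}\log_2(2\pi e A) + O(\hat{\epsilon})$. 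For the case $D(p\|q) = O(\hat{\epsilon})$, the same expansion gives $\int p(y)\log_2(1/q(y))\,dy = h_p + O(\hat{\epsilon}) = \tfrac{1}{2}\log_2(2\pi e A) + O(\hat{\epsilon})$; I would then use the pointwise hypothesis on $p$ to swap the weighting $p$ for $\phi_A$, obtaining the cross-entropy identity $H(\phi_A,q) = h_{\phi_A} + O(\hat{\epsilon})$, i.e.\ $D(\phi_A\|q) = O(\hat{\epsilon})$, and finally invoke the fact that at $O(\hat{\epsilon})$-closeness to a fixed reference the two KL directions agree to leading order (both reducing to the symmetric $\chi^{2}$ divergence), whence $D(q\|\phi_A) = O(\hat{\epsilon})$ and $h_q = h_{\phi_A} - D(q\|\phi_A) = \tfrac{1}{2}\log_2(2\pi e A) + O(\hat{\epsilon})$.

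The main obstacle will be the substitution $p \to \phi_A$ inside $\int p\log_2(1/q)\,dy$ in Case 1: the pointwise bound $p(y)/\phi_A(y) = 1 + O(\hat{\epsilon})$ inherited from the hypothesis must be paired with a tail estimate on $q$ to control the contribution from regions where $\log_2(1/q)$ is large. The variance constraint $\mathrm{Var}(q) = A$ supplies the needed tail control via Chebyshev-type bounds, so the substitution is quantitatively justified but requires some care; Case 2, by contrast, uses only the second moment of $q$ and reduces to a direct calculation, which is why I would dispatch it first.
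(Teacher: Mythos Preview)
Your approach is essentially the same as the paper's: the paper's proof of this theorem is a single sentence pointing to Gallager's Theorem~7.4.1, and the surrounding development (equations~\ref{eq92}--\ref{eq98}) carries out exactly your Case~2 computation, namely the cross-entropy identity $\int f(y)\log_2(1/\phi_A(y))\,dy=\tfrac{1}{2}\log_2(2\pi eA)$ for any variance-$A$ density $f$, combined with the KL expansion. You have simply supplied more detail than the paper does.

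One remark on rigor: the paper works at a heuristic level throughout this passage---for instance, equation~\ref{eq96} passes from an entropy value to a pointwise statement about $\log_2 p(y)$, which is not a valid implication in general. Your worries about the $p\to\phi_A$ swap inside $\int p\log_2(1/q)$ and about the leading-order symmetry of KL are therefore at a finer level of scrutiny than the paper itself attains; they are legitimate technical gaps in a fully rigorous argument, but the paper does not close them either, and its one-line proof does not attempt to. If you want to match the paper, your Case~2 argument alone (plus the hypothesis equivalence) already reproduces what the paper actually writes out.
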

\begin{proof}
The proof follows by considering the proof of Theorem~7.4.1 on page 335 of~\cite{Gallagher}.
\end{proof}
To generalize Theorem~\ref{Th7}, we need the following definition. 

{\bf Maximum Entropy Vector}: Consider a vector ${\bf v}$ of dimension $\mathsf{D}$ with an auto-correlation matrix of trace $A\times\mathsf{D}$.  It is well known that vector ${\bf v}$ will have maximum entropy, and thereby it is called a $\mathsf{D}$-dimensional maximum entropy vector, if it is composed of i.i.d. Gaussian components each of variance $A$, i.e., ${\bf v}$ is Gaussian with an auto-correlation matrix equal to  
$A\times I$ where $I$ is a $\mathsf{D}\times\mathsf{D}$ identity matrix.    $\square$

Relying on the definition of Maximum Entropy Vector, a generalization of Theorem~\ref{Th7} follows. 
\begin{theorem} \label{Th7p}
Consider a $\mathsf{D}$-dimensional vector ${\bf y}$ with an auto-correlation matrix of a trace $A\times\mathsf{D}$. 
Consider probability distributions $p({\bf y})$ and $q({\bf y})$, where 
 \begin{equation}
h_{p({\bf y})}= \frac{\mathsf{D}}{2}\log_2(2\pi e A)+O(\hat{\epsilon})~~~\Longleftrightarrow~~~\log_2(p({\bf y})) = \left(\frac{1}{2}\log_2(2\pi e A)\times \frac{||{\bf y}||^2}{A}\right)+O(\hat{\epsilon})
\end{equation}
then,
 \begin{equation}
\mbox{if}~D(p||q)= O(\hat{\epsilon})~\Longrightarrow~\log_2(q({\bf y}))= \left(\frac{1}{2}\log_2(2\pi e A)\times \frac{||{\bf y}||^2}{A}\right)+O(\hat{\epsilon})
\end{equation}
and,
 \begin{equation}
\mbox{if}~D(q||p)= O(\hat{\epsilon})~\Longrightarrow~\log_2(q({\bf y}))= \left(\frac{1}{2}\log_2(2\pi e A)\times \frac{||{\bf y}||^2}{A}\right)+O(\hat{\epsilon}).
\end{equation}
\end{theorem}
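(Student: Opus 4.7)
The plan is to extend the scalar argument of Theorem~\ref{Th7} by using the Maximum Entropy Vector, namely the i.i.d.\ Gaussian density $\phi_{AI}({\bf y}) = (2\pi A)^{-\mathsf{D}/2}\exp\!\bigl(-||{\bf y}||^2/(2A)\bigr)$, as the reference density on $\mathbb{R}^{\mathsf{D}}$. By definition, $\phi_{AI}$ uniquely achieves the maximum differential entropy $\tfrac{\mathsf{D}}{2}\log_2(2\pi e A)$ among all ${\bf y}$-distributions whose auto-correlation has trace $A\mathsf{D}$. Since $-\log_2\phi_{AI}({\bf y})$ is an affine function of $||{\bf y}||^2$, its expectation against any $p$ with the prescribed trace equals $\tfrac{\mathsf{D}}{2}\log_2(2\pi e A)$ exactly, so the hypothesis on $p({\bf y})$ translates immediately into
\begin{equation*}
D(p\|\phi_{AI}) \;=\; -h_{p({\bf y})} + \tfrac{\mathsf{D}}{2}\log_2(2\pi e A) \;=\; O(\hat{\epsilon}).
\end{equation*}
This is the vector analogue of equation~\ref{eq93} in the scalar argument, and it does the real work of reducing the problem to closeness against a single canonical reference Gaussian.

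With this reduction in hand, transferring near-Gaussianity from $p$ to $q$ follows the scalar template. In the case $D(p\|q)=O(\hat{\epsilon})$, I would expand
\begin{equation*}
-\mathbb{E}_p[\log_2 q] \;=\; h_{p({\bf y})} + D(p\|q) \;=\; \tfrac{\mathsf{D}}{2}\log_2(2\pi e A) + O(\hat{\epsilon}) \;=\; -\mathbb{E}_p[\log_2 \phi_{AI}] + O(\hat{\epsilon}),
\end{equation*}
so that $\mathbb{E}_p\!\bigl[\log_2(\phi_{AI}/q)\bigr] = O(\hat{\epsilon})$. Combining this with the first step (which already places $p$ within $O(\hat{\epsilon})$ of $\phi_{AI}$ in the corresponding log-averaged sense) forces $\log_2 q({\bf y})$ to agree with $\log_2 \phi_{AI}({\bf y})$ up to $O(\hat{\epsilon})$, i.e., precisely the form claimed in the theorem. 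The symmetric case $D(q\|p)=O(\hat{\epsilon})$ is handled by running the same expansion integrated against $q$ rather than $p$: the quadratic form of $-\log_2\phi_{AI}$ depends only on second moments, so the identity $-\mathbb{E}_q[\log_2 p] = h_{q({\bf y})} + D(q\|p)$ shifts the Gaussian profile from $p$ onto $q$ with the same $O(\hat{\epsilon})$ control.

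The main obstacle is the one already glossed over in the scalar Theorem~\ref{Th7}: upgrading the ``log-averaged'' agreement between $\log_2 q$ and $\log_2 \phi_{AI}$ to the pointwise ``$+ O(\hat{\epsilon})$'' statement asserted in the theorem. In the scalar setting this is dispatched by appeal to Theorem~7.4.1 of~\cite{Gallagher}, which exploits the uniqueness of the (affine-in-$y^2$) coefficients of $\log_2\phi_A(y)$: matching the reference in an averaged sense forces matching coefficients. For the vector generalization the same mechanism applies verbatim to the single quadratic $||{\bf y}||^2/A$ that governs $\log_2\phi_{AI}({\bf y})$, so no additional ideas beyond identifying the Maximum Entropy Vector as the correct reference density are required to close the argument.
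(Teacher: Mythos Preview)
Your proposal is correct and follows essentially the same route as the paper's own proof: both invoke the Maximum Entropy Vector (the i.i.d.\ Gaussian $\phi_{AI}$) as the canonical reference and reduce the claim to the mechanism behind Theorem~7.4.1 of~\cite{Gallagher}, exactly as in the scalar Theorem~\ref{Th7}. The paper's proof is a one-line pointer (``apply Gallager componentwise and use uniqueness of the maximum-entropy i.i.d.\ Gaussian''), whereas you spell out the cross-entropy identities $-\mathbb{E}_p[\log_2 q]=h_p+D(p\|q)$ and $-\mathbb{E}_p[\log_2\phi_{AI}]=\tfrac{\mathsf{D}}{2}\log_2(2\pi eA)$ explicitly and work with the full quadratic $||{\bf y}||^2/A$ rather than coordinate by coordinate; this is a cosmetic difference, not a different idea.
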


\begin{proof}
The proof follows by considering the proof of Theorem~7.4.1 on page 335 of~\cite{Gallagher} applied to each component of the vectors under consideration, as well as the fact that a Gaussian vector with i.i.d. components is the only possible option with an entropy the same as that of a maximum entropy vector.   
\end{proof}
 
\subsubsection{Sketch of the proof concerning optimality of Gaussian inputs}

The proof is based on dividing the capacity region, i.e., the boundary of each constituent region under consideration, into segments over which (partial) derivative(s) of $R_{ws}$ in terms of $\rho$ and/or $\theta$ is/are continuous. Regardless of how many such segments exist, the total number of $\delta$ steps to cover all of them is equal to $L=O(1/\delta)$, and this is the only point used in the proof (see~\ref{TotL}). 
 Focusing on one such segment, $\delta$-steps are expressed in terms of infinitesimal changes in $\rho$ and/or in $\theta$. A $\delta$-step is called {\em simple} if it entails changing only one of the two parameters, i.e., $\rho$ or $\theta$, and it is called {\em composite} if it entails changing both $\rho$ and $\theta$.  As partial derivatives are assumed to be continuous\footnote{Recall that the assumption behind dividing the boundary into segments is that over each segment the partial derivatives of $R_{ws}$ with respect to $\rho$ and/or $\theta$ are continuous.}, it follows that a composite step can be replaced by two consecutive simple steps. The proof is based on showing that if the distribution at the starting point of a simple $\delta$-step is Gaussian, it will  be Gaussian at its end point as well. More precisely, it will be shown that the Kullback–Leibler divergence between distributions of $Y_1$ (as well as $Y_2$) at the start and end points of a simple $\delta$-step is $O(\delta^2)$. This condition entails that if the distribution of $Y_1$ (and/or $Y_2$) at the start point on a segment is Gaussian, it will be Gaussian at its end point as well.  As the number of simple steps along each segment of the boundary is $O(1/\delta)$, it follows that the Kullback–Leibler divergence between the distributions of $Y_1$ (as well as $Y_2$) at the start and end point on any boundary segment will be $O(\delta)\rightarrow 0$. 
On the other hand, since the distributions of $Y_1$ (as well as $Y_2$) at the starting point on each segment is almost Gaussian\footnote{In the sense that the difference between entropy of the distribution and the entropy of a Gaussian distribution of the same variance is infinitely small.}, the corresponding distribution at the end point will be almost Gaussian as well. 
The proof over each  $\delta$-step is based on the following points: 

\begin{enumerate}

\item Consider a sequence of consecutive simple $\delta$-steps, each involving an infinitesimal change in $\rho$ or $\theta$ (but not changing both $\rho$ and $\theta$ in a single step). 

\item Superimpose a binary channel on top of the underlying GIC with inputs corresponding to selection of a transmitter's {\em states} at the start and end points of the corresponding $\delta$-step.

\item Adjust the rate embedded in the superimposed binary channel such that the corresponding binary input is detectable at both receivers, i.e., at $Y_1$ and at $Y_2$. 

\item Deploy a successive decoding at $Y_1$ and $Y_2$ to first detect the input  of the superimposed binary channel, and then, knowing the corresponding transmitter's states,  decode the main code-books for the GIC. As a result, the rate of the binary channel will be additional to the rate corresponding to time-sharing between the start and end points on the $\delta$-step, with a time sharing coefficient which is equal to the optimum (capacity achieving) input probability (probability of binary channel input symbols) for the superimposed binary channel. 

\item The extra rate due to the binary channel should be 
$\ll O(\delta)\approx O(\delta^2)$, otherwise, it would violate the optimality of the infinitesimal segment of the boundary between the start and end points of the $\delta$-step.  Note that as $\delta \rightarrow 0$, the approximation (in terms of $\delta$) to the boundary bounded within a $\delta$-step is a line (first order, linear approximation) which coincides with the time-sharing line  between the start and the end points on the corresponding $\delta$-step. A separation (between the actual boundary and the time sharing line) in the order of $O(\delta)$ would contradict that the time-sharing line is  the first order approximation  to the actual boundary. 

\item Relying on capacity arguments presented in~\cite{Gallagher} (applied to the superimposed, binary input, continuous output channel), the separation between boundary and time-sharing line is shown to be related to two equal Kullback–Leibler divergence values: (i) the Kullback–Leibler divergence  between ``an intermediate point on the boundary, $M$'' with ``the starting point on the $\delta$-step'', and (ii) the Kullback–Leibler divergence  between $M$ with ``the end point on the $\delta$-step".  Location of point $M$ (in terms of its Kullback–Leibler divergence  to the "starting point" and to the "end point" on the $\delta$-step) is shown to be related to the capacity of the superimposed binary channel. 
Let us assume, without loss of generality, that  the simple $\delta$-step involves user 1, i.e, changing $\rho$, and consequently, $Y_1$ is considered.
For the maximum deviation of the actual boundary from the time-sharing line to be in the order of $O(\delta^2)$, the Kullback–Leibler divergence between the distributions of $Y_1$ at the start and end points on the simple $\delta$-step should be $O(\delta^2)$. By limiting the two Kullback–Leibler divergence values, which are shown to be equal, it is proved that if $Y_1$ at the starting point of the $\delta$-step is Gaussian, then it will be Gaussian at the corresponding end point as well. Note that although the argument here has focused on $Y_1$, a similar argument is applicable to simple $\delta$-steps involving $\theta$, concluding similar results for the distribution of $Y_2$.

\item Noting that $Y_1=X_1+aX_2+Z_1$ and $Y_2=X_2+bX_1+Z_2$, where $Z_1~N(0,1)$ and $Z_2~N(0,1)$, it is concluded that for $Y_1$ and $Y_2$ to be Gaussian,  the distributions of $X_1$ and $X_2$ has to be Gaussian as well. Note that the powers of $Y_1$ and $Y_2$ at all points along the boundary are equal to $P_1+\sigma^2$ and $P_2+\sigma^2$ , where $\sigma^2$, power of the additive Gaussian  noise,  is normalized to one. 

\item Noting $X_1$ and $X_2$ are Gaussian, it follows that the code-books for the private and public messages should be Gaussian as well, and the code-books for $X_1$ (as well as $X_2$) are obtained by superimposing (adding) Gaussian sub-layers for private and public parts. These points will be formally proved in Appendix~\ref{APP2}.
\end{enumerate}

\subsubsection{A closer look at $\delta$-steps along the boundary}

 To proceed with the formal proof, let us consider Figs. \ref{segments-proof} and~\ref{segments-proof2}, showing  a  simple step from state-tuple $(\mathcal{S}_1, \mathcal{S}_{2})$ to state-tuple $(\mathcal{S'}_1, \mathcal{S}_{2})$, followed by a second  simple step from state-tuple $(\mathcal{S'}_1, \mathcal{S}_{2})$ to state-tuple $(\mathcal{S'}_1, \mathcal{S'}_{2})$. Together, the two simple steps form a composite step from $(\mathcal{S}_1, \mathcal{S}_{2})$ to $(\mathcal{S'}_1, \mathcal{S'}_{2})$. 
An (stationary) optimum  solution with respect to $\theta$ corresponds to,
\begin{eqnarray}
\frac{\partial R_{ws}}{\partial \theta}=\frac{\partial R_1}{\partial \theta}+\mu 
\frac{\partial R_2}{\partial \theta}=0 \Longrightarrow \mu=- \frac{\displaystyle\frac{\partial R_1}{\partial \theta}}{\displaystyle\frac{\partial R_2}{\partial \theta}}\simeq  -\frac{\Delta^{\theta} R_1}{\Delta^{\theta} R_2}
\label{partial1}
\end{eqnarray}
where $\Delta^{\theta} R_1$ and $\Delta^{\theta} R_2$ show the changes in $R_1$ and $R_2$ due to the change in $\theta$.  
Likewise, a (stationary) optimum solution with respect to $\rho$ corresponds to,
\begin{eqnarray}
\frac{\partial R_{ws}}{\partial \rho}=\frac{\partial R_1}{\partial \rho}+\mu 
\frac{\partial R_2}{\partial \rho}=0 \Longrightarrow \mu=- \frac{\displaystyle\frac{\partial R_1}{\partial \rho}}{\displaystyle\frac{\partial R_2}{\partial \rho}}\simeq  -\frac{\Delta^\rho R_1}{\Delta^\rho R_2}
\label{partial2}
\end{eqnarray}
where $\Delta^{\rho} R_1$ and $\Delta^{\rho} R_2$ show the changes in $R_1$ and $R_2$ due to the change in $\rho$.  Recalling that $\theta P_1$ and $\rho P_2$ are the fractions of power allocated to the private messages of user 1 and user 2, respectively, it follows that,  $R_1$ is a monotonically increasing function of $\theta$ and a monotonically decreasing function of $\rho$. Likewise, $R_2$ is a monotonically decreasing function of $\theta$ and a monotonically increasing function of $\rho$.

  \begin{figure}[htp]
   \centering
   \includegraphics[width=0.75\textwidth]{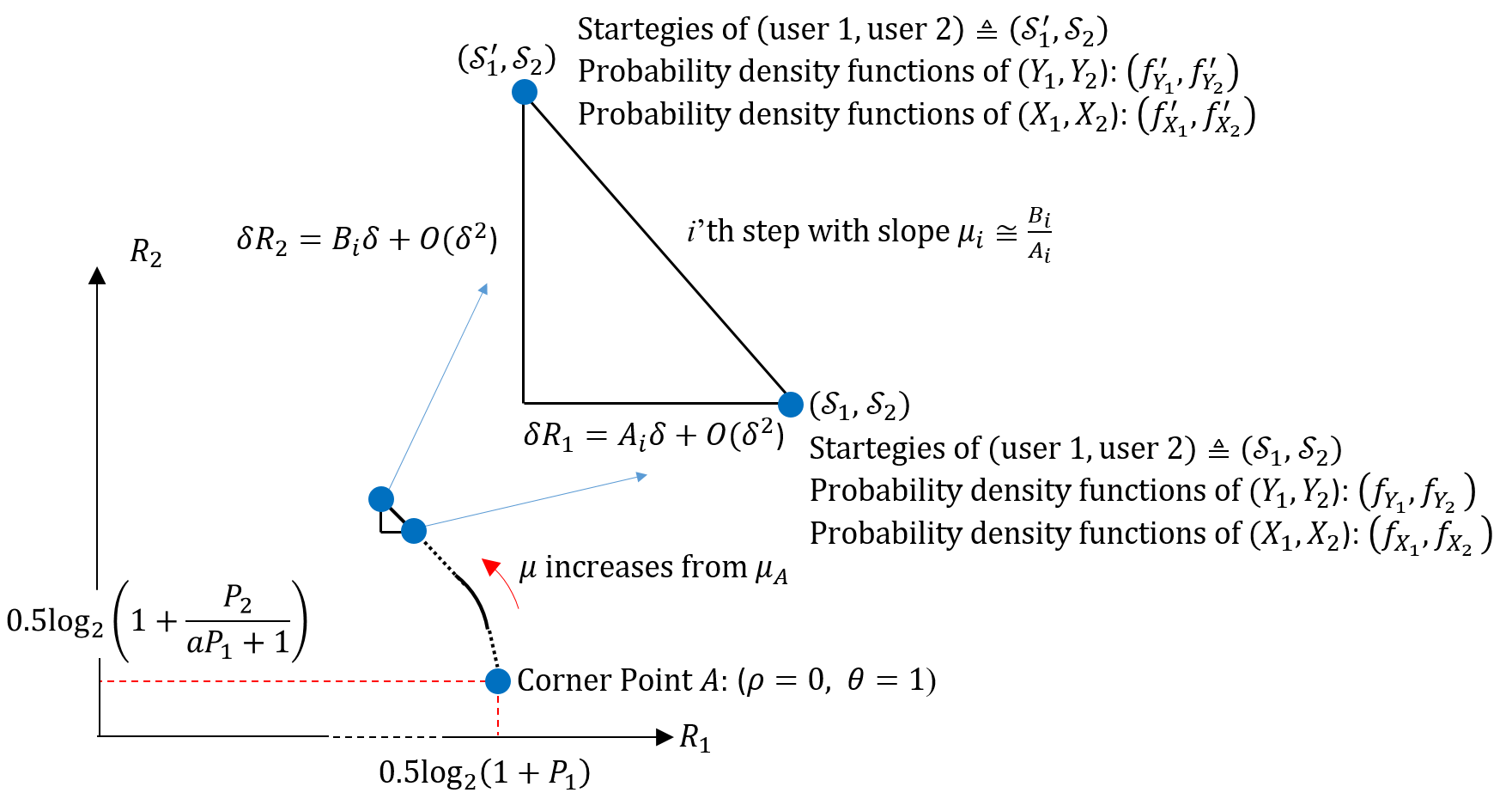}
   \caption{Strategies corresponding to the start and end points on a  simple $\delta$-step. $A_i$ and $B_i$ denote the values of partial derivatives at the point under study on the boundary.}
   \label{segments-proof}
 \end{figure}

   \begin{figure}[htp]
   \centering
   \includegraphics[width=0.6\textwidth]{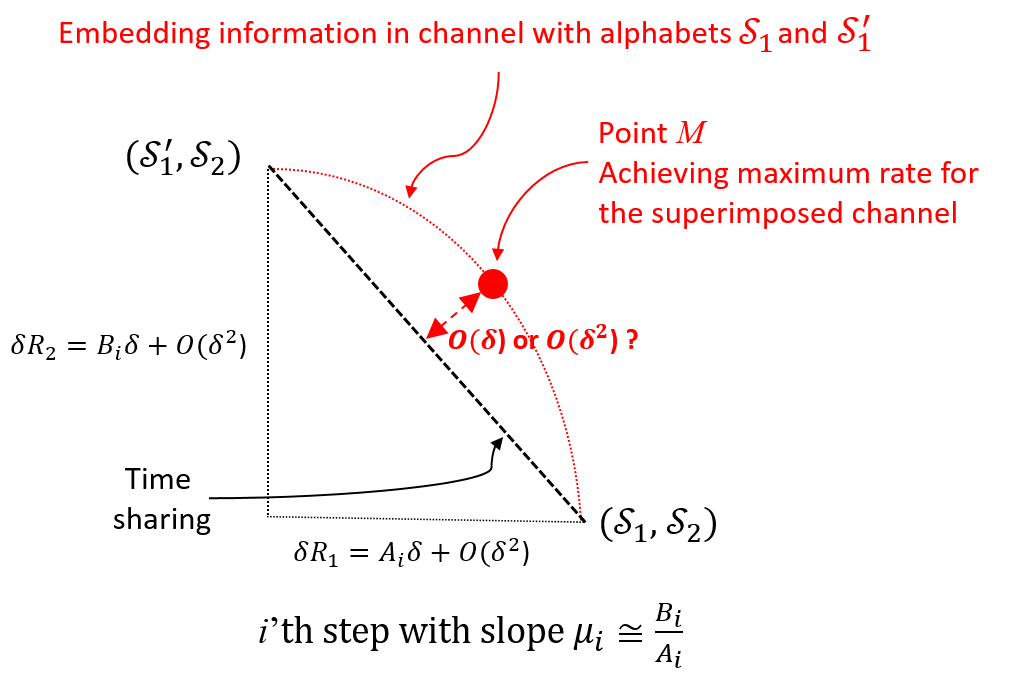}
   \caption{Straight line (time sharing line) connecting the start and end points on an infinitesimal segment (a simple $\delta$-step). $A_i$ and $B_i$ denote the values of partial derivatives at the point under study on the boundary.}
   \label{segments-proof2}
 \end{figure}

   \begin{figure}[htbp]
   \centering
   \includegraphics[width=0.37\textwidth]{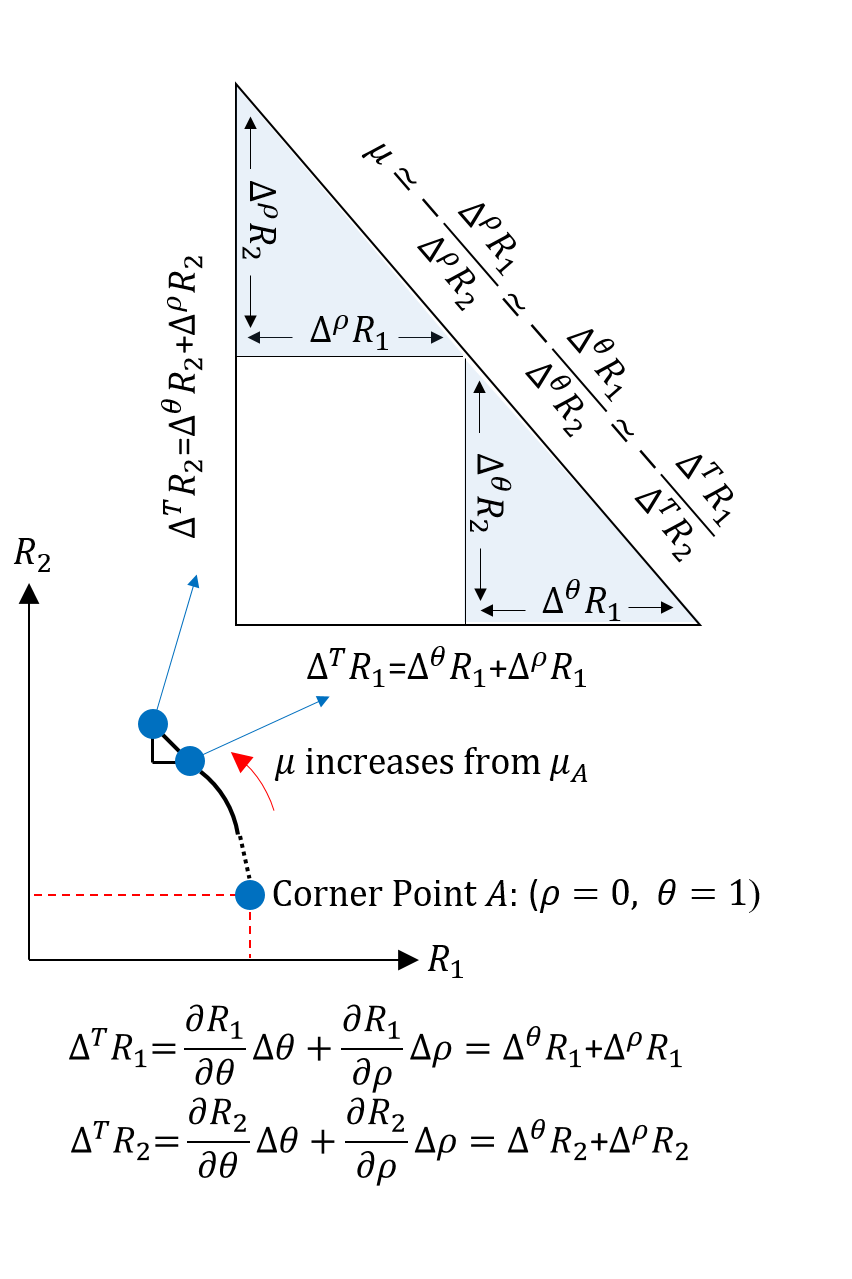}
   \caption{Moving  counterclockwise along a segment of the boundary where partial derivatives, and consequently $\mu$, are continuous. Two consecutive simple steps are constructing  a composite step.}
   \label{delta-delta}
 \end{figure}

   \begin{figure}[htp]
   \centering
   \includegraphics[width=0.6\textwidth]{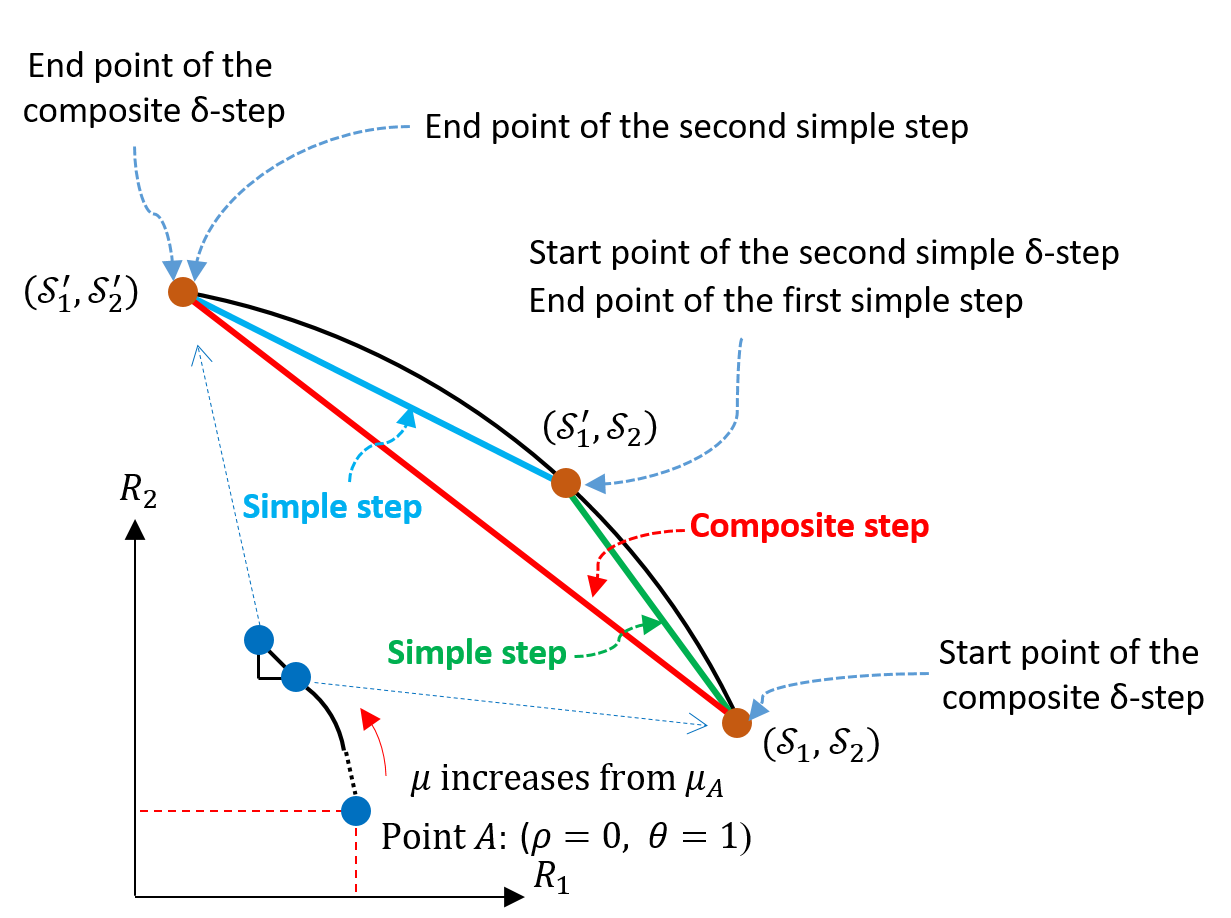}
   \caption{Consecutive simple steps constructing  a composite step.}
   \label{simple-vs-composite-step}
 \end{figure}

Using \ref{partial1} and \ref{partial2},  we obtain (see Fig.\ref{delta-delta}),
\begin{eqnarray}
\mu=\frac{\Delta^{\theta} R_1}{\Delta^{\theta} R_2}
\overset{~(a)~}{=}
\frac{\Delta^\rho R_1}{\Delta^\rho R_2}\overset{~(b)~}{=}
\frac{\Delta^\theta R_1+\Delta^\rho R_1}{\Delta^\theta R_2+\Delta^\rho R_2}
\label{partial3}
\end{eqnarray}
where \ref{partial3}\,(a) is obtained by combining \ref{partial1} and  \ref{partial2}, and \ref{partial3}\,(b) is obtained by simple manipulation of the two sides of \ref{partial3}\,(a). In summary, a composite step moving along $\theta$ and $\rho$  can be decomposed into two simple steps, a  first simple step  along $\theta$, followed by a second simple step along $\rho$, or vice versa (see Fig.~\ref{delta-delta}).  Figure~\ref{simple-vs-composite-step} demonstrates the difference between a composite step with its associated approximating line, as compared to having two simple steps each with its own approximating line. Moving a $\delta$-step along the boundary has the following characteristics:
\begin{enumerate}
\item The start and end point of each $\delta$-step are on the boundary.
\item The line connecting the start and end points of a $\delta$-step is the best linear approximation to the actual segment of the boundary in the sense that the error between approximating line and the actual boundary is $\ll\delta$, denoted as $O(\delta^2)$. In other words, the approximating line (time sharing between the start and end points on the $\delta$-step) results in a rate that is within  $O(\delta^2)$ of the actual boundary points within the corresponding $\delta$-step. 
\end{enumerate}

\subsubsection{Details of the proof concerning optimality of Gaussian inputs} \label{sec313}

{\bf Proof:} 
Let us consider a composite $\delta$-step composed of two simple $\delta$-steps as shown in Fig.~\ref{simple-vs-composite-step}. The first simple $\delta$-step, involving $\rho$ and $\mathcal{S}_1$, starts at point $(\mathcal{S}_1, \mathcal{S}_{2})$ and ends at point $(\mathcal{S'}_1, \mathcal{S}_{2})$. 
Fig.~\ref{segments-proof-binary} shows a general superimposed binary channel with inputs $(\mathcal{S}_1, \mathcal{S}_{2})$  and $(\mathcal{S}^{'}_1, \mathcal{S}^{'}_{2})$.  Let us consider a simple $\delta$-step involving $\rho$, i.e., user 1, only. The corresponding superimposed binary channel is shown in Fig.~\ref{segments-proof-binary}\,(2), where user 1 can transmit additional information by selecting one of the two symbols $(\mathcal{S}_1, \mathcal{S}^{'}_1)$ while user 2 selects $\mathcal{S}_{2}$ in all transmissions. Figure~\ref{segments-proof-binary}\,(3) depicts a simplified form for the channel in~\ref{segments-proof-binary}\,(2). 

      \begin{figure}[htp]
   \centering
   \includegraphics[width=0.6\textwidth]{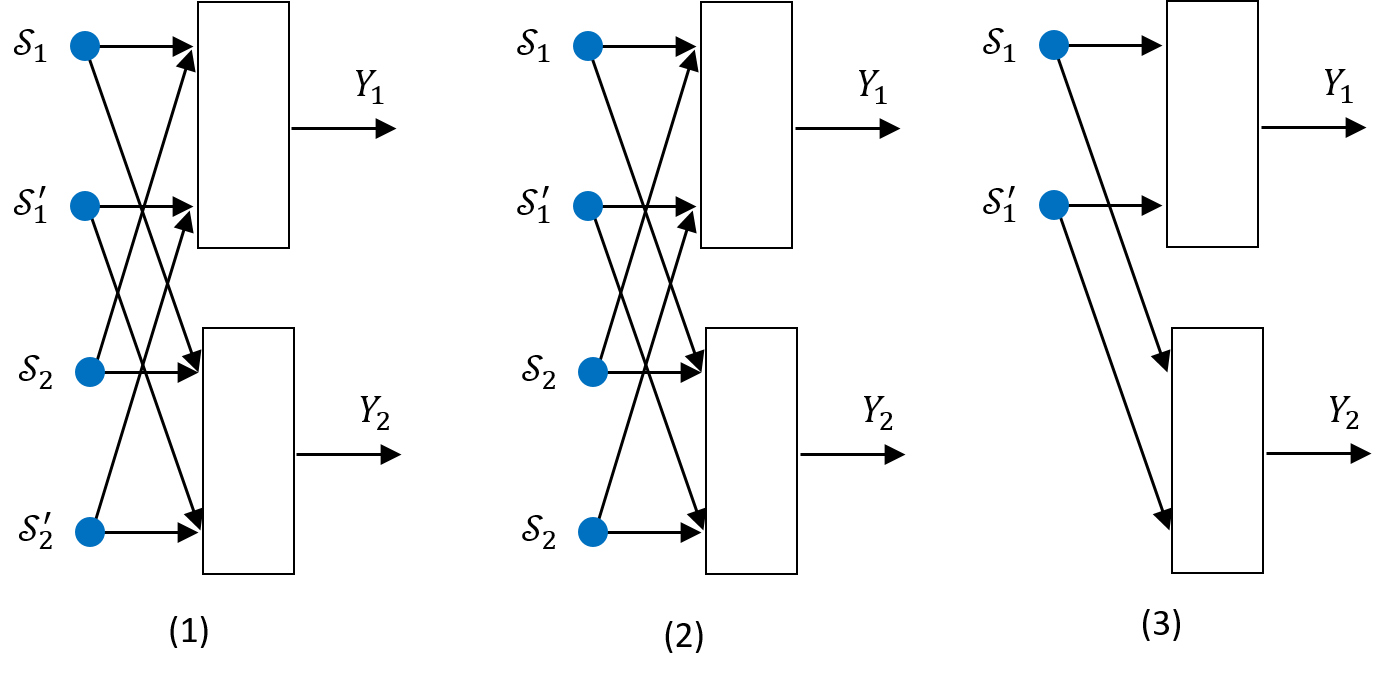}
   \caption{Superimposed binary input continuous output channel  used to embed information in the selection of strategies corresponding to the start and end points on an infinitesimal segment, i.e., $\delta$-step: Fig.~\ref{segments-proof-binary}\,(1) represents a composite $\delta$-step, Fig.~\ref{segments-proof-binary}\,(2) represents a simple $\delta$-step for which only ${\cal S}_1$ has changed, and finally, Fig.~\ref{segments-proof-binary}\,(3) is a simplified representation of Fig.~\ref{segments-proof-binary}\,(2) where ${\cal S}_2$ is omitted.}  
   \label{segments-proof-binary}
 \end{figure}

Let us assume user 1 embeds some information in the binary channel with input symbols $\mathcal{S}_1$ and $\mathcal{S}_{1}^{'}$. To do so without compromising the rate of the underlying 2-users GIC, it is required that the binary symbol  selected by user 1 is detectable at both receivers, so the embedded rate should  be the smaller of the rates detectable at $Y_1$ and $Y_2$. To continue, let us assume the two inputs for the binary channel in Fig.~\ref{segments-proof-binary}\,(3) are used with probabilities $\psi$ and $1-\psi$, where $\psi$ is optimized to maximize the smaller of the two rates, $C\left[(\mathcal{S}_1, \mathcal{S}_{1}^{'})  \rightarrow Y_1 \right]$ and 
$C\left[(\mathcal{S}_1, \mathcal{S}_{1}^{'})  \rightarrow Y_2 \right]$, where
\begin{equation} 
\mbox{Rate~of}~(\mathcal{S}_1, \mathcal{S}_{1}^{'}) \rightarrow Y_1 \triangleq C\left[(\mathcal{S}_1, \mathcal{S}_{1}^{'})  \rightarrow Y_1 \right]
\label{Eq-proof2}
\end{equation}
\begin{equation} 
\mbox{Rate~of}~(\mathcal{S}_1, \mathcal{S}_{1}^{'}) \rightarrow Y_2 \triangleq C\left[(\mathcal{S}_1, \mathcal{S}_{1}^{'})  \rightarrow Y_2 \right].
\label{Eq-proof3}
\end{equation} 
In this case, the receivers will first detect the symbol corresponding to the superimposed binary channel. As $\mathcal{S}_{2}$ has remained the same in all transmissions, and $\mathcal{S'}_{1}$ is detected error-free at both receivers, it is concluded that receivers can detect portions of the transmission for which the coding scheme associated with the starting point of the associated $\delta$-step is in effect, and likewise,  can detect portions of the transmission for which the coding scheme associated with the end point of the associated $\delta$-step is in effect. This allows to perform the encoding/decoding that are optimized for the starting point, as well as for the end point, of the $\delta$-step under consideration. 
With some misuse of notation,  $R_{ws}(\mathcal{S}_1, \mathcal{S}_{2})$ is used to emphasize the dependency of $R_{ws}$ on states $\mathcal{S}_1$ and  $\mathcal{S}_{2}$, respectively. 
The rate after decoding the alphabet of the superimposed binary channel will be: 
\begin{equation}
\psi R_{ws}(\mathcal{S}_1, \mathcal{S}_{2})+(1-\psi) R_{ws}(\mathcal{S}^{'}_1, \mathcal{S}_2),~\psi\in[0,1].
\label{Eq-proof3p}
\end{equation}
The rate of GIC in~\ref{Eq-proof3p} is equivalent to rate achieved by time-sharing between the start and end points on the first  simple $\delta$-step with time-sharing coefficient $\psi$, i.e., the probability maximizing the smaller of the two rates in the superimposed binary channel. The corresponding line coincides with the first order approximation to the boundary curve mentioned earlier.  
This rate in~\ref{Eq-proof3p} is summed up with the additional data embedded in the superimposed binary channel. As the solution without the additional rate due to the superimposed binary channel is assumed  to be optimum, the gap between the total rate and the time sharing rate in~\ref{Eq-proof3p} should be $\ll \delta$,  denoted as $O(\delta^2)$. This means the additional rate due to the superimposed binary channel should be $O(\delta^2)$ or less.     

The point with maximized superimposed rate is shown as point $M$ in Fig.~\ref{segments-proof2}. Without loss of generality, let us assume binary channel $(\mathcal{S}_1, \mathcal{S}_{1}^{'}) \rightarrow Y_1$ has a lower rate. This means user 1 can embed $C\left[(\mathcal{S}_1, \mathcal{S'}_1)\rightarrow Y_1\right]$ bits in $X_1$ which would increase $R_1$
 without decreasing   $R_2$. 
Using Theorem 4.5.1 on page 91 of \cite{Gallagher}, the capacity of the superimposed binary channel is equal to the maximum of the mutual information between each input, averaged over the output, which can be directly expressed in terms of the  Kullback–Leibler divergence. 
To be consistent with the notations used in~\cite{Gallagher}, let us use the notation $s$ to refer to the  input symbols for the binary channel in Fig.~\ref{segments-proof-binary}.  The optimality conditions in expression 4.5.3 on page 91 of \cite{Gallagher}  (rewritten for a channel with binary input and continuous output) are as follows: 
\begin{equation}
C\left[(\mathcal{S}_1, \mathcal{S}_{1}^{'})  \rightarrow Y_1 \right]=\max_\psi I(s=\mathcal{S}_1;Y_1)=\max_\psi I(s=\mathcal{S}_{1}^{'};Y_1)
\end{equation}
or,
\begin{equation}
C\left[(\mathcal{S}_1, \mathcal{S}_{1}^{'})  \rightarrow Y_1 \right]=\max_\psi \int_{Y_1}p_{Y_1|\mathcal{S}_1}(y_1)\log_2\frac{p_{Y_1|\mathcal{S}_1}(y_1)}{p_{Y_1}(y_1)} d y_1=
\max_\psi \int_{Y_1}p_{Y_1|\mathcal{S}_{1}^{'}}(y_1)\log_2\frac{p_{Y_1|\mathcal{S}_{1}^{'}}(y_1)}{p_{Y_1}(y)}d y_1
\label{KL1}
\end{equation}
where maximization is over the probabilities of binary input symbols $\psi=P(\mathcal{S}_1)$ and $1-\psi=P(\mathcal{S}_{1}^{'})$.   
From~\ref{KL1}, and definition of Kullback–Leibler divergence, it is concluded,
\begin{equation}
C\left[(\mathcal{S}_1, \mathcal{S}_{1}^{'})  \rightarrow Y_1 \right]=
\max_\psi D_{K\!L}(p_{Y_1|\mathcal{S}_1}(y_1)||p_{Y_1}(y_1)) = 
\max_\psi D_{K\!L}(p_{Y_1|\mathcal{S}_{1}^{'}}(y_1)||p_{Y_1}(y_1))
\label{KL2}
\end{equation}
where $p_{Y_1|\mathcal{S}_1}(y_1)$ and $p_{Y_1|\mathcal{S}_{1}^{'}}(y_1)$ are the probability distribution of $Y_1$ at the start and end points on the infinitesimal $\delta$-step in Fig.~\ref{segments-proof2}, and 
\begin{equation}
 p_{Y_1}(y_1)=\psi p_{Y_1|\mathcal{S}_1}(y_1) +(1-\psi)p_{Y_1|\mathcal{S}_{1}^{'}}(y_1)
\label{KL3}
\end{equation}
is the probability density of $Y_1$ at point $M$ in Fig.~\ref{segments-proof2}. 
Since the infinitesimal segment is part of the boundary, the maximum separation between the time sharing line (straight line in Fig.~\ref{segments-proof2}) and the furthest point on the actual boundary curvature should be $O(\delta^2)$. This means, in Fig.~\ref{segments-proof2}, the separation between point $M$ and time sharing line is  $O(\delta^2)$. 

 As mentioned earlier,  point $M$ in Fig.~\ref{segments-proof2} maximizes the smaller of the two rates given in~\ref{Eq-proof2} and \ref{Eq-proof3} (see Fig.~\ref{segments-proof-binary}). Using notation $p_M(y_1)$ to refer to the distribution of $Y_1$ at point $M$ in Fig.~\ref{segments-proof2}, and noting expression~\ref{KL2}, we have
\begin{equation}
D_{K\!L}[p_M(y_1)||p_{Y_1|\mathcal{S}_1}(y_1)]=D_{K\!L}[p_M(y_1)||p_{Y_1|\mathcal{S'}_1}(y_1)]\ll \delta \approx O(\delta^2).
\label{KL-main}
\end{equation}
Noting that power will be entirely used, let us use the notation $A$ to refer to the variance of distribution of $Y_1$ at points along the infinitesimal $\delta$-step in Fig.~\ref{segments-proof2}. As mentioned earlier, notation $\hat{\epsilon}$ is used to refer to any infinitesimal value regardless of its sign. Since $p_{Y_1|\mathcal{S}_1}(y_1)$ is Gaussian by assumption ($Y_1$ at the starting point is  Gaussian), from Theorem~\ref{Th7}, it follows that $p_M(y_1)$ is Gaussian as well.  Noting from \ref{KL-main}, we have 
$D_{K\!L}[p_M(y_1)||p_{Y_1|\mathcal{S'}_1}(y_1)]\approx \hat{\epsilon}$. Finally, noting Theorem~\ref{Th7} and also noting that $p_M(y_1)$ is Gaussian, it is concluded that 
$p_{Y_1|\mathcal{S'}_1}(y_1)$, i.e., the distribution at the end point of the $\delta$-step is Gaussian as well, i.e., 
\begin{equation}
\log_2(p_{Y_1|\mathcal{S}_{1}^{'}}(y_1))\approx \left(\frac{1}{2}\log_2(2\pi e A)\times \frac{y_1^2}{A}\right)+\hat{\epsilon}.
\label{eq116p}
\end{equation}
Expression~\ref{eq116p} entails that the distribution at the starting point of the next  $\delta$-step is Gaussian. Relying on a similar reasoning as in the previous $\delta$-step, one can continue until reaching the end point on the segment under consideration. As the end point on one segment is the starting point on the next segment, the argument can continue until all segments are covered. On the other hand, $Y_1=X_1+\sqrt{a}X_2+Z_1$ and $Y_1=X_2+\sqrt{b}X_1+Z_2$, where $Z_1$ and $Z_2$ are Gaussian. Cramér’s decomposition theorem states that if the sum of several independent random variables in Gaussian, then each of those random variable has to be Gaussian as well. Here, as $X_1$, $X_2$ and $Z_1$ are independent and $Z_1$ is Gaussian, a Gaussian $Y_1$ entails $X_1$ and $X_2$ should be Gaussian. A similar conclusion is derived about $X_1$ and $X_2$ being Gaussian, noting that $Y_2$ and $Z_2$ are Gaussian.  $\square$

\subsubsection{Layering of Public/Private Messages and Decoding Order} The upper bounding region (corresponding to the intersection of MACs) discussed  in Appendix~\ref{APP2} is optimized  by layering based on $X_1=U_1+V_1$, $X_2=U_2+V_2$ (needed in order to achieve the capacity region of the underlying MACs whose intersection forms the upper bounding region). 
Section \ref{sec2} for gradual construction of the lower part of the boundary  verifies that the procedure is indeed realizing the intersection of MAC1 and MAC2. Optimum power allocation is then deployed to enlarge the overlapping regions in order to maximize 
$R_{ws}=R_1+\mu R_2$. Realizing the bounding region requires  that the private and public code-books are Gaussian, independent of each other and added (to form each transmitted signal), resulting in superposition coding. It also entails that public messages should be decoded prior to private messages. $\square$

\subsubsection{Optimality of Gaussian Distribution for Vector Inputs} \label{sec3.3}

A question may arise if capacity achieving inputs in 2-users weak GIC could be vectors with non-i.i.d. components, i.e., ${\bf x}_1=\{X_{11},X_{12},...,X_{1\mathsf{D}}\}$ and ${\bf x}_2=\{X_{21},X_{22},...,X_{2\mathsf{D}}\}$, resulting in outputs 
${\bf y}_1={\bf x}_1+{\bf z}_1$ and ${\bf y}_2={\bf x}_2+{\bf z}_2$, where $\mathsf{D}$ is the dimension of input/output vectors,. In this case, each component of ${\bf x}_1$, and
 likewise each component of ${\bf x}_2$, rely on certain allocation of total power budget. Let use the notation ${\bf p}_1=\{P_{11},P_{12},...,P_{1\mathsf{D}}\}$ 
and ${\bf p}_2=\{P_{21},P_{22},...,P_{2\mathsf{D}}\}$ to specify the power allocated to different components of input vectors, ${\bf x}_1$ and ${\bf x}_2$. In this case, point $A$, which maximizes $R_1$, is composed of private messages of powers 
$\{P_{11},P_{12},...,P_{1\mathsf{D}}\}$ for user 1, and public messages of powers 
$\{P_{21},P_{22},...,P_{2\mathsf{D}}\}$ for user 2. It is well know that the capacity of a point-to-point AWGN channel, if the input is viewed as a vector, is achieved using a vector with i.i.d. Gaussian components. As a result, to maximize $R_1$ at point $A$, code-books across different coordinates will be independent of each other, and will have an equal allocation of power, i.e., 
\begin{equation}
P_{11}=P_{12}...=P_{1\mathsf{D}}=\frac{\sum_{d=1}^\mathsf{D} P_{1d}}{\mathsf{D}}.
\label{eq128}
\end{equation}
As a result, the rate $R_2$ is maximized for 
\begin{equation}
P_{21}=P_{22}...=P_{2\mathsf{D}}=\frac{\sum_{d=1}^\mathsf{D} P_{2d}}{\mathsf{D}}.
\label{eq129}
\end{equation}
Expression \ref{eq129} is concluded because user 2 sees a set of parallel Gaussian channels with equal noise powers, i.e.,  
\begin{equation}
\sigma^2+\frac{\sum_{d=1}^\mathsf{D} P_{1d}}{\mathsf{D}},
\label{eq129p}
\end{equation}
where $\sigma^2=1$ is the power of original AWGN.  
The final conclusion is that both ${\bf x}_1$ and ${\bf x}_2$ will be maximum entropy vectors (Gaussian with i.i.d. components). 

For the case of using vector inputs, similar to the case of using scalar inputs, boundary is divided into segments with continuous slope. Moving along the boundary is achieved by taking simple steps. The difference with the case of scalar inputs is that, for vector inputs, each strategy involves power allocation and encoding/decoding of vectors ${\bf x}_1$ and ${\bf x}_2$. Now consider two strategies that are in effect at the starting point and at the end point on a simple step. Similar to the case of scalar inputs, one can embed information in the selection of the two strategies in a manner that the selected strategy is decodable at both receivers. Following a line of reasoning similar to the case with scalar inputs, it follows that the  capacity of superimposed binary channel  is $O(\delta^2)$. Relating the capacities of superimposed binary channels to Kullback–Leibler divergence, it follows that the distribution of ${\bf y}_1$ at the starting point on the simple step will be the same as the distribution of ${\bf y}_1$ at its end point.  A similar conclusion is valid for the  distribution of ${\bf y}_2$, which will be the same  at the starting point and at the end point on the simple step. On the other hand, the distributions of ${\bf y}_1$ and ${\bf y}_2$ at the starting point $A$ (starting point on the simple step) are maximum entropy vectors (Gaussian with i.i.d. components). Noting Theorem~\ref{Th7p},  it is concluded that ${\bf y}_1$ and ${\bf y}_1$ at the end point of the first simple $\delta$-step (simple $\delta$-step starting from point $A$) will be maximum entropy vectors as well (Gaussian with i.i.d. components).  Continuing from the end point of the first $\delta$-step, it follows that  ${\bf y}_1$ and ${\bf y}_2$ will be  maximum entropy vectors throughout the segment of the boundary under consideration. On the other hand, as the AWGN terms are Gaussian vectors of maximum entropy, for ${\bf y}_1$ and ${\bf y}_2$ to be maximum entropy, ${\bf x}_1$ and ${\bf x}_2$ should be maximum entropy as well. 
This means, the solution for given values of power allocated to ${\bf x}_1$ and ${\bf x}_2$ will be i.i.d. repetition of a single letter Gaussian $X_1$ and a single letter Gaussian $X_2$. In other words, the random code-book obtained by forming the Gaussian vectors with i.i.d. components, and i.i.d. repetition of such vectors in time, could be also obtained by  generating a simple code-book with i.i.d. components in time.  

The above lines of arguments will be valid for any strategy, i.e., allocation of power budgets, time and bandwidth to the two users. This means, the optimum solution for a given strategy is composed of random code-books with i.i.d. Gaussian components.   By optimally dividing the power budgets, as well as time and bandwidth, among such i.i.d. single letter solutions, and time sharing among them (to compute the upper concave envelope), one may obtain composite code-books composed of Gaussian vectors with independent, but not necessarily identically distributed components.  Derivation of all possible strategies, and computing the resulting upper concave envelope, is beyond the scope of this article. 

In situations that the upper concave envelope dictates using more than one strategy, components of  $X_1$ (as well as $X_2$) will be independent Gaussian, but with different variances. In other words, $X_1$ and $X_2$ are Gaussian vectors with diagonal correlation matrices, but with unequal diagonal elements. Indeed, in such a  case, diagonal elements can be divided into groups, where the elements within each group are equal, but are not equal to the elements within a different group. The number of groups will be equal to the number of strategies contributing to the formation of the upper concave envelope. 
It is well known that in such a case, changing the basis for $X_1$ and $X_2$ transforms $X_1$ and $X_2$ into Gaussian vectors with non-diagonal correlation matrices, meaning the components become dependent on each other. 
 
\section{Relation to Han Kobayashi (HK) Rate  Region} \label{sec4}

 It is shown that the HK achievable rate region, upon removal of some redundant  constraints, include the capacity region of the 2-users weak GIC. Note that optimizing the HK region includes all possible decoding orders, including those pursued in capacity achieving procedures. Relying on this point, selection of constraints to be removed is guided by encoding/decoding  procedures deployed in the upper bounding region.  

The solution obtained in this manner expresses $R_{ws}$ in terms of the linear combination of some mutual information terms.  Then, power can be optimally allocated to maximize the resulting $R_{ws}$.

Expanded Han-Kobayashi constraints\footnote{See expressions 3.2 to 3.15 on page 51 of~\cite{CIC4}, with the changes ($\mbox{current~article} \leftrightarrow [2]$):
$U_1 \leftrightarrow W_1$, 
$U_2 \leftrightarrow W_2$,
$V_1 \leftrightarrow U_1$, 
$V_2 \leftrightarrow U_2$,
$R_{U_1}\leftrightarrow T_1$, $R_{U_2}\leftrightarrow T_2$, 
$R_{V_1}\leftrightarrow S_1$, $R_{V_2}\leftrightarrow S_2$.
Note that, time sharing parameter $Q$ in ~\cite{CIC4} represents a {\em strategy} in this article.} and the associated optimization problem  can be expressed as~\cite{CIC4},

\begin{align} 
\label{HK1p}
\mbox{Maximize:}~~~~~ & R_1+\mu R_2  \\
\mbox{Subject to:}~~~ &   \nonumber \\ \label{HK1}
 R_{U_1}    & ~~{\le}~~   I(U_1;Y_1|U_2,V_1)     \\ \label{HK2}
 R_{U_1}  & ~~{\le}~~   I(U_1;Y_2|U_2,V_2)    \\ \label{HK3}
 R_{U_2}   & ~~{\le}~~    I(U_2;Y_1|U_1,V_1)    \\ \label{HK4}
 R_{U_2}   & ~~{\le}~~   I(U_2;Y_2|U_1,V_2)    \\  \label{HK5}
 R_{V_1}  & ~~{\le}~~  I(V_1;Y_1|U_1,U_2)   \\ \label{HK6}
 R_{V_2}  & ~~{\le}~~   I(V_2;Y_2|U_1,U_2)    \\ \label{HK7}  
 R_{U_1}+R_{U_2}   & ~~{\le}~~ I(U_1,U_2;Y_1|V_1)    \\  \label{HK8}
 R_{U_1}+R_{U_2}   & ~~{\le}~~   I(U_1,U_2;Y_2|V_2)  \\  \label{HK9}
R_{U_1}+R_{V_1}  & ~~{\le}~~  I(U_1,V_1;Y_1|U_2)    \\ \label{HK10}
 R_{U_2}+R_{V_2}   & ~~{\le}~~   I(U_2,V_2;Y_2|U_1)   \\ \label{HK11}
R_{U_2}+R_{V_1}   & ~~{\le}~~  I(U_2,V_1;Y_1|U_1)   \\ \label{HK12}  
R_{U_1}+R_{V_2}    & ~~{\le}~~   I(U_1,V_2;Y_2|U_2)   \\  \label{HK13}
 R_{U_1}+R_{U_2}+ R_{V_1}  & ~~{\le}~~   I(U_1,U_2,V_1;Y_1)    \\ \label{HK14}
 R_{U_1}+R_{U_2} + R_{V_2}    & ~~{\le}~~   I(U_1,U_2,V_2;Y_2)   \\ \label{HK15}
 E(X_1^2)& ~~=~~   P_1  \\  \label{HK16}
 E(X_2^2)& ~~=~~   P_2.
\end{align}
Note that in achieving the upper bounding region, 
\ref{HK5} and \ref{HK6} are satisfied with equality.  
We modify the HK constraints by substituting $R_{V_1}=I(V_1;Y_1|U_1,U_2)$ and 
$R_{V_2}=I(V_2;Y_2|U_1,U_2)$. This results in, 
\begin{align} 
\label{sHK1p}
\mbox{Maximize:}~~~~~ & R_1+\mu R_2  \\
\mbox{Subject to:}~~~ &   \nonumber \\ \label{sHK1}
 R_{U_1}    & ~~{\le}~~   I(U_1;Y_1|U_2,V_1)     \\ \label{sHK2}
 R_{U_1}  & ~~{\le}~~   I(U_1;Y_2|U_2,V_2)    \\ \label{sHK3}
 R_{U_2}   & ~~{\le}~~    I(U_2;Y_1|U_1,V_1)    \\ \label{sHK4}
 R_{U_2}   & ~~{\le}~~   I(U_2;Y_2|U_1,V_2)    \\  \label{sHK5}
 R_{V_1}  & ~~{=}~~  I(V_1;Y_1|U_1,U_2)   \\ \label{sHK6}
 R_{V_2}  & ~~{=}~~   I(V_2;Y_2|U_1,U_2)    \\ \label{sHK7}  
 R_{U_1}+R_{U_2}   & ~~{\le}~~ I(U_1,U_2;Y_1|V_1)    \\  \label{sHK8}
 R_{U_1}+R_{U_2}   & ~~{\le}~~   I(U_1,U_2;Y_2|V_2)  \\  \label{sHK9}
R_{U_1}  & ~~{\le}~~  I(U_1;Y_1|U_2)    \\ \label{sHK10}
R_{U_2}  & ~~{\le}~~   I(U_2;Y_2|U_1)   \\ \label{sHK11}
R_{U_2}  & ~~{\le}~~  I(U_2;Y_1|U_1)  \\ \label{sHK12}  
R_{U_1}    & ~~{\le}~~   I(U_1;Y_2|U_2)    \\  \label{sHK13}
 R_{U_1}+R_{U_2} & ~~{\le}~~   I(U_1,U_2;Y_1)    \\ \label{sHK14}
 R_{U_1}+R_{U_2}   & ~~{\le}~~   I(U_1,U_2;Y_2)    \\ \label{sHK15}
 E(X_1^2)& ~~=~~   P_1  \\  \label{sHK16}
 E(X_2^2)& ~~=~~   P_2.
\end{align}
It follows that \ref{sHK1}, \ref{sHK2}, \ref{sHK3}, \ref{sHK4}, \ref{sHK7} and \ref{sHK8} are redundant. Note that for each of these, a similar constraint exists among the rest of the set, but with less interfering term(s) (an interference term is removed). Removing these redundant constraints, we obtain,
\begin{align} 
\label{zHK1p}
\mbox{Maximize:}~~~~~ & R_1+\mu R_2  \\
\mbox{Subject to:}~~~ &   \nonumber \\ \label{zHK1}
 R_{V_1}  & ~~{=}~~  I(V_1;Y_1|U_1,U_2)   \\ \label{zHK2}
 R_{V_2}  & ~~{=}~~   I(V_2;Y_2|U_1,U_2)    \\ \label{zHK3}  
R_{U_1}  & ~~{\le}~~  I(U_1;Y_1|U_2)    \\ \label{zHK4}
R_{U_2}  & ~~{\le}~~   I(U_2;Y_2|U_1)   \\ \label{zHK5}
R_{U_2}  & ~~{\le}~~  I(U_2;Y_1|U_1)  \\ \label{zHK6}  
R_{U_1}    & ~~{\le}~~   I(U_1;Y_2|U_2)    \\  \label{zHK7}
 R_{U_1}+R_{U_2} & ~~{\le}~~   I(U_1,U_2;Y_1)    \\ \label{zHK8}
 R_{U_1}+R_{U_2}   & ~~{\le}~~   I(U_1,U_2;Y_2)    \\ \label{zHK9}
 E(X_1^2)& ~~=~~   P_1  \\  \label{zHK10}
 E(X_2^2)& ~~=~~   P_2.
\end{align}
The constraints in \ref{zHK1} to \ref{zHK8} clearly capture the intersection of MAC1 and MAC2 in decoding $(U_1,U2,V_1)$ at $Y_1$ while considering $V_2$ as interference, and  decoding $(U_1,U2,V_2)$ at $Y_2$ while considering $V_1$ as interference. 

Noting that from \ref{eq91n}, $R_{U_1}=R^{+}_1(2)=I(U_1;Y_2|U_2)$ and from Fig.~\ref{MAC1-2d} (see also \ref{EqHKn1}), $R_1^+(1)>R^{+}_1(2)$, we conclude \ref{zHK6} is satisfied with equality and \ref{zHK3} is redundant. Also, from Fig.~\ref{MAC1-2d} (see also \ref{eq64}), $R_2^+(2)>R^{+}_2(1)$, we conclude  \ref{zHK4} is redundant.  Substituting in \ref{zHK1} to \ref{zHK9}, we obtain,
\begin{align} 
\label{zHK1pn}
\mbox{Maximize:}~~~~~ & R_1+\mu R_2  \\
\mbox{Subject to:}~~~ &   \nonumber \\ \label{zHK1n}
 R_{V_1}  & ~~{=}~~  I(V_1;Y_1|U_1,U_2)   \\ \label{zHK2n}
 R_{V_2}  & ~~{=}~~   I(V_2;Y_2|U_1,U_2)    \\ \label{zHK3n}
R_{U_1}    & ~~{=}~~   I(U_1;Y_2|U_2)    \\  \label{zHK7n}  
R_{U_2}  & ~~{\le}~~  I(U_2;Y_1|U_1)  \\ \label{zHK6n}  
 R_{U_1}+R_{U_2} & ~~{\le}~~   I(U_1,U_2;Y_1)    \\ \label{zHK8n}
 R_{U_1}+R_{U_2}   & ~~{\le}~~   I(U_1,U_2;Y_2)    \\ \label{zHK9n}
 E(X_1^2)& ~~=~~   P_1  \\  \label{zHK10n}
 E(X_2^2)& ~~=~~   P_2.
\end{align}
It is fairly straightforward to further simplify \ref{zHK1n} to \ref{zHK8n} for each of the four cases in Fig.~\ref{MAC1-2d}. Accordingly, one can find the inequalities  that are satisfied with equality in each case, which are then solved (as a set of equations) to compute $R_1+\mu R_2$.  Then, the power is optimally allocated in order to maximize the resulting $R_{ws}$.

{\bf Remark 3:} According to basics of linear programming, it is expected that four inequality constraints to emerge among the constraints~\ref{HK1} to \ref{HK14} that will be satisfied with equality, determining (as a system of linear equations) the four unknown rate values $(R_{U_1},R_{V_1}, R_{U_2}, R_{V_2})$. This means, one more inequality among \ref{zHK7n}, \ref{zHK6n} and \ref{zHK8n} should be satiated with equality. However, optimum power allocation can cause more than 4 inequality constraints to be satisfied with equality. A intuitive explanation is as follows. 
The solution to a linear programming problem, viewing from the angle of the dual problem, can be expressed as that of minimizing a linear combination of slack variables. This means, in the optimum solution of linear program in \ref{HK1p} to \ref{HK14}, the more of the  constraints are satisfied with equality,  the more of the slack variables will become zero.  This reduces the linear combination to be minimized, which is expected to happen when the power is optimally allocated. This causes some of the removed inequalities to be satisfied with equality (when the power is optimally allocated).  $\square$

{\bf Remark 4:} Although this article has focused on weak 2-users GIC, the proof for optimality of Gaussian is independent of the values of cross gains, and thereby is universally applicable to  strong, mixed and Z interference channels, as well as to GIC with more than two users. In addition, the procedure explained for the construction of boundary can be used to compute the capacity region for arbitrary cross gain values, by re-deriving various conditions that have been established based on $a<1$ and $b<1$.  $\square$
\newpage 

\begin{center}
{\bf \LARGE Appendix}
\end{center}
\begin{appendix}

\section{Intersection of MACs as an Outer Bounding Region on the Capacity of 2-users GIC} \label{APP2}

First, for the sake of self-sufficiency, a review of relevant parts from the main text will be provided. 

{\bf Review:} A primary question is, if, in GIC, public and private messages should be jointly encoded into  Gaussian $X_1$/$X_2$ code-books, or successive encoding using $X_1=U_1+V_1$ and $X_2=U_2+V_2$ can be used as an alternative.  Joint encoding is used to provide a trade-off between rates of different messages mapped to a Gaussian code-book. Joint encoding/decoding cannot be replaced with multi-layer encoding/successive decoding if there is a restriction on the number of code layers to be used. In other words, if the allowed number of code layers does not permit realizing the desired trade-off using power allocation among layers. It is also known that joint encoding/decoding can be replaced with multi-layer encoding/successive decoding by increasing the number of code-layers. In 2-users GIC, in optimizing $R_{ws}=(R_{U_1}+R_{V_1})+\mu (R_{U_2}+R_{V_2})$, there is neither a trade-off between
$R_{U_1}$, $R_{V_1}$, nor between $R_{U_2}$, $R_{V_2}$, so joint encoding/decoding is not required. In addition, as discussed in this Appendix, a joint encoding would not allow recovering $U_1$ (or $U_2$) at $Y_1$ (or $Y_2$), while considering the private message $V_2$ (or $V_1$) as noise. Such a joint encoding would be equivalent to the trivial case of adding up Gaussian code-words corresponding to public and private messages at each transmitter. The minimum required number of layers in $X_1$ as well as in $X_2$ is equal to two, because there are two types of messages for each sender, public and private.  It may be required to further divide the public messages into sub-layers to realize specific points on the sum-rate front. $\square$

As seen in Fig.~\ref{GICfig}, 2-users GIC includes two MACs, but in analyzing their corresponding rate tuples, only a subset of rate values are included in each case. The rate values that are excluded are referred to as ``do-not-care" hereafter.  In particular, the rate of private message of user 1 (user 2) is do-not-care in MAC2 (MAC1), respectively. The rate-tuple after excluding the do-not-care entries represent the projection  on the remaining sub-space. Reversing projection operation, referred to as ``lifting", is performed by including the missing rate value. Notations $\overline{M\!AC_1}$ and $\overline{M\!AC_2}$  are used to refer to the actual MAC regions, whereas notations MAC1 and MAC2 refer to the corresponding projections. Layering of the form $X_1=U_1+V_1$ and $X_2=U_2+V_2$, with $U_1$, $V_1$, $U_2$, $V_2$ independent Gaussian will be used in conjunction with $\overline{M\!AC_1}$ and $\overline{M\!AC_2}$. 

Note that in $\overline{M\!AC_1}$ composite message  $(U_1,U_2,V_1,V_2)$ should be decodable at $Y_1$ relying on an encoding/decoding procedure designed at $X_1$ and $X_2$ for achieving the MAC capacity at $Y_1$, which has a sum-rate of $0.5\log_2(P_1+aP_2+\sigma^2/\sigma^2)$. Similarly,  in $\overline{M\!AC_2}$ composite message  $(U_1,U_2,V_1,V_2)$ should be decodable at $Y_2$ relying on an encoding/decoding procedure designed at $X_1$ and $X_2$ for achieving the MAC capacity at $Y_2$, which has a sum-rate of $0.5\log_2(bP_1+P_2+\sigma^2/\sigma^2)$. Let us assume the  same power allocation (dividing of power between $U_1$, $V_1$ and between $U_2$, $V_2$) is used in $\overline{M\!AC_1}$ and $\overline{M\!AC_2}$.  Such a  power allocation can scan the entire region of $\overline{M\!AC_1}$ (or $\overline{M\!AC_2}$), but for each point on $\overline{M\!AC_1}$ (or $\overline{M\!AC_2}$), there will be a fixed set of corresponding points in $\overline{M\!AC_2}$ (or $\overline{M\!AC_1}$). 

 Let us consider a rate tuple $(R^G_{U_1},R^{G}_{U_2},R^{G}_{V_1},R^{G}_{V_2})$ belonging to the capacity region of the 2-users GIC, which is projected on point $\mathbf{P}_1$ in sup-space 
$(R_{U_1},R_{U_2},R_{V_1})$ and 
on point $\mathbf{P}_2$ in sup-space $(R_{U_1},R_{U_2},R_{V_2})$. The intersection of a MAC region with any of its (rate) axes  corresponds to the rate in a point-to-point channel without interference, and thereby has the maximum possible value. This means $\mathbf{P}_1$ and $\mathbf{P}_2$ cannot fall outside the  region MAC1 or MAC2 simply because of having one of its rate components exceeding the maximum rate of MAC1 or MAC2 along $R_1$ axis and/or along the $R_2$ axis. In conclusion, if 
$\mathbf{P}_1$ (or $\mathbf{P}_2$) falls outside the region MAC1 (or MAC2), it should surpass the partial sum-rate corresponding to MAC1 (or to MAC2).  Figure~\ref{2Dproj} depicts a simple example of a two-dimensional projection, where the shaded triangular area is the only possible location for a point, e.g., point $\mathbf{P}$, falling outside the intersection.  Let us assume $\mathbf{P}_1$ has surpassed the sum-rate in MAC1. It is easy to see that the partial sum-rate in MAC1 is minimized if $V_2$ is the last layer to be decoded at $\overline{M\!AC_1}$.
The same encoding/decoding procedures used in GIC can be applied to achieve the projected point $\mathbf{P}_1$ in MAC1. Upon decoding, $U_1$, $V_1$ can be removed because $X_1$ is independent of $X_2$. Upon removing $U_1$, $V_1$, a Gaussian code-vector remains that uniquely points to the decoded $U_2$. In view of joint typicality, this means if $U_2$ and $V_2$ are jointly encoded, the joint encoding is a trivial case obtained by adding up independent Gaussian code-vectors of rates $R_{U_2}$, $R_{V_2}$, corresponding to layering $X_2=U_2+V_2$.
Now let us remove decoded components, i.e., $U_1$, $U_2$ and $V_1$ by subtracting their corresponding code-vectors from $Y_1$. What remains would be a Gaussian vector with i.i.d. components of power $P_{V_2}+\sigma^2$.
This provides an opportunity to generate a random Gaussian code-book of power $P_{V_2}$ to send a rate of $0.5\log_2(1+P_{V_2}/\sigma^2)$ for do-not-care entry $V_2$ in $\overline{M\!AC_1}$.  Since the rate contributed by lifting is maximum for given $P_{V_2}$ and point $\mathbf{P}_1$ surpasses the partial sum-rate of 
$R_{U_1}+R_{U_2}+R_{V_1}$ in MAC1, then the lifted point should fall outside the sum-rate of $\overline{M\!AC_1}$ which is not possible. 

The conclusion is hat GIC rate region is bounded within MAC1 and MAC2, in other words, is bounded within their intersection. On the other hand, HK-region is known to be achievable in GIC and is shown in Section~\ref{sec4} to coincide with the intersection of MACs explained above.  This proves the intersection of MACs is the optimum capacity region for GIC. 
$\square$

      \begin{figure}[htp]
   \centering
   \includegraphics[width=0.35\textwidth]{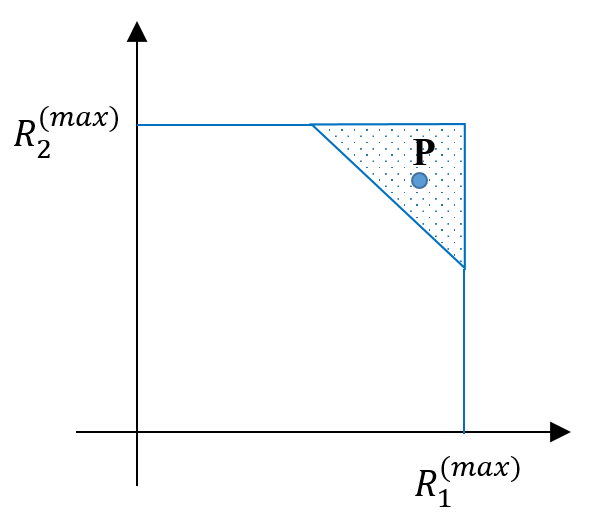}
   \caption{A simple example of a two-dimensional projection, where the shaded triangular area is the only possible location for a point falling outside the admissible region.}  
   \label{2Dproj}
 \end{figure}

\section{Superposition Coding and Successive Decoding (SC-SD) in a Point-to-Point AWGN Channel}  \label{sec1.4A}

Consider a Gaussian code-book/layer of power $\breve{P}$,  subject to an additive Gaussian noise of power $\breve{\sigma}^2$, achieving a rate (capacity) of 
\begin{equation}
\breve{R}=0.5\log_2\left(1+\frac{\breve{P}}{\breve{\sigma}^2}\right).
\label{EE2}
\end{equation} 
It is well known that, through Superposition Coding and Successive Decoding (SC-SD), the same total rate can be achieved by superimposing (adding)  two separate (continuous) Gaussian layers of powers $\gamma \breve{P}$ and $(1-\gamma)\breve{P}$, $0\leq \gamma\leq 1$, achieving rates  
\begin{equation}
\breve{R}_1=0.5\log_2\left(1+\frac{\gamma \breve{P}}{(1-\gamma) \breve{P}+\breve{\sigma}^2}\right)
\label{EE3}
\end{equation}  and 
\begin{equation}
\breve{R}_2=0.5\log_2\left(1+\frac{(1-\gamma) \breve{P}}{\breve{\sigma}^2}\right),
\label{EE4}
\end{equation} 
respectively, where $\breve{R}=\breve{R}_1+\breve{R}_2$ (see Fig. \ref{nested-awgn}). 

  \begin{figure}[htp]
   \centering
   \includegraphics[width=0.33\textwidth]{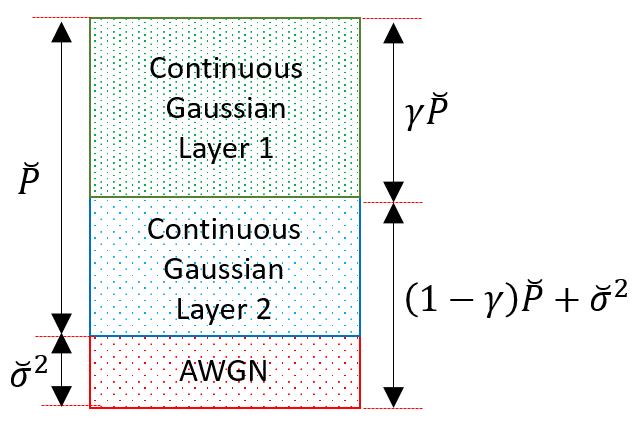}
   \caption{Superposition Coding and Successive Decoding (SC-SD) in a point-to-point AWGN channel.}
   \label{nested-awgn2}
 \end{figure}
   
The technique of SC-SD can be applied in a recursive manner and achieve the capacity of a point-to-point AWGN channel using any desired number of separate (continuous) Gaussian layers. In the limit of using layers with infinitesimal power of $\delta \breve{P}$, there will be $L=\breve{P}/\delta$ layers, where the rate of the $l$'th layer, $l=1,\cdots, L$ (indexed from top to bottom according to the order of decoding) is equal to 
\begin{equation}
\breve{R}_l=0.5\log_2\left (1+\frac{\delta \breve{P}}{(L-l)\delta \breve{P}+\breve{\sigma}^2} \right).
\label{cont}
\end{equation}
In summary, if the rates of a continuum of infinitesimal Gaussian layers, indexed from top to bottom according to the order of decoding by $l=1,\cdots, L$, satisfies expression \ref{cont}, we refer to the layer as being {\em continuous}.   Note that two successive continuous layers of powers $\gamma \breve{P}$ (with a rate given in Eq.~\ref{EE3}) and $(1-\gamma)\breve{P}$ (with a rate given in Eq.~\ref{EE4}), $0\leq \gamma\leq 1$, can be merged into a single continuous layer of power $\breve{P}$ (with a rate given in Eq.~\ref{EE2}).

\end{appendix}

\end{document}